\newtheorem{theorem}{Theorem}[section]
\newtheorem{lemma}[theorem]{Lemma}
\newtheorem{condition}[theorem]{Condition}
\newtheorem{algorithm}[theorem]{Algorithm}
\newtheorem*{theorem*}{Theorem}
\newtheorem*{problem*}{Problem}
\newcommand\event[1]{\mathop{\mathcal{X}\left(#1\right)}}
\newcommand\E[1]{\mathop{\mathbb{E}\left[#1\right]}}
\newcommand\Ex[2]{\mathop{\underset{#1}{\mathbb{E}}\left[#2\right]}}
\newcommand\Pro[1]{\mathop{\mathbb{P}\left[#1\right]}}
\crefname{algocfline}{Algorithm}{Algorithms}
\Crefname{algocfline}{Algorithm}{Algorithms}
\crefname{condition}{Condition}{Conditions}
\Crefname{condition}{Condition}{Conditions}
\title{MNL-Prophet: Sequential Assortment Selection under Uncertainty}
\author{Vineet Goyal\\ 
Columbia University\\
\texttt{vgoyal@ieor.columbia.edu} \and
Salal Humair \\ Amazon Science \\ \texttt{salal@amazon.com} \and 
Orestis Papadigenopoulos\\ 
Columbia University\\ 
\texttt{opapadig@columbia.edu} \and
Assaf Zeevi\\ 
Columbia University\\
\texttt{assaf@gsb.columbia.edu}
}
\date{}
\begin{document}

\maketitle

\begin{abstract}
Due to numerous applications in retail and (online) advertising the problem of assortment selection has been widely studied under many combinations of discrete choice models and feasibility constraints. In many situations, however, an assortment of products has to be constructed gradually and without accurate knowledge of all possible alternatives; in such cases, existing offline approaches become inapplicable.
We consider a stochastic variant of the assortment selection problem, where the parameters that determine the revenue and (relative) demand of each item are jointly drawn from some known item-specific distribution. The items are observed sequentially in an arbitrary and unknown order; upon observing the realized parameters of each item, the decision-maker decides irrevocably whether to include it in the constructed assortment, or forfeit it forever. The objective is to maximize the expected total revenue of the constructed assortment, relative to that of an offline algorithm which foresees all the parameter realizations and computes the optimal assortment. We provide simple threshold-based online policies for the unconstrained and cardinality-constrained versions of the problem under a natural class of substitutable choice models; as we show, our policies are (worst-case) optimal under the celebrated Multinomial Logit choice model. We extend our results to the case of knapsack constraints and discuss interesting connections to the Prophet Inequality problem, which is already subsumed by our setting. 

\end{abstract}

\section{Introduction}

The study of discrete choice models has deep roots in the theory and practice of economics and operations research -- surfacing from the need to capture, comprehend, and forecast the substitution behavior of humans; in other words, to explain the underlying process involved when people choose one among a finite set of alternatives. 
Due to its ubiquitous applications in marketing, transportation, and policy-making, discrete choice modeling has been a topic of active research for more than half a century -- the spark of numerous lines of work including axiomatic justification, empirical validation, estimation, and prediction, among others. 

In the area of revenue management, a core application of discrete choice modeling is that of assortment selection: the problem of deciding on a collection (or assortment) of products to be offered to a group of customers in order to maximize the expected revenue of the seller. More formally, let ${N}$ be a set of possible products (or items), where each $i \in {N}$ is associated with a specific revenue $r_i \in [0, \infty)$ that is collected by the seller in case of purchase. Let $\phi(i,S)$ denote the probability that a customer chooses product $i \in S$, when offered an assortment $S$ of products; naturally, $\phi(i,S)$ is identically zero for any $i \notin S$. For a given family of feasible assortments $\mathcal{F} \subseteq 2^{[n]}$, the assortment selection problem can be described as 
$$
\max_{S \in \mathcal{F}} \sum_{i \in S} \phi(i,S) \cdot r_i.
$$
In the above generic formulation, the role of discrete choice modeling is to specify the selection probabilities $\phi(i,S)$ and to capture, in that way, the consumers' demand for a particular product in the assortment. Notice, of course, that the problem may easily become intractable for most combinations of combinatorial constraints and choice probabilities, even when both have succinct representations. Nevertheless, the immense need for capturing consumers' demand in retail, travel, and (online) advertising industries motivates the research on various special instances -- ones which strike the right balance between modeling capability and tractability.

One of the most studied and well-understood threads of assortment selection is when the choice probabilities follow the celebrated Multinomial Logit (MNL) model. In the MNL model, each item $i \in N$ is associated with a non-negative demand parameter $v_i \in [0, \infty)$, called attraction\footnote{We remark that, although the MNL model can be axiomatically derived through utility theory, the attraction should not be confused with the buyer's utility for the same item; see \Cref{sec:overview} for more details.}. For any assortment $S \subseteq N$, the probability of choosing $i \in S$ is then defined as $\phi(i,S) = \frac{v_i}{v_0 + \sum_{i \in S} v_i}$, where $v_0 \in [0, \infty)$ denotes the attraction of the outside (or no-purchase) option, that is, the event where the customer does not select any of the given alternatives. The problem of assortment selection under the MNL model has been widely studied and algorithms have been derived for several combinatorial constraints of interest. This line of work includes an optimal polynomial-time algorithm for the unconstrained setting -- due to the seminal work of Talluri and van Ryzin \cite{talluri2004revenue} -- as well as optimal or near-optimal algorithms for various feasibility classes including cardinality \cite{rusmevichientong2010dynamic}, knapsack \cite{desir2022capacitated}, and totally unimodular constraints (subsuming matroids) \cite{rusmevichientong2009ptas, davis2013assortment, avadhanula2016tightness}. The problem has also been studied under generalizations of the MNL model including Nested Logit (NL) \cite{davis2014assortment,gallego2014constrained}, Markov Chain (MC) \cite{blanchet2016markov}, Generalized Attraction (GAM) \cite{gallego2015general}, and mixtures of MNL (MMNL) \cite{bront2009column,feldman2015bounding,rusmevichientong2014assortment}, to name a few. As it is natural, any reasonable (near)-optimal algorithm for the assortment selection problem relies on the knowledge of the underlying model; this includes the items' revenues and the (succinct) description of all choice probabilities.

\paragraph{Sequential assortment selection under uncertainty.} 

In many practical applications of assortment selection complete prior knowledge of the model cannot be assumed; these correspond to situations where the decision-maker (e.g., retail or advertising company) observes the characteristics of each product in a sequential manner, having little or no control over the order these are inspected. Very often, the decision of including or not a specific product in the assortment has to be made within a short period of time. In such cases, decisions can only be based on predictions of the characteristics of any potential future alternative \cite{kok2007demand} (e.g., a better wholesale price for the same product); in this regime, any known ``offline'' algorithm for assortment selection becomes inapplicable.

As an example, consider an online retail store which -- prior to the beginning of a new season (year or quarter) -- selects a collection of items to display to a particular group of customers; this can be an assortment of cell phones, clothing, household appliances etc. The retail store explores the potential products via different wholesalers, who contact and inform the store about their new arrivals in various ways (e.g., expos, demos, or brochures). After observing the features and wholesale price of each product, the retailer has to decide on whether or not merchandizing it would benefit (or damage) its overall revenue. It is often the case, however, that such a decision must be made within a short time interval; during this interval a better deal on the wholesale price could potentially be achieved (via a wholesale contract) or the trading right could be obtained by a competitor, in the case of exclusive products. Critically, the decision should be made without having knowledge of other -- potentially more profitable -- items that might be available in the near future. Similar examples can be found in online travel agencies, (click-based) advertising companies, and more.


Motivated by scenarios as the above, we formulate the {\em sequential assortment selection} problem, a variant where the decision-maker only assumes distributional knowledge of the parameters of the underlying model. Specifically, we assume that the revenue and the relative demand (see \Cref{sec:preliminaries}) of each product are drawn independently from some known joint distribution, and they are revealed to the decision-maker in an arbitrary and unknown order. Upon observing the realized parameters of each item, the decision-maker decides irrevocably whether or not to include the item in the constructed assortment. The objective becomes to maximize the expected revenue of the resulting assortment at the end of the process (over the randomness of both realized parameters and consumer's choice). For the particular case of MNL demand, for instance, we assume that the revenue $r_i$ and attraction $v_i$ of each item $i \in N$ are drawn from a known distribution $\mathcal{D}_i$ (jointly yet independently of other items).

\paragraph{The Prophet Inequality benchmark.}
In a simplistic setting where the arrival order of the items is known to the decision-maker, the optimal online policy for the sequential assortment selection problem could potentially be computed via backwards induction and dynamic programming. This approach would also require the parameter distributions to have a discrete and compact form and can be sensitive to minor changes in the input (e.g, swapping the order of two items). In our setting, however, the critical assumption of unknown arrival order makes the computation of such a policy intractable. Instead, we choose the Prophet Inequality benchmark \cite{KS77,KS78}: to compare the expected revenue of a policy with that of an omniscient and computationally powerful prophet, namely, an offline algorithm who foresees the realizations of all items and computes the optimal assortment. This is the strongest benchmark we can use for our stochastic problem, yet it enables us to obtain (worst-case) optimal guarantees. In addition, it allows us to immediately translate any (positive or negative) results into conclusions regarding the value of information -- the overall loss in expected revenue due to uncertainty. Note that, even if one could compute (or closely approximate) the optimal online policy, that would not provide any such useful insights.

\subsection{Summary of Results and Contribution}

The conceptual and technical contribution of this work to the existing literature extends in two dimensions. In the area of revenue management, we provide the first worst-case optimal policies for sequential assortment selection under uncertainty. As our results reveal -- assuming distributional knowledge of the model -- not only uncertainty can be manageable, but also has limited effect for several parametric discrete choice models (including the well-celebrated MNL demand) and standard feasibility constraints. The second dimension of our contribution is in the area of prophet inequalities, where the assortment objective appears to be an interesting novel departure from the existing work, thus yielding new challenges and insights. 

\paragraph{Outline of results.} 
In \Cref{sec:unconstrained}, we consider the unconstrained setting of the sequential assortment selection problem under a class of substitutable\footnote{Informally, in a substitutable choice model the addition of an item to an assortment can only decrease the market share (i.e., choice probability) of the already existing alternatives.} choice models (including the MNL) which satisfy two natural conditions. We design a $(1+ \gamma)$-competitive 
policy, where $\gamma \in [0,1]$ is the expected probability of purchasing (any item) in the optimal assortment; in other words, our policy collects -- in expectation -- at least a $\frac{1}{1+\gamma}$-fraction of the expected optimal revenue. We remark that $\gamma$ is an intrinsic characteristic of each instance and settings where $\gamma < 1$ (e.g., a fraction of customers never purchase any product) are distinctive for the applications of assortment selection. Our policy is a particularly simple threshold-based rule: it collects all items with realized revenue over a fixed threshold. Interestingly, this makes it oblivious to any realized parameters related to the demand (yet their distributions are used for computing the threshold).  

In \Cref{sec:cardinality}, we turn our attention to the cardinality-constrained setting, where any feasible assortment must contain at most $k$ items, for some given integer $k$. For the case of MNL demand, we design a $2$-competitive policy, which simply accepts the first $k$ items (if any) that satisfy a specific threshold condition. As opposed to the unconstrained setting, here the threshold is compared with a complex function of the realized revenue and demand parameter(s) of each item. We extend the above result to a wider class of choice models (satisfying certain conditions) and prove that a slight variation of our policy is $3$-competitive in that case. We remark that our policies (and guarantees) are robust against an arrival order that is constructed by an adversary who can foresee all the realizations and adapt to a policy's decisions. 

In \Cref{sec:lowerbounds}, we provide matching lower bounds on the competitive guarantees of our unconstrained and cardinality-constraint policies in the case of MNL demand, thus proving that our results there are (worst-case) optimal. In particular, we show that the unconstrained $(1+\gamma)$-competitive policy is optimal for any $\gamma \in [0,1]$, and that for $\gamma = 1$, the problem generalizes the original Prophet Inequality setting \cite{KS77,KS78}, where the decision-maker chooses one among $n$ random rewards. This implies that the unconstrained version of our problem indeed becomes easier when consumers are more reluctant towards purchasing.

In \Cref{sec:knapsack}, we extend our results to the case of knapsack constraints and MNL demand: each item has a random size (possibly correlated with its revenue and demand) and the total size of an assortment cannot exceed a given budget. For this setting, we provide a $\frac{2-\beta}{1-\beta}$-competitive policy, where $\beta \in (0,1]$ is an upper bound on the ratio between the realized size of any item and the budget. For instances where $\beta$ is large (and, hence, the above guarantee becomes meaningless), we provide a $5$-competitive randomized policy by leveraging the fact that the assortment objective function under any substitutable choice model is subadditive. As in the cardinality-constrainted case, our policies can be applied to a wider class of demand models (with slightly worse guarantees).

Finally, in \Cref{sec:applications}, we provide examples of parametric discrete choice models, which satisfy our sufficient conditions and, hence, to which our policies apply directly. These include a sequential adaptation of the GA model \cite{gallego2015general} and a class of choice models which can be derived through the random utility model (RUM) \cite{mcfadden1973conditional}. Further, we discuss implementation aspects of our policies and show that the provided guarantees are robust to small perturbations of our thresholds; this allows us to extend our results in the regime where the decision-maker has limited information on the priors through samples or when the corresponding offline problem is computationally hard.

\paragraph{Relation to the original Prophet Inequality.} 
The non-monotone and non-linear (and non-additively-separable) nature of the assortment objective, combined with the fact that we allow the random parameters of each item to be arbitrarily correlated, already sets us apart from the majority of existing work on prophet inequalities. However, a reader familiar with the original Prophet Inequality literature \cite{KS77,KS78,S84} may have already noticed an analogy between our proposed threshold-based policy for the unconstrained case and the (most-known) optimal policy for the latter; as our results highlight, this is not a coincidence.


As an artifact of our work, we discover several parallels and discrepancies between our setting and the original Prophet Inequality. 
To this end, our analysis and results in \Cref{sec:unconstrained,sec:lowerbounds} are revealing: for a class of discrete choice models (including MNL), the unconstrained sequential assortment selection problem can be thought of as a ``smoothed'' version of the original Prophet Inequality. In fact, our proposed policy naturally interpolates between the latter (for instances where $\gamma = 1$) and the trivial setting where the consumer never purchases any item (when $\gamma = 0$), in which case our policy becomes $1$-competitive. Indeed, the optimal (offline) solution in the second case is to add every product in the assortment and, hence, prior knowledge of the revenue and demand parameters is not particularly useful. Interestingly, one of the sufficient conditions we identify is that the underlying choice model satisfies some sort of ``negative-correlation'' among different preferences. This seems to be very relevant given that negative dependence among the rewards is the only known type of correlation under which the original Prophet Inequality result remains robust \cite{rinott1992optimal,immorlica2020prophet}.

The cardinality-constrained version of sequential assortment selection, on the other hand, presents a fundamental difference compared to the (choose-$k$) Prophet Inequality setting \cite{hajiaghayi2007automated}. In our case, the optimal competitive guarantee stays unaffected by the number $k$ of items that can be collected, as opposed to the original setting, where gradually improved guarantees can be achieved as $k$ increases \cite{alaei2014bayesian,jiang2022tight}. This implies that the hardness of our problem is mainly hidden in the non-monotonicity of our objective, rather than in the cardinality constraints.

We elaborate more on the above connections in the following section. 

\subsection{Overview of Related Work}
\label{sec:overview}
In this section, we discuss prior work in the areas of discrete choice modeling, (dynamic) assortment optimization, and prophet inequalities.

\paragraph{Discrete choice modeling.} 
The Multinomial Logit (MNL) choice model was first introduced independently by Luce \cite{luce1959individual} and Plackett \cite{plackett1975analysis}, based on natural axioms on the rationality of individual choice behavior. The model is also known as the basic attraction model (BAM), while its most common name owes to its derivation through the random utility model (RUM), introduced by McFadden \cite{mcfadden1973conditional}. In the RUM, a random utility for each alternative is drawn from a distribution, and the consumer selects the option with the highest realization. Specifically, MNL can be derived from RUM by assuming that the utility of each alternative (including the one of the outside option) is drawn from an independent Gumbel distribution. The MNL model satisfies -- by construction -- the independence of irrelevant alternatives (IIA) axiom, which suggests that the relative preference between two alternatives stays unaffected by the addition of a third (an ``irrelevant''). Despite the fact that the validity of the IIA axiom has been criticized and disproved in many settings (see red-bus/blue-bus paradox) \cite{debreu1960review,tversky1972elimination,ben1999discrete}, the MNL model remains the gold standard for choice modeling due to its simplicity and tractability, while extensions of the model have been proposed that eschew the IIA assumption \cite{simonson1989choice,mcfadden2000mixed,blanchet2016markov}. We address the reader to \cite{train2009discrete,ben1985discrete,mcfadden1977modelling} for a survey of the most commonly used discrete choice models.

\paragraph{Offline assortment selection under MNL demand.} In their seminal work, Talluri and van Ryzin \cite{talluri2004revenue} identify sufficient conditions under which any discrete choice model satisfies the so-called ``nesting-by-fare-order'' property. Specifically, this property suggests that the optimal unconstrained solution contains exactly the subset of items with revenue above a given threshold. As they authors prove, the MNL choice model satisfies this property and, hence, an optimal unconstrained solution in this case can be computed by simply trying all the subsets of this form. Interestingly, for the particular case of MNL, a choice of threshold that is guaranteed to recover the optimal assortment is the value of this assortment itself. Rusmevichientong et al. \cite{rusmevichientong2010dynamic} consider the problem of assortment selection under MNL demand in the case of cardinality constraints. By analyzing the structural properties of optimal solutions and using an interesting enumeration technique, they provide a polynomial-time algorithm for the case of cardinality constraints. Finally, Désir et al. \cite{desir2022capacitated} develop a fully polynomial-time approximation scheme (FPTAS) for knapsack-constrained assortment selection under many common choice models, including the MNL.

\paragraph{Dynamic assortment optimization and planning.} As we have already discussed, a number of studies focuses on the offline (single-leg) assortment optimization problem under various combinations of choice models and feasibility constraints. Our work is motivated by the industry practice that -- in the presence of uncertainty -- assortment selection is based on predictions of product characteristics (revenue and demand) using historical data \cite{kok2007demand,kok2015assortment}. This practice also motivates the study of the assortment optimization problem in environments where the seller repeatedly decides on a revenue-maximizing assortment of goods, based on an initially unknown (or misspecified) demand model \cite{caro2007dynamic,rusmevichientong2010dynamic,saure2013optimal,agrawal2019mnl,agrawal2017thompson}. In this line of work, the decision-maker has a dual objective: to explore the model parameters by observing the (random) consumer's selection for each offered assortment, while at the same time to maximize the total (expected) revenue collected. Following the standard regime in online learning and multi-armed bandits \cite{LR85}, in the above line of work the (additive) loss in total revenue due to uncertainty is captured by the notion of regret. 

Focusing on other sequential variants of the problem, a great deal of work considers the case of dynamic (online) assortment planning and inventory management. This refers to the setting where a seller faces an arriving stream of different customers and has to decide on an assortment of products to offer at each time, taking into account the changing stock levels and availability of products in the inventory. The customers are partitioned into types, each associated with a different demand model, and the type of each arriving customer is revealed to the decision-maker in an online manner. The main challenge here is to maximize the total (expected) revenue collected over a time horizon, without having knowledge of future customer arrivals. This line of research consists of two main threads related to the way the arrival sequence of customers is constructed: the cases of adversarial \cite{talluri2004revenue,goyal2020asymptotically,golrezaei2014real,feng2022near} and stochastic \cite{chan2009stochastic,goyal2016near} arrivals. 

In a recent work, Gallego and Berbeglia \cite{gallego2021bounds} employ prophet inequalities to study the limits of assortment personalization. In particular, they study and bound the loss due to offering a static assortment to all customers, relative to that of a clairvoyant firm which only offers the product of highest revenue among the ones that each customer is willing to purchase. We remark that our setting is very different compared to \cite{gallego2021bounds}, since we deal with a different level of randomness. More specifically, the uncertainty in our setting is due to the randomness in the model itself (including the items' revenues) and not in the consumer's final choice, which in our case is already encapsulated in the assortment objective.

\paragraph{Prophet inequalities.} 
Krengel, Sucheston, and Garling \cite{KS77,KS78} first consider the original Prophet Inequality setting, where the decision-maker observes a sequence of non-negative stochastic rewards $w_1, \ldots, w_n$. Upon observing the realization of each $w_i$, the decision-maker chooses between collecting the reward and halt, or rejecting the reward and moving to the next. For the above setting, the authors prove one of the most prominent results in optimal-stopping: assuming prior knowledge of the reward distributions, there exists a stopping-time $s$ guaranteeing that $\E{w_s} \geq \nicefrac{1}{2} \E{w_{\max}}$, where $w_{\max} = \max\{w_1, w_2, \ldots, w_n\}$ is the maximum reward. This is known to be best achievable worst-case guarantee for the above setting. Samuel-Cahn \cite{S84} later shows that the same guarantee can be achieved by a surprisingly simple policy, which first computes a threshold $\tau$ as a function of the distributions and, then, accepts the first reward (if any) that surpasses this threshold. In particular, she shows that a worst-case optimal policy can be achieved by choosing threshold $\tau = \text{median}(w_{\max})$. A few decades later, Kleinberg and Weinberg \cite{KW12} prove that the optimal guarantee can be also achieved by using (a different) threshold $\tau = {\E{w_{\max}}}/{2}$. We address the reader to \cite{CFHOV19,L17,HK92} for a survey of exiting work on prophet inequalities.

The sequential assortment selection problem is a significant departure from the original Prophet Inequality (described above) and involves several elements which could potentially preclude the design of near-optimal policies. To begin, the assortment objective does not exhibit -- at a first glance -- any particularly strong high-level properties; it is a non-linear and non-monotone set function, which becomes subadditive under substitutable demand models. Further, the addition of an item to an assortment affects the ``contributions'' (meaning the terms $\phi(i,S) \cdot r_i$) of the rest of the included items ``numerically'', through the choice probability $\phi$. Note that this holds even for the unconstrained setting of the problem and is orthogonal to presence of feasibility constraints. The above elements already separate our setting from most studied extensions of the Prophet Inequality, where the objective is either linear or additively-separable, as for example feasibility-constrained settings \cite{KW12,alaei2014bayesian,chawla2010multi,dutting2015polymatroid,rubinstein2016beyond,hajiaghayi2007automated} or combinatorial auctions \cite{correa2023constant,dutting2020prophet,feldman2014combinatorial}. 

In particular, Rubinstein and Singla \cite{rubinstein2017combinatorial} introduce the class of ``combinatorial'' prophet inequalities, where they abstract the original setting to general set functions; our maximization objective can be thought of as an instance of this class (modulo some technicalities related to the correlations between random variables). Existing works \cite{rubinstein2017combinatorial,chekuri2021submodular} develop ``combinatorial'' prophet inequalities for the cases of (non-monotone) submodular and monotone subadditive set functions, mostly relying on Online Contention Resolution Schemes (OCRS) \cite{feldman2016online}. Unfortunately, our objective does not strictly fall into any of these classes (see \cite{udwani21} for details on the discrete ``smoothness'' of assortment objectives) -- yet even if it did, the existing guarantees are either large constants or logarithmic, which is natural given the generality of the setting. More importantly, despite the elegance of the framework developed in \cite{rubinstein2017combinatorial,chekuri2021submodular}, the proposed methods are -- either computationally or literally -- hard to implement.

A second distinguishing element in our model is the presence of arbitrary correlations between the revenue and demand parameters of each item -- especially when combined with the fact that the contribution of each item to the objective depends numerically on that of different items (via the choice probabilities). It is known that that the existence of correlations among rewards can be detrimental for the Prophet Inequality problem; in fact it is impossible to achieve any non-trivial guarantees, if one allows arbitrary correlations across the rewards \cite{HK92}. In particular, Immorlica et al. \cite{immorlica2020prophet} fully characterize a class of ``linear'' correlation structures and show that the problem indeed becomes harder as the ``depth'' of their structures increases. To the best of our knowledge, the only sort of correlation under which the original Prophet Inequality result holds is that of negative-association \cite{rinott1992optimal,immorlica2020prophet}.

\section{Preliminaries}
\label{sec:preliminaries}
\subsection{Problem Definition}
Let $N = [n]$ be a set of $n$ items, where each $i \in N$ is associated with a revenue $r_i \in [0, \infty)$. Under any discrete choice model, the expected revenue of the seller for any offered assortment $S \subseteq N$ is given by 
\begin{align*}
    f(S) = \sum_{i \in S} \phi(i,S) \cdot r_i,
\end{align*}
with $\phi(i,S)$ being the probability that a customer purchases item $i \in S$ among the offered items. Similarly, $\phi(0, S)$ denotes the probability of the outside option, namely, the event where the customer does not select any of the items offered in $S$. By overloading the notation, we denote by $\phi(T,S)$ the probability that the customer purchases any item in $T \subseteq S$, when offered an assortment $S$. Notice that for any $S$, it holds $\sum_{i \in S} \phi(i,S) = 1 - \phi(0,S)$, given that the customer purchases at most one item. For any assortment $S$, let us denote by $\psi(S) = \phi(S,S)$ the probability that the customer purchases any product from $S$. For the rest of this text, we refer to the function $f : 2^N \rightarrow [0, \infty)$ as the ``total revenue'' of a solution, to avoid any confusion related to the fact that $f(\cdot)$ is itself an expectation (over customers' choice).

\paragraph{Parametric demand models.} We consider the class of parametric discrete choice (or, simply, ``demand'') models, namely, these where the choice probabilities can be succinctly described as follows: each item $i \in N$ is associated with a demand parameter $d_i$, in addition to its revenue. Depending on the demand model, these parameters can take the form of scalars, vectors, sets, or even distributions. In a parametric demand model, for any assortment $S$ and item $i \in S$, there exists a known function $\phi_{i,S}$ such that $\phi(i,S) = \phi_{i,S}(d_1, \ldots, d_n)$, that is, any choice probability can be computed as a function of the demand parameters of all items. 

\paragraph{Sequential assortment selection.} In our sequential setting, the revenue and demand parameter of every item $i \in N$ are jointly drawn from some distribution $\mathcal{D}_i$, that is $(r_i, d_i) \sim \mathcal{D}_i$, independently of other items; with a slight abuse of terminology, we refer to the pair $(r_i, d_i)$ as the ``realization'' of $i$. We remark that although we require independence across the parameters of all items, we allow the random variables $r_i$ and $d_i$ for each item $i \in N$ to be arbitrarily correlated. The decision-maker observes the (parameter) realizations of the items in an arbitrary and unknown order and -- upon observing each item -- decides irrevocably whether to include it in the constructed assortment, or skip it and move to the next. 

We focus on the class of online policies which -- given prior knowledge of the distribution $\mathcal{D}_i$ for every $i \in N$ -- adaptively construct a feasible assortment $A \in \mathcal{F}$, where $\mathcal{F} \subseteq 2^N$ is a given family of feasible assortments. The objective is to maximize the expected total revenue of the collected assortment, over the randomness of the realized parameters, relative to that of an offline algorithm, which foresees the realizations of all items and chooses the optimal assortment. Formally, we seek to find the minimum possible $\rho \in [1, \infty)$, under which an online policy satisfies a {\em prophet inequality} of the form
$$
\Ex{}{f(A)} \geq \frac{1}{\rho} \cdot \Ex{}{\max_{S \in \mathcal{F}} f(S)}.
$$
Any policy satisfying the above inequality is called $\rho$-competitive, with $\rho$ being its competitive guarantee. 

\paragraph{Adversarial arrival.} 
The online policies we propose in this work maintain their competitive guarantees even against an arrival order of the items that is constructed by an almighty adversary: an omniscient and fully-adaptive adversary, who observes the realizations of all items a priori and can adapt to a policy's decisions. 
We remark that this is the strongest possible adversary one can expect for this class of problems.

\paragraph{Notation.} 
For any set $S \subseteq N$, we denote by $\mathcal{D}[S] = \bigtimes_{i \in S} \mathcal{D}_i$ the product of the parameter distributions (and the generated randomness) of the items contained in $S$. For brevity, we define $\mathcal{D} = \mathcal{D}[N]$ as the product of the distributions of all items. We denote by $\event{\mathcal{E}}\in \{0,1\}$ the indicator function which yields $1$ if and only if the event $\mathcal{E}$ holds.
We define $S^* = \arg\max_{S \in \mathcal{F}} f(S)$ to be the optimal feasible subset for a given instance (that is, a parametric choice model and realized parameters $(r_i,d_i)$ for all $i \in N$). For ease of notation, we suppress any dependence of $S^*$ on $f$ and $\mathcal{F}$, since it will always be clear from the context. We use the standard notation $[n] = \{1,2,\ldots,n\}$ for any integer $n$ and $(x)^+ = \max(x , 0)$ to denote the maximum between a real number and zero. Finally, for any set $S \subseteq [n]$ and vector $v \in [0,\infty)^{n}$, we denote $v(S) = \sum_{i \in S} v_i$.

\subsection{Multinomial Logit Choice Model and Sufficient Conditions}

The results of this work apply to any parametric demand model that satisfies certain sufficient conditions, with the most important example of this class being the well-known MNL choice model. In this paragraph, we formally define the latter and use this as a reference to describe our conditions.

\paragraph{Multinomial Logit choice model.} In the MNL choice model, each item $i \in N$ is associated with an attraction parameter $v_i \in [0, \infty)$, with $v_0 \in [0, \infty)$ being the attraction of the outside option. The probability that item $i$ is selected from assortment $S \subseteq N$ is then defined as $\phi(i,S) = \frac{v_i}{v_0 + v(S)}$, when $i \in S$, and $0$, otherwise. Throughout this work, we assume that the attraction of the outside option is deterministic\footnote{Our results can be extended to the case where the outside attraction is random, yet known to the decision-maker a priori. In this case, our guarantees hold ``pointwise'' (for any $v_{0}$), by applying our policies {\em conditioned} on the realization of $v_0$.}. Clearly, MNL is a parametric choice model with the demand parameter of each item coinciding with its attraction, namely, $d_i \equiv v_i$ for all $i \in N$.

\paragraph{Sufficient Conditions.} Our results apply to substitutable choice models, that is, models where the addition of an item $i \in N \setminus S$ to an assortment $S$ does not increase the probability of choosing any of the already existing items (including that of the outside option). More formally, for any $S \subseteq T \subseteq N$ and $i \in S \cup \{0\}$, it holds $\phi(i,S) \geq \phi(i,T)$. 

The following condition characterizes the ability of the decision-maker to evaluate any choice probability $\phi(i,S)$ by using only the (observed) demand parameters of the items in $S$, together with any deterministic information. 

\begin{condition}[Independent Evaluation] \label{cond:1}
For any assortment $S \subseteq N$ and item $i \in S \cup \{0\}$, the probability $\phi(i,S)$ only depends on the demand parameters of the items in $S$.
\end{condition}

The above condition excludes a number of parametric discrete choice models, yet it seems critical for our sequential setting. Informally, it suggests that the decision-maker should be able to evaluate (without any uncertainty) the total revenue of the already collected items at any point in time. Notice that, using the independence of the demand parameters across different items, the above condition implies that for any $S,T \subseteq N$ with $S \cap T = \emptyset$ and items $i \in S \cup \{0\}, i' \in T$, it holds $\E{\phi(i,S) \cdot \phi(i',T)} = \E{\phi(i,S)} \cdot \E{\phi(i',T)}$. 

The second condition associates the probability of choosing item $i$ in an assortment $S$ with the probability that $i$ would be purchased if offered alone. Informally, it implies the existence of some sort of negative dependence among the demand of different preferences; the choice of its name is made for drawing a parallel with the Prophet Inequality setting under negatively-correlated rewards \cite{rinott1992optimal}. 

\begin{condition}[Negative Correlation] \label{cond:2}
For any assortment $S \subseteq N$ and item $i \in S$, it holds 
$$
\phi(i,S) \geq \phi(i,\{i\}) \cdot \phi(0, S \setminus \{i\}).
$$
\end{condition}

The validity of \Cref{cond:2} is crucial, since -- combined with \Cref{cond:1} -- it suffices for the design of optimal policies for the unconstrained setting under various demand models (see \Cref{sec:applications}).

The third and last condition describes an expansion property of a demand model and comes in two versions: ``strong'' and ``weak''. The strong version suggests that the ratio between the market share of the outside option and that of any other item in an assortment is unaffected by the rest of the included items; this enables the design of an optimal cardinality-constrained policy for the case of MNL demand. 

\begin{condition}[Independence of Irrelevant Alternatives] \label{cond:3}
For any set $S \subseteq N$ and $i \in S$, we have
$$
\frac{\phi(0,S)}{\phi(i,S)} = \frac{\phi(0,\{i\})}{\phi(i,\{i\})}.
$$
Further, we say that demand model satisfies the ``{\em weak version}'' of this condition, if for any $S \subseteq N$ and $i \in S$ it holds $\frac{\phi(0,S)}{\phi(i,S)} \leq \frac{\phi(0,\{i\})}{\phi(i,\{i\})}
$. 
\end{condition}
We remark that the strong version of the above condition is implied by the axiom of Independence of Irrelevant Alternatives (IIA), and MNL is essentially the only non-trivial choice model satisfying the latter. Its weak version, on the other hand, is satisfied by several demand models (see \Cref{sec:applications}) and -- when combined with \Cref{cond:1,cond:2} -- it allows the design of near-optimal policies for the case of cardinality and knapsack constraints. 

We can verify that MNL model satisfies all the aforementioned conditions (see \cref{app:demand}):

\begin{restatable}{proposition}{restateMNLfacts} 
The MNL is a substitutable discrete choice model which satisfies \Cref{cond:1,cond:2} and the strong version of \Cref{cond:3}.
\end{restatable}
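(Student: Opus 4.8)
The plan is to verify each of the four claimed properties directly from the closed form $\phi(i,S) = v_i/(v_0 + v(S))$ for $i \in S$ and $\phi(0,S) = v_0/(v_0 + v(S))$, which turns every item on the list into a one-line algebraic check. I would do substitutability and \Cref{cond:1} first since they are essentially syntactic, then \Cref{cond:2}, which is the only part requiring an actual (short) inequality argument, and finally the strong form of \Cref{cond:3}, which follows from cancellation of a common denominator.

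Concretely: for \emph{substitutability}, whenever $S \subseteq T$ and $i \in S \cup \{0\}$ the numerator of $\phi(i,\cdot)$ is the fixed nonnegative quantity $v_i$ (resp.\ $v_0$) while the denominator $v_0 + v(\cdot)$ is nondecreasing in the assortment because $v(S) \le v(T)$; hence $\phi(i,S) \ge \phi(i,T)$. \Cref{cond:1} is immediate, since $\phi(i,S)$ is manifestly a function of the deterministic $v_0$ and of $\{v_j : j \in S\}$ alone. For \Cref{cond:2}, I would fix $S$ and $i \in S$, dispose of the degenerate case $v_i = 0$ (both sides vanish), and otherwise write $a = v_0$, $b = v_i$, $c = v(S \setminus \{i\}) \ge 0$; clearing the positive denominators and cancelling $b$ reduces the desired bound $\phi(i,S) \ge \phi(i,\{i\})\,\phi(0,S\setminus\{i\})$ to $(a+b)(a+c) \ge a(a+b+c)$, i.e.\ to $bc \ge 0$, which holds (with equality exactly when $v_i = 0$ or $v(S\setminus\{i\}) = 0$). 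For the strong version of \Cref{cond:3}, for $i \in S$ with $v_i > 0$ the factor $(v_0 + v(S))^{-1}$ cancels in $\phi(0,S)/\phi(i,S)$, leaving $v_0/v_i$, which does not depend on $S$ and in particular equals $\phi(0,\{i\})/\phi(i,\{i\}) = v_0/v_i$.

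I do not expect a genuine obstacle here; this proposition simply records that MNL is the canonical instance of the abstract framework. The only things needing a bit of care are the boundary cases: items with $v_i = 0$ (where $\phi(i,S) = 0$, so \Cref{cond:2} is trivial and the ratio in \Cref{cond:3} is read as $+\infty$ on both sides) and, if one wishes to allow it, $v_0 = 0$ (which changes nothing in the arguments above). The mildly nontrivial step is the algebraic reduction of \Cref{cond:2} to $bc \ge 0$, and it is worth flagging that this is precisely the point at which the negative-correlation flavor of MNL manifests itself quantitatively.
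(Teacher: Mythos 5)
Your proof is correct and takes essentially the same approach as the paper: a direct algebraic verification from the closed form $\phi(i,S)=v_i/(v_0+v(S))$, with \Cref{cond:2} the only step needing an inequality (the paper factors $\phi(i,S)$ as $\frac{v_i}{v_0+v_i}\cdot\frac{v_0+v_i}{v_0+v_i+v(S\setminus\{i\})}$ and invokes monotonicity of $x\mapsto x/(c+x)$, which is algebraically equivalent to your cross-multiplied reduction to $bc\ge 0$). The degenerate cases $v_i=0$ or $v_0=0$ are dismissed in the paper exactly as you do.
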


\section{Optimal Fixed-Threshold Policy for the Unconstrained Setting} \label{sec:unconstrained}

In this section, we study the unconstrained setting of sequential assortment selection, where the decision-maker can collect any combination of arriving items. In particular, for any substitutable demand model satisfying \Cref{cond:1,cond:2} (that includes the MNL), we provide a $(1+\gamma)$-competitive policy, where $\gamma = \E{\psi(S^*)} = \E{\phi(S^*,S^*)}$ is the expected probability of purchasing any item in the optimal assortment. As we show in \Cref{sec:lowerbounds}, for this class of discrete choice models, the above guarantee is worst-case optimal for any $\gamma \in [0,1]$.

We propose a simple threshold-based policy which accepts all items with realized revenue greater than or equal to a fixed threshold that can be computed offline\footnote{In \Cref{sec:applications}, we discuss (efficient) implementation aspects of our policies in terms of computational and sample complexity.}. By denoting $S^*$ the optimal unconstrained assortment, our policy can be described as follows:

\begin{algorithm} \label{alg:unconstrained}
Compute threshold $\tau = \frac{1}{1+\gamma} \cdot \Ex{}{f(S^*)}$, where $\gamma = \E{\psi(S^*)}$, and set $A_{\tau} \gets \emptyset$. Accept every arriving item with revenue greater or equal to $\tau$, namely, add $i$ to $A_\tau$ if and only if $r_i \geq \tau$. 
\end{algorithm}

Remarkably, the above policy only uses the realized revenue of each arriving item to decide on whether or not to accept it; this fact makes it oblivious to any realized parameters which are directly related to the demand. 

In the following result, we bound the competitive guarantee of \Cref{alg:unconstrained}:

\begin{theorem}\label{thm:unconstrained}
For any substitutable choice model satisfying \Cref{cond:1,cond:2}, \Cref{alg:unconstrained} is a $(1+\gamma)$-competitive policy for the unconstrained setting of sequential assortment selection, where $\gamma = \E{\psi(S^*)}$ is the expected probability of purchasing any item in the optimal assortment. Moreover, by setting threshold $\tau = \frac{1}{2} \cdot \Ex{}{f(S^*)}$, \Cref{alg:unconstrained} becomes $2$-competitive (independently of $\gamma$).
\end{theorem}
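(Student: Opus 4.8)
# Proof Proposal

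The plan is to compare the revenue collected by \Cref{alg:unconstrained} against $\Ex{}{f(S^*)}$ by decomposing the algorithm's gain into contributions that exceed the threshold. The natural starting point is the classical Samuel-Cahn-style argument for threshold prophet inequalities, adapted to the assortment objective. Let $A_\tau$ be the (random) assortment returned by the policy. I would first lower-bound $\Ex{}{f(A_\tau)}$ from below by noting that, since the model is substitutable, the marginal contribution of each accepted item $i$ to $f(A_\tau)$ is at least $\phi(i, A_\tau) \cdot r_i$, and each such $r_i \geq \tau$. More usefully, I expect to need a telescoping/union-bound-type argument: condition on the arrival order and on the realizations; for each item $i$, the event that $i$ is accepted is exactly $\{r_i \geq \tau\}$, which is independent of the rest of the arrival process because the policy is a fixed threshold. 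So $\Ex{}{f(A_\tau)} \geq \Ex{}{\sum_{i \in A_\tau} \phi(i, A_\tau) \, r_i}$, and I would try to bound $\phi(i, A_\tau)$ from below using \Cref{cond:2}: $\phi(i, A_\tau) \geq \phi(i, \{i\}) \cdot \phi(0, A_\tau \setminus \{i\}) \geq \phi(i,\{i\}) \cdot \phi(0, A_\tau)$ by substitutability.

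Next I would split the optimal revenue. The key inequality driving the $1/2$ bound (and more generally $1/(1+\gamma)$) in Samuel-Cahn is: for any nonnegative random variable $X$ and threshold $\tau$, $\Ex{}{X} \leq \tau + \Ex{}{(X - \tau)^+}$, and then $\Ex{}{(X-\tau)^+}$ is controlled by the probability that the threshold is never hit. Here the role of $X$ is played by $f(S^*)$. I would write $\Ex{}{f(S^*)} \le \tau + \Ex{}{(f(S^*) - \tau)^+}$ is too crude; instead I expect to decompose $f(S^*) = \sum_{i \in S^*} \phi(i,S^*) r_i$ item-by-item. For each $i \in S^*$, either $r_i \geq \tau$ (so $i$ is accepted by the algorithm and "survives" into $A_\tau$ — contributing to $\Ex{}{f(A_\tau)}$) or $r_i < \tau$, in which case $\phi(i,S^*) r_i < \tau \cdot \phi(i, S^*)$, and summing these gives at most $\tau \cdot \psi(S^*)$, hence at most $\tau$ in expectation after scaling, which is where the factor $\gamma = \Ex{}{\psi(S^*)}$ enters. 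The high-revenue items of $S^*$ that the algorithm does accept must then be charged against $\Ex{}{f(A_\tau)}$, and here I would invoke \Cref{cond:1}: because choice probabilities depend only on demand parameters of present items and these are independent across items, the expected "dilution" $\Ex{}{\phi(0, A_\tau)}$ of an accepted item by the other accepted items factors nicely, and one can show $\Ex{}{f(A_\tau)} \geq \Ex{}{\sum_{i \in S^*, r_i \geq \tau} \phi(i,S^*) r_i} - (\text{something}) \cdot \tau$, or alternatively that the items the algorithm misses cost at most $\Ex{}{f(A_\tau)}$ because each missed high-revenue item is missed only if $A_\tau$ is already "rich enough" — an event whose probability is exactly $\Pro{A_\tau \ne \emptyset}$ by the fixed-threshold structure, and $\tau \cdot \Pro{A_\tau \ne \emptyset} \le \Ex{}{f(A_\tau)}$ since every item in $A_\tau$ has revenue $\ge \tau$ and $\psi$-weight summing appropriately.

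Assembling the pieces: I would aim to show $\Ex{}{f(S^*)} \leq \Ex{}{f(A_\tau)} + \gamma \cdot \tau + \Pro{A_\tau \neq \emptyset} \cdot \tau \le \Ex{}{f(A_\tau)} + \gamma\tau + \Ex{}{f(A_\tau)}$, wait — more carefully, I'd split into the "low-revenue part of $S^*$" bounded by $\gamma \tau$ and the "high-revenue part of $S^*$ missed by the algorithm" bounded by $\tau \cdot \Pro{A_\tau \neq \emptyset} \leq \Ex{}{f(A_\tau)}$, and the "high-revenue part of $S^*$ kept by the algorithm" which is $\leq \Ex{}{f(A_\tau)}$ by a substitutability/monotonicity-of-marginals argument (or simply absorbed into it). This yields $\Ex{}{f(S^*)} \le 2\,\Ex{}{f(A_\tau)} + \gamma \tau$; plugging $\tau = \frac{1}{1+\gamma}\Ex{}{f(S^*)}$ gives $(1+\gamma)\Ex{}{f(A_\tau)} \ge \Ex{}{f(S^*)}$, and $\tau = \frac12 \Ex{}{f(S^*)}$ (using $\gamma \le 1$) gives the $2$-competitive bound. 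The main obstacle I anticipate is the "high-revenue items of $S^*$ kept by the algorithm" term: bounding $\sum_{i \in S^*,\, r_i \ge \tau} \phi(i,S^*) r_i$ by the algorithm's actual revenue $\Ex{}{\sum_{i\in A_\tau}\phi(i,A_\tau)r_i}$ requires relating $\phi(i,S^*)$ to $\phi(i,A_\tau)$, and these two assortments can differ substantially; the trick will be to use \Cref{cond:2} to decouple $\phi(i,A_\tau) \ge \phi(i,\{i\})\phi(0,A_\tau\setminus\{i\})$, then use \Cref{cond:1} and independence to replace $\phi(0, A_\tau\setminus\{i\})$ by its expectation, and separately bound $\phi(i,S^*) \le \phi(i,\{i\})$ by substitutability — so the ratio is controlled by $\Ex{}{\phi(0,A_\tau)}$, which is at least the no-purchase probability and hence not too small. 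Making this decoupling rigorous while handling the adversarial arrival order (which only affects which subset of the threshold-passing items physically arrives — but since the threshold is fixed and unconstrained, $A_\tau$ is exactly the set of all threshold-passing items regardless of order) is the delicate part.
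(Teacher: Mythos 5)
There is a genuine gap, and it sits exactly at the point you flag as ``the delicate part.'' Your plan charges the above-threshold portion of the optimum, $\sum_{i \in S^*,\, r_i \ge \tau} \phi(i,S^*)\, r_i$, against $\Ex{}{f(A_\tau)}$ directly, i.e.\ with constant $1$. This cannot work: since $A_\tau$ collects \emph{every} item above the threshold, it can strictly contain $S^*$ together with many diluting items, and by substitutability $\phi(i,A_\tau)$ can be arbitrarily smaller than $\phi(i,S^*)$. The decoupling you describe (via \Cref{cond:2}, \Cref{cond:1}, and independence) only yields $\Ex{}{\sum_{i\in N}\phi(i,A_\tau)(r_i-\tau)^+} \ge \lambda\cdot\bigl(\Ex{}{f(S^*)} - \gamma\tau\bigr)$ with $\lambda = \Ex{}{\phi(0,A_\tau)}$, and $\lambda$ can be close to $0$; there is no a priori lower bound on ``the no-purchase probability.'' The paper's proof survives this because it does not try to absorb the loss factor $\lambda$: it decomposes the algorithm's own revenue as $f(A_\tau) = \sum_{i\in N}\phi(i,A_\tau)(r_i-\tau)^+ + \phi(A_\tau,A_\tau)\,\tau$, notes that the second term has expectation exactly $(1-\lambda)\tau$, and then observes that $\lambda\,(\Ex{}{f(S^*)}-\gamma\tau) + (1-\lambda)\tau$ equals $\tau$ for every $\lambda$ once $\tau = \frac{1}{1+\gamma}\Ex{}{f(S^*)}$. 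That self-balancing base term $\phi(A_\tau,A_\tau)\tau$ is the missing idea in your proposal; without it the argument degrades by a factor $1/\lambda$.

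Two further concrete problems. First, your bound $\tau\cdot\Pro{A_\tau\neq\emptyset} \le \Ex{}{f(A_\tau)}$ is false for the assortment objective: when $A_\tau\neq\emptyset$ one only gets $f(A_\tau)\ge \tau\cdot\phi(A_\tau,A_\tau)$, and the purchase probability can be near zero; moreover, as you yourself note at the end, the unconstrained fixed-threshold policy misses no item, so the entire ``missed high-revenue items'' term is an artifact of the stop-at-first-acceptance structure of the classical prophet inequality and should not appear here. Second, even granting your target inequality $\Ex{}{f(S^*)} \le 2\,\Ex{}{f(A_\tau)} + \gamma\tau$, the arithmetic does not give the claimed constant: with $\tau = \frac{1}{1+\gamma}\Ex{}{f(S^*)}$ it yields $\Ex{}{f(A_\tau)} \ge \frac{1}{2(1+\gamma)}\Ex{}{f(S^*)}$, i.e.\ only $2(1+\gamma)$-competitiveness, which does not match the theorem and is not tight against \Cref{thm:lowerboundstrong}.
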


Notice that the above guarantee matches the optimal guarantee of the original Prophet Inequality setting, in the case where $\gamma = 1$. However, in many real-life applications of assortment selection, it is reasonable to assume that $\gamma$ is significantly smaller. Indeed, it is easy to find examples of markets, where a constant fraction of buyers just browse and leave without purchasing any product. 

In fact, for demand models where the buyer never chooses the outside option (that is, $\phi(0,S) = 0$ for any $S \subseteq N$) and, thus, $\gamma = 1$, one can design an optimal unconstrained policy that does not rely on any assumptions (see \Cref{app:alternative} for details):

\begin{theorem}
For any demand model which satisfies $\phi(0,S) = 0$ for any $S \subseteq N$, there exists an optimal $2$-competitive fixed-threshold policy for the unconstrained setting.
\end{theorem}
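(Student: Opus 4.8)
The plan is to observe that the hypothesis $\phi(0,S)=0$ degenerates the instance into a variant of the classical (single-choice) prophet inequality, and then to run and analyze the obvious fixed-threshold rule. Since $\sum_{i\in S}\phi(i,S)=1-\phi(0,S)$, every nonempty $S$ satisfies $\sum_{i\in S}\phi(i,S)=1$, so $f(S)=\sum_{i\in S}\phi(i,S)\,r_i$ is a convex combination of $\{r_i:i\in S\}$; in particular $f(\{i\})=r_i$ for every $i$, and $\min_{i\in S}r_i\le f(S)\le\max_{i\in S}r_i$. Consequently $\max_{S\subseteq N}f(S)=\max_{i\in N}r_i=:M$ pointwise, so the prophet's value is $\E{M}$ and the target inequality reads $\E{f(A)}\ge\tfrac12\E{M}$. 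I would use the fixed-threshold policy with $\tau:=\tfrac12\E{M}$ --- note $\E{M}=\E{f(S^*)}$, so this is the same threshold as in the second part of \Cref{thm:unconstrained} --- collecting the set $T:=\{i\in N:r_i\ge\tau\}$; crucially this uses no structural property of $\phi$ beyond $\phi(0,S)=0$.

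The heart of the argument is a lower bound on $\E{f(T)}$ that exploits the independence of the $(r_i,d_i)$ across items. Two facts suffice: whenever $T\ne\emptyset$, $f(T)$ is a convex combination of revenues each $\ge\tau$, hence $f(T)\ge\tau$; and when $T=\{i\}$ is a singleton, $f(T)=r_i$ exactly. Partitioning the sample space by $|T|\in\{0,1,\ge 2\}$ gives
\begin{align*}
\E{f(T)}\ \ge\ \sum_{i\in N}\E{r_i\,\event{T=\{i\}}}\ +\ \tau\,\Pro{|T|\ge 2}.
\end{align*}
Writing $q_i:=\Pro{r_i<\tau}$, $Q_i:=\prod_{j\ne i}q_j$, and $Q:=\prod_j q_j=\Pro{M<\tau}$ (so $Q=q_iQ_i$ and $Q_i\ge Q$), independence gives $\E{r_i\,\event{T=\{i\}}}=Q_i\big(\E{(r_i-\tau)^+}+\tau(1-q_i)\big)$ and $\Pro{|T|\ge 2}=1-Q-\sum_i(1-q_i)Q_i$. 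Substituting, the $\tau(1-q_i)Q_i$ terms cancel, and replacing each $Q_i$ by the smaller $Q$ in the surviving sum leaves the clean estimate
\begin{align*}
\E{f(T)}\ \ge\ Q\sum_{i\in N}\E{(r_i-\tau)^+}\ +\ \tau(1-Q).
\end{align*}

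To finish, I would use $\sum_i(r_i-\tau)^+\ge(M-\tau)^+\ge M-\tau$, hence $\sum_i\E{(r_i-\tau)^+}\ge\E{M}-\tau$ and $\E{f(T)}\ge Q(\E{M}-\tau)+\tau(1-Q)=Q\E{M}+\tau(1-2Q)$; plugging in $\tau=\tfrac12\E{M}$ makes the $Q$-dependence vanish, yielding exactly $\E{f(T)}\ge\tfrac12\E{M}=\tfrac12\E{\max_{S}f(S)}$, so the policy is $2$-competitive. Optimality of the constant $2$ is inherited from the lower bounds of \Cref{sec:lowerbounds} in the regime $\gamma=1$: the hypothesis $\phi(0,S)=0$ forces $\psi(S^*)=1$ and hence $\gamma=1$, while the hard instances there (MNL with $v_0=0$) themselves satisfy $\phi(0,S)=0$.

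The step I expect to be the obstacle is getting the lower bound on $\E{f(T)}$ exactly right. It is tempting to imitate the classical proof via $f(T)\ge\tau+(M-\tau)^+$, but this is \emph{false} here: the demand model can assign vanishingly small choice probability to the top-revenue item of $T$, so including it contributes essentially nothing beyond $\tau$. The fix is to isolate the single event --- $T$ a singleton --- in which the full revenue is recovered, and to verify that, after the cancellation above, the resulting weaker bound $Q(\E{M}-\tau)+\tau(1-Q)$ is already exactly tight at $\tau=\tfrac12\E{M}$; checking that cancellation and the monotonicity $Q_i\ge Q$ is the only genuinely computational piece.
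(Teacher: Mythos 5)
Your proof is correct, and it reaches the theorem by a genuinely different route from the paper. The paper's proof (in \Cref{app:alternative}) first recasts the setting as a ``prophet inequality for convex combinations'' and then argues via the tail-integral representation $\E{f(A_{\tau})}=\int_0^{\infty}\Pro{f(A_{\tau})>x}\,\mathrm{d}x$: it splits the integral at $\tau$, and for $x\geq\tau$ lower-bounds the tail by conditioning on all items outside the collected set falling below the threshold and applying a union bound over subsets, arriving at exactly the inequality you derive, namely $\E{f(A_{\tau})}\geq p\tau+(1-p)\left(\E{M}-\tau\right)$ with $p=\Pro{M\geq\tau}=1-Q$. You instead decompose pointwise by $|T|\in\{0,1,\geq 2\}$, use that singletons return the full revenue while larger sets return at least $\tau$, and compute the singleton contributions exactly via independence; the cancellation of the $\tau(1-q_i)Q_i$ terms and the monotonicity $Q_i\geq Q$ then recover the same bound, which is tight at $\tau=\tfrac{1}{2}\E{M}$. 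Your route is more elementary --- no integral representation, no union bound over $2^N$ subsets, and no need for the continuity assumption the paper makes for convenience --- while using only the two facts $f(T)\geq\tau$ for $T\neq\emptyset$ and $f(\{i\})=r_i$, so it applies at the same level of generality as the paper's adversarial-convex-combination formulation. Your diagnosis that the classical bound $f(T)\geq\tau+(M-\tau)^+$ fails here is precisely the subtlety the paper's argument is designed around, and your appeal to the $v_0=0$ MNL instances of \Cref{sec:lowerbounds} (where $\gamma=1$) for optimality of the constant $2$ matches \Cref{thm:lowerboundsimple}.
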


The remainder of this section is dedicated to proving \Cref{thm:unconstrained}.

\subsection{Proof of \Cref{thm:unconstrained}}
Let us fix any threshold $\tau$ and let $A_\tau = \{i \in N \mid r_i \geq \tau\}$ denote the (random) subset of items with (realized) revenue at least $\tau$. Note that, in the unconstrained setting, $A_{\tau}$ coincides with the set of collected items. 

We first observe that, for any parameter realization, the total revenue of the collected items in $A_{\tau}$ can be rewritten as
\begin{align}
    f(A_{\tau}) = \sum_{i \in A_{\tau}} \phi(i, A_{\tau}) \cdot r_i = \sum_{i \in N} \phi(i, A_{\tau}) \cdot (r_i - \tau)^+ + \phi(A_\tau,A_\tau)\cdot \tau, \label{eq:unconstraineddecomposition}
\end{align}
where we use the fact that an item belongs to $A_{\tau}$ if and only if $r_i \geq \tau$ and that $\sum_{i \in S} \phi(i,S) = \phi(S,S)$, by definition of $\phi$.

Let us define constant $\lambda = \Ex{\mathcal{D}[N]}{\phi(0, A_\tau)}$ and notice that it holds $1-\lambda = \Ex{\mathcal{D}[N]}{\phi(A_{\tau}, A_\tau)}$ -- again by definition of $\phi$. 

In the next lemma we bound the expectation of the first term in the RHS of \eqref{eq:unconstraineddecomposition} over the randomness of the realized revenues and demand parameters. 

\begin{lemma}\label{lem:unconstrained:term1}
For any substitutable choice model satisfying \Cref{cond:1,cond:2}, it holds
$$
\Ex{\mathcal{D}[N]}{\sum_{i \in N} \phi(i, A_{\tau}) \cdot (r_i - \tau)^+} \geq \lambda \cdot \left(\Ex{\mathcal{D}[N]}{f(S^*)} - \Ex{\mathcal{D}[N]}{\phi(S^*,S^*)} \tau \right),
$$
where $S^*$ denotes the optimal unconstrained solution for a given realization.
\end{lemma}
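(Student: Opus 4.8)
The plan is to lower-bound the left-hand side by isolating, for each item $i \in S^*$, the contribution $\phi(i, A_\tau) \cdot (r_i - \tau)^+$ and relating it back to the contribution $\phi(i, S^*) \cdot r_i$ of the same item in the prophet's assortment. Note first that every $i \in S^*$ must satisfy $r_i \geq \tau$: by the ``nesting-by-fare-order''-type structure (formally, substitutability plus the fact that $S^*$ is optimal, or directly because removing an item with $r_i < \tau$ from the optimal unconstrained assortment can only help — this is the standard argument that the optimal unconstrained assortment contains exactly the high-revenue items), so $i \in A_\tau$ and hence $(r_i - \tau)^+ = r_i - \tau$. Therefore
\[
\sum_{i \in N} \phi(i, A_\tau) \cdot (r_i - \tau)^+ \;\geq\; \sum_{i \in S^*} \phi(i, A_\tau) \cdot (r_i - \tau).
\]
Since $S^* \subseteq A_\tau$ and the model is substitutable, one is tempted to write $\phi(i, A_\tau) \geq \phi(i, S^*)$ — but that inequality goes the \emph{wrong} way (adding items only decreases choice probabilities). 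This is the crux of the difficulty, and it is where \Cref{cond:2} enters.

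The key move is to apply \Cref{cond:2} with $S = A_\tau$: $\phi(i, A_\tau) \geq \phi(i, \{i\}) \cdot \phi(0, A_\tau \setminus \{i\})$. I would then take expectations, and the goal is to decouple the two factors. Using \Cref{cond:1}, $\phi(i,\{i\})$ depends only on $d_i$, while $\phi(0, A_\tau \setminus \{i\})$ depends only on the demand parameters of $A_\tau \setminus \{i\}$, which is a function of the realizations of items in $N \setminus \{i\}$ — these are independent of item $i$'s realization. So
\[
\Ex{\mathcal{D}[N]}{\phi(i, A_\tau) \cdot (r_i - \tau)} \;\geq\; \Ex{\mathcal{D}_i}{\phi(i,\{i\}) \cdot (r_i - \tau) \cdot \event{i \in S^*}} \cdot \Ex{\mathcal{D}[N \setminus \{i\}]}{\phi(0, A_\tau \setminus \{i\})},
\]
where I've reinserted the indicator $\event{i \in S^*}$ (whose value depends only on the realizations, and — crucially — whether $i \in S^*$ depends only on $r_i$ relative to the threshold-like value $f(S^*)$… this needs care). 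The second factor I want to lower-bound by $\lambda = \Ex{}{\phi(0, A_\tau)}$; this follows from substitutability again, since $\phi(0, A_\tau \setminus \{i\}) \geq \phi(0, A_\tau)$ pointwise (removing an item can only increase the no-purchase probability), so $\Ex{}{\phi(0, A_\tau \setminus \{i\})} \geq \lambda$.

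Assembling the pieces, summing over $i \in S^*$, and pulling $\lambda$ out, the bound becomes $\lambda \cdot \Ex{}{\sum_{i \in S^*} \phi(i,\{i\}) \cdot (r_i - \tau)}$. It remains to check that $\sum_{i \in S^*} \phi(i,\{i\}) \cdot (r_i - \tau) \geq f(S^*) - \phi(S^*,S^*)\,\tau$ pointwise. Writing the right side as $\sum_{i \in S^*} \phi(i, S^*)(r_i - \tau)$ (using $\phi(S^*,S^*) = \sum_{i\in S^*}\phi(i,S^*)$), this reduces to showing $\phi(i,\{i\}) \geq \phi(i, S^*)$ for each $i \in S^*$ — which is exactly substitutability ($\{i\} \subseteq S^*$). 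Combining with $r_i - \tau \geq 0$ on $S^*$ gives the termwise inequality, and taking expectations finishes the lemma.

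The main obstacle I anticipate is the measurability/independence bookkeeping in the decoupling step: one must be careful that the event $\{i \in S^*\}$ and the factor $\phi(i,\{i\})$ can be grouped with item $i$'s own randomness, while $\phi(0, A_\tau\setminus\{i\})$ is genuinely independent of it. Since $A_\tau \setminus \{i\}$ is determined by the revenues of the other items (not their demand parameters in the unconstrained case) and $S^*$'s defining threshold is a global constant, this should go through, but it is the delicate point. A cleaner route might be to avoid explicitly conditioning on $\{i \in S^*\}$ and instead carry the indicator through as part of a single expectation, using \Cref{cond:1}'s stated consequence $\Ex{}{\phi(i,S)\phi(i',T)} = \Ex{}{\phi(i,S)}\Ex{}{\phi(i',T)}$ for disjoint $S, T$ as the formal workhorse.
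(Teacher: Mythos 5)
Your overall strategy --- \Cref{cond:2} to lower-bound $\phi(i,A_\tau)$ by $\phi(i,\{i\})\cdot\phi(0,A_\tau\setminus\{i\})$, \Cref{cond:1} plus independence across items to decouple, and substitutability both to reach $\lambda$ and to compare $\phi(i,\{i\})$ with $\phi(i,S^*)$ --- is exactly the paper's. However, there are two genuine gaps, both traceable to your decision to restrict the sum to $S^*$ \emph{before} decoupling. First, the claim that every $i\in S^*$ satisfies $r_i\geq\tau$ (hence $S^*\subseteq A_\tau$) is false. The lemma is proved for an arbitrary fixed threshold $\tau$ (and is invoked with two different values, $\frac{1}{1+\gamma}\E{f(S^*)}$ and $\frac{1}{2}\E{f(S^*)}$). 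Even under MNL, where nesting-by-fare-order holds, the revenue cutoff defining the realized $S^*$ is the \emph{realized} value $f(S^*)$, a random variable; on realizations where $f(S^*)<\tau$ the optimal assortment contains items with $r_i<\tau$, which then lie outside $A_\tau$ and contribute $\phi(i,A_\tau)=0$ to your restricted sum, so the chain of inequalities loses those items' contributions to $f(S^*)$. For a general substitutable model satisfying only \Cref{cond:1,cond:2} there is no such threshold structure at all. The same false claim is reused at the end, where $\phi(i,\{i\})(r_i-\tau)\geq\phi(i,S^*)(r_i-\tau)$ requires $r_i\geq\tau$.

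Second, the decoupling step as written does not go through: the indicator $\event{i\in S^*}$ depends on the realizations of \emph{all} items (the optimizer $S^*$ is a function of the entire profile), so it is correlated with $\phi(0,A_\tau\setminus\{i\})$ and cannot be grouped with item $i$'s own randomness; your suggestion that membership in $S^*$ is governed by a ``global constant'' threshold is incorrect, since that threshold is itself random. The paper sidesteps both problems by a different order of operations: it keeps the sum over all of $N$ with the nonnegative weights $(r_i-\tau)^+$ (so \Cref{cond:2} applies termwise --- for $i\notin A_\tau$ both sides vanish), decouples and extracts $\lambda$ while the sum still ranges over $N$, and only \emph{then} drops the terms outside $S^*$ inside a single expectation (valid by nonnegativity of every term), finally applying substitutability and $(x)^+\geq x$. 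If you reorder your argument this way, every ingredient you need is already in your write-up and the proof closes.
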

\begin{proof}
Recall that for each item $i \in N$, the parameters $(r_i, {d}_i)\sim \mathcal{D}_i$ are drawn independently of the corresponding parameters of different items. For any subset $S \subseteq {N}$, let us denote by $\mathcal{D}[S]$ the randomness generated by the parameters of the items in $S$. By \Cref{cond:1} it becomes clear that for any set $S$ and item $i \in S$, the (random) probability of choosing item $i$ from $S$, that is $\phi(i,S)$, is measurable with respect to $\mathcal{D}[S]$ (and independent of $\mathcal{D}[N \setminus S]$).


By linearity of expectation and the fact that the parameters of each item are independent, we have 
\begin{align*}
    \Ex{\mathcal{D}[N]}{ \sum_{i \in N} \phi(i,A_\tau) \cdot (r_i - \tau)^+ } &= \sum_{i \in N} \Ex{\mathcal{D}_i}{\Ex{\mathcal{D}[{N}\setminus\{i\}]}{\phi(i,A_\tau) \cdot (r_i - \tau)^+}} \\
    &\geq \sum_{i \in N} \Ex{\mathcal{D}_i}{\Ex{\mathcal{D}[{N}\setminus\{i\}]}{\phi(i,\{i\}) \cdot \phi(0,A_\tau \setminus \{i\}) \cdot (r_i - \tau)^+ }} \\
    &= \sum_{i \in N} \Ex{\mathcal{D}_i}{\phi(i,\{i\}) \cdot (r_i - \tau)^+ \cdot  \Ex{\mathcal{D}[{N}\setminus\{i\}]}{ \phi(0,A_\tau \setminus \{i\})}} \\
     &= \sum_{i \in N} \Ex{\mathcal{D}[N \setminus \{i\}]}{ \phi(0,A_\tau \setminus \{i\})} \cdot \Ex{\mathcal{D}_i}{\phi(i,\{i\}) \cdot (r_i - \tau)^+},
\end{align*}
where the first inequality above follows from \Cref{cond:2}, which suggests that $\phi(i, A_{\tau}) \geq \phi(i, \{i\}) \cdot \phi(0, A_{\tau} \setminus \{i\})$. The second equality follows from the fact that $r_i$ and $\phi(i,\{i\})$ only depend on the realized parameters of item $i$, which follows from \Cref{cond:1}. The last equality follows from noticing that the set $A_{\tau} \setminus \{i\}$ does not depend on the realized parameters of item $i$ and, hence, neither does $\phi(0, A_\tau \setminus \{i\})$, again using \Cref{cond:1}.

For any fixed $i \in N$, by using the fact that $\phi(0,A_\tau \setminus \{i\})$ is not affected by introducing an independent copy $(r_i,d_i) \sim \mathcal{D}_i$ (by \Cref{cond:1}), we can see that
$$
\Ex{\mathcal{D}[{N} \setminus \{i\}]}{\phi(0,A_\tau \setminus \{i\})} = \Ex{\mathcal{D}[N]}{\phi(0,A_\tau \setminus \{i\})} \geq \Ex{\mathcal{D}[N]}{\phi(0,A_\tau)} = \lambda,
$$
where the inequality follows from the substitutability of the demand model, while the last equality holds by definition of $\lambda$. By combining the above bounds, we get
\begin{align*}
    \Ex{\mathcal{D}[N]}{\sum_{i \in N} \phi(i,A_\tau) \cdot (r_i - \tau)^+ } \geq \lambda \cdot \sum_{i \in N} \Ex{\mathcal{D}_i}{\phi(i,\{i\}) \cdot (r_i - \tau)^+} = \lambda \cdot \Ex{\mathcal{D}[N]}{\sum_{i \in N} \phi(i,\{i\}) \cdot (r_i - \tau)^+},
\end{align*}
where the equality above follows from coupling the (independent) parameter realizations of the items.

We can further lower bound the above expectation as follows:
\begin{align*}
    \Ex{\mathcal{D}[N]}{\sum_{i \in N} \phi(i,A_\tau) \cdot (r_i - \tau)^+ } &\geq \lambda \cdot \Ex{\mathcal{D}[N]}{\sum_{i \in N} \phi(i,\{i\}) \cdot (r_i - \tau)^+} \\
    &\geq \lambda \cdot \Ex{\mathcal{D}[N]}{\sum_{i \in S^*} \phi(i,S^*) \cdot (r_i - \tau)^+} \\
    &\geq \lambda \cdot \left( \Ex{\mathcal{D}[{N}]}{ \sum_{i \in S^*} \phi(i,S^*) \cdot r_i} - \Ex{\mathcal{D}[{N}]}{\phi(S^*,S^*)} \cdot \tau \right) \\ 
    &\geq \lambda \cdot \left( \Ex{\mathcal{D}[{N}]}{ f(S^*)} - \Ex{\mathcal{D}[{N}]}{\phi(S^*,S^*)} \cdot \tau \right),
\end{align*}
where in the second equality we use the fact that every term in the summation is non-negative, and that $\phi(i,\{i\}) \geq \phi(i,S^*)$ for each $i \in S^*$, by substitutability. In the penultimate inequality we simply use that $(x)^+ \geq x$ for any real $x$. 
\end{proof}

By taking expectation over the second term in the RHS of \eqref{eq:unconstraineddecomposition} and by definition of $\lambda$, we observe that 
\begin{align}
\Ex{\mathcal{D}[{N}]}{\phi(A_\tau,A_\tau)} \tau = (1-\lambda) \cdot \tau. \label{eq:unconstrained:term1}
\end{align}

We are now ready to complete the proof.

\begin{proof}[Proof of \Cref{thm:unconstrained}] By combining decomposition \eqref{eq:unconstraineddecomposition} with \Cref{lem:unconstrained:term1} and equation \eqref{eq:unconstrained:term1}, for the expected total revenue collected by \Cref{alg:unconstrained} using threshold $\tau$, we have
\begin{align*}
\Ex{\mathcal{D}[{N}]}{f(A_\tau)} \geq \lambda \cdot \left(\Ex{\mathcal{D}[{N}]}{f(S^*)} - \Ex{\mathcal{D}[N]}{\phi(S^*,S^*)} \cdot \tau\right) + (1 - \lambda) \cdot \tau.
\end{align*}
By setting threshold $\tau = \frac{1}{1+\gamma} \cdot \Ex{\mathcal{D}[N]}{f(S^*)}$ with $\gamma = \Ex{\mathcal{D}[N]}{\phi(S^*,S^*)}$, we can see that \Cref{alg:unconstrained} becomes $\left(1+\gamma\right)$-competitive. In addition, by setting threshold $\tau = \frac{1}{2}\Ex{\mathcal{D}[{N}]}{f(S^*)}$ and noticing that $\Ex{\mathcal{D}[N]}{\phi(S^*,S^*)} \leq 1$, the resulting policy becomes $2$-competitive. This concludes the proof. 
\end{proof}

\section{Optimal Threshold Policy for Cardinality Constraints} \label{sec:cardinality}

We consider the cardinality-constrained version of sequential assortment selection, where any feasible assortment contains at most $k$ items. In this case, we provide a $2$-competitive policy for the class of substitutable demand models satisfying \Cref{cond:1,cond:2,cond:3}, which includes the MNL; as we show in \Cref{sec:lowerbounds}, this policy is optimal for this class of discrete choice models. Further, for demand models that satisfy the weak version of \Cref{cond:3}, a slight adaptation of our policy becomes $3$-competitive.

As opposed to the unconstrained case, our policy for cardinality constraints requires to observe both the revenue and the demand parameter of each arriving item in order to make decisions. For a fixed threshold $\tau > 0$, our policy works as follows: for each arriving item $i$ with realized revenue $r_i$ and demand parameter $d_i$, the policy first checks the condition $\phi(i,\{i\}) (r_i-\tau) \geq \frac{\phi(0,\{i\})}{k} \tau$, where $k$ is the maximum number of items that can be collected. Notice that, for each $i$, the validity of this condition already guarantees that $r_i \geq \tau$, and can be tested by only observing the demand parameter $d_i$, by \Cref{cond:1}. If the condition is true, then the policy collects the item only if this does not violate the cardinality constraints given the already collected items. By setting threshold $\tau = \frac{1}{2}\E{f(S^*)}$, where $S^*$ is the optimal cardinality-constrained assortment for a given instance, the above policy becomes $2$-competitive under \Cref{cond:1,cond:2,cond:3}. 

\begin{algorithm} \label{alg:cardinality}
Compute threshold $\tau = \frac{1}{2}\Ex{}{f(S^*)}$ and set ${A}_{\tau} \gets \emptyset$. For each arriving item $i \in N$, if $\phi(i,\{i\}) (r_i-\tau) \geq \frac{\phi(0,\{i\})}{k} \cdot \tau$ and $|A_{\tau}| \leq k -1$, then add $i$ to $A_{\tau}$. 
\end{algorithm}

Interestingly, in the case of MNL demand the threshold condition is equivalent to testing $\frac{v_i \cdot r_i}{\nicefrac{v_0}{k} + v_i} \geq \tau$, that is, whether the total revenue of collecting item $i$ alone would be at least $\tau$, assuming that the outside attraction is down-scaled by a factor of $k$.

The main result of this section can be summarized as follows:

\begin{theorem}\label{thm:cardinality}
Under MNL demand (or any substitutable choice model satisfying \Cref{cond:1,cond:2,cond:3}), \Cref{alg:cardinality} is $2$-competitive for the cardinality-constrained setting of sequential assortment selection. 
\end{theorem}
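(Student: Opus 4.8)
The plan is to follow the skeleton of the proof of \Cref{thm:unconstrained}: decompose the collected revenue into a term measured against the threshold and a ``surplus'' term, and calibrate $\tau$ against $\E{f(S^*)}$. The two new difficulties are that (i) the policy may be forced to skip a qualifying item because $|A_\tau|=k$ already, and (ii) $A_\tau$ is no longer the order-oblivious set $\{i:r_i\ge\tau\}$, so the per-item conditioning of \Cref{lem:unconstrained:term1} is unavailable; to handle the almighty adversary the argument is instead carried out pointwise on each realization and only takes expectations at the very end.

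Fix a realization and put $a_i=\phi(i,\{i\})/\phi(0,\{i\})$, so that \Cref{cond:3} gives $\phi(i,S)=\phi(0,S)\,a_i$ for $i\in S$, hence $f(S)=\phi(0,S)\sum_{i\in S}a_ir_i$ and $\psi(S)=\phi(0,S)\sum_{i\in S}a_i$. Let $Q$ be the set of items passing the threshold test, i.e.\ $a_i(r_i-\tau)\ge\tau/k$; note every $i\in Q$ has $r_i>\tau$, and the policy outputs $A$, the first $\min\{|Q|,k\}$ members of $Q$ in arrival order. Summing the test inequality over $i\in A$ and multiplying by $\phi(0,A)$ yields the pointwise bound
\[
f(A)\;\ge\;\tau\,\psi(A)+\frac{\tau}{k}\,\phi(0,A)\,|A|\;=\;\tau-\tau\,\phi(0,A)\cdot\frac{k-|A|}{k},
\]
so in particular $f(A)\ge\tau$ whenever the budget binds ($|A|=k$) --- this is precisely what the $1/k$ discount in the threshold buys. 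Symmetrically, each $i\in S^*\setminus Q$ \emph{fails} the test, and since $|S^*\setminus Q|\le|S^*|\le k$ the same computation run in reverse gives $f(S^*\setminus Q)<\tau$; subadditivity of $f$ under substitutable models then yields $f(S^*)<f(S^*\cap Q)+\tau$ pointwise.

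When the budget does not bind we have $A=Q\supseteq S^*\cap Q$. Using \Cref{cond:2} together with \Cref{cond:3} (which give $\phi(i,Q)\ge\phi(0,Q)\,\phi(i,\{i\})$ for $i\in S^*\cap Q$), and keeping the surplus term $\psi(Q)\tau$, one gets
\[
f(A)=f(Q)\;\ge\;\phi(0,Q)\!\!\sum_{i\in S^*\cap Q}\!\!\phi(i,\{i\})(r_i-\tau)^+\;+\;\bigl(1-\phi(0,Q)\bigr)\tau\;\ge\;\tau+\phi(0,Q)\bigl(f(S^*)-2\tau\bigr),
\]
where the last step uses $f(S^*)<f(S^*\cap Q)+\tau$ and the elementary chain $\sum_{i\in S^*\cap Q}\phi(i,\{i\})(r_i-\tau)^+\ge f(S^*\cap Q)-\psi(S^*\cap Q)\tau\ge f(S^*)-2\tau$. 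Setting $\tau=\tfrac12\E{f(S^*)}$ so that $\E{f(S^*)-2\tau}=0$, and combining with $f(A)\ge\tau$ on the binding event, we arrive at
\[
\E{f(A)}\;\ge\;\tau+\E{\phi(0,Q)\,\bigl(f(S^*)-2\tau\bigr)\,\event{|A|<k}}.
\]

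The main obstacle is showing that this correction term is nonnegative --- the difficulty being that the weight $\phi(0,Q)\,\event{|A|<k}\in[0,1]$ is correlated with the mean-zero quantity $f(S^*)-2\tau$ and cannot simply be pulled out of the expectation; this is the step I expect to require the most care. I would close the gap by sharpening the two crude estimates made above in the complementary directions: on realizations with $f(S^*)>2\tau$ one has $f(S^*\cap Q)>\tau$, and since $S^*\cap Q$ is feasible while $S^*$ is optimal, optimality pins the relevant weight (morally $\phi(0,Q)$) up to $1$, so these realizations contribute at least $\E{(f(S^*)-2\tau)^+}$; on realizations with $f(S^*)<2\tau$ the $\tau/k$ surplus carried by each of the forced-taken items of $Q\setminus S^*$ --- discarded in the convex-combination bound above --- exactly offsets the dilution it causes, bounding the loss by $\E{(f(S^*)-2\tau)^-}$, and the two cancel. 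Finally, if the model satisfies only the \emph{weak} version of \Cref{cond:3}, the identity $\phi(i,S)=\phi(0,S)a_i$ weakens to an inequality in the unfavorable direction in the first display, which propagates as one extra additive $\tau$ and degrades the guarantee to $3$-competitive.
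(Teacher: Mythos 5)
Your setup is sound and you have correctly located the crux, but the crux is not actually proven. The decomposition, the reformulation $\phi(i,S)=\phi(0,S)\,a_i$ from the strong version of \Cref{cond:3}, the pointwise bound $f(A)\ge\tau$ when the budget binds, and the subadditivity bound $f(S^*)<f(S^*\cap Q)+\tau$ are all fine. The gap is the final step: you arrive at $\E{f(A)}\ge\tau+\E{\phi(0,Q)\,(f(S^*)-2\tau)\,\event{|A|<k}}$ and need the correction term to be nonnegative, but the argument you sketch for this does not work. The claim that ``optimality pins the relevant weight up to $1$'' on realizations with $f(S^*)>2\tau$ is false: $\phi(0,Q)$ can be arbitrarily close to $0$ on exactly those realizations (e.g.\ when several high-attraction, high-revenue items all pass the threshold), so these realizations need not contribute anywhere near $\E{(f(S^*)-2\tau)^+}$, and the claimed cancellation against the negative part is asserted, not derived. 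A purely pointwise (per-realization) argument cannot recover this, because on a fixed realization the weight $\phi(0,Q)\event{|A|<k}$ and the surplus $f(S^*)-2\tau$ are genuinely entangled. There is also an arithmetic slip in your display: $\phi(0,Q)Y+(1-\phi(0,Q))\tau=\tau+\phi(0,Q)(Y-\tau)$, so with $Y=f(S^*)-2\tau$ you actually get $\tau+\phi(0,Q)\bigl(f(S^*)-3\tau\bigr)$; the missing $\phi(0,Q)\tau$ is exactly the $\tfrac{\phi(0,\{i\})}{k}\tau$ surplus you discarded from the ``losses'' side, which the paper keeps and uses (its second term is $\psi(A_\tau)\tau+\sum_{i\in A_\tau}\tfrac{\phi(0,\{i\})}{k}\tfrac{\phi(i,A_\tau)}{\phi(i,\{i\})}\tau$, not just $\psi(A_\tau)\tau$).

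The paper resolves the correlation issue by a mechanism your proposal does not use: it never tries to show a signed correction term vanishes. Instead it writes the gains term item by item as $\sum_i\E{\zeta(i)\cdot W_{-i}}$, where $\zeta(i)=\bigl(\phi(i,\{i\})(r_i-\tau)^+-\tfrac{\phi(0,\{i\})}{k}\tau\bigr)^+$ depends only on item $i$'s realization, and replaces the order-dependent quantity $\phi(0,A_\tau\setminus\{i\})\event{|A_\tau^{\prec i}|<k}$ by the order-oblivious $\phi(0,E_\tau\setminus\{i\})\event{|E_\tau\setminus\{i\}|<k}$ via a pointwise inequality ($E_\tau$ being the set of threshold-passing items). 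Since $E_\tau\setminus\{i\}$ is measurable with respect to the other items' realizations, independence lets the expectation factor, and $\E{W_{-i}}$ is bounded below by a single constant $\lambda=\E{\phi(0,A_\tau)\event{|A_\tau|<k}}$ uniformly in $i$. The two halves then combine as $\lambda\bigl(\E{f(S^*)}-\tau\bigr)+(1-\lambda)\tau\ge\tfrac12\E{f(S^*)}$ for every value of $\lambda$, so no cancellation of positive and negative parts is ever needed. To repair your proof you would need to import this item-wise factorization (or an equivalent decoupling device); as written, the key inequality is unestablished.
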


By slightly modifying the above policy, one can get a $3$-competitive for a class of demand models which satisfy the weak version of \Cref{cond:3}.

\begin{restatable}{theorem}{restatecardinalityweak} \label{thm:cardinality3}
Under any substitutable choice model satisfying \Cref{cond:1,cond:2}, and the weak version of \Cref{cond:3}, by setting $\tau = \frac{1}{3}\Ex{}{f(S^*)}$, \Cref{alg:cardinality} becomes $3$-competitive for cardinality-constraints. 
\end{restatable}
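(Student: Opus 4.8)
The plan is to mirror the structure of the proof of \Cref{thm:cardinality} (for the strong version of \Cref{cond:3}), tracking where the weak inequality $\frac{\phi(0,S)}{\phi(i,S)} \leq \frac{\phi(0,\{i\})}{\phi(i,\{i\})}$ is used in place of an equality and showing that this only costs a constant factor. First I would set $A_\tau$ to be the (random) set collected by \Cref{alg:cardinality} with $\tau = \frac13 \Ex{}{f(S^*)}$, and decompose $f(A_\tau)$ exactly as in the unconstrained case: $f(A_\tau) = \sum_{i \in A_\tau} \phi(i,A_\tau) r_i = \sum_{i \in A_\tau} \phi(i,A_\tau)(r_i - \tau) + \phi(A_\tau,A_\tau)\cdot \tau$. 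The second term is nonnegative (it equals $(1-\phi(0,A_\tau))\tau \geq 0$), so it suffices to lower bound the expectation of the first term, i.e.\ the ``surplus'' collected above threshold.

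The key steps are then: (i) handle the event that the cardinality constraint binds, i.e.\ $|A_\tau| = k$; on this event the collected surplus is large because each of the $k$ accepted items individually contributes surplus at least (something proportional to) $\tau/k$, by the threshold condition $\phi(i,\{i\})(r_i - \tau) \geq \frac{\phi(0,\{i\})}{k}\tau$ combined with the weak \Cref{cond:3} and substitutability, so a union/averaging argument gives $\Ex{}{f(A_\tau)\cdot \event{|A_\tau|=k}} \geq c\cdot \tau \cdot \Pro{|A_\tau| = k}$ for an absolute constant $c$; (ii) on the complementary event $|A_\tau| < k$, no item that passes the threshold is ever rejected, so $A_\tau$ on this event coincides with the unconstrained threshold set $\{i : \phi(i,\{i\})(r_i-\tau) \geq \frac{\phi(0,\{i\})}{k}\tau\}$, and I can run essentially the \Cref{lem:unconstrained:term1} computation — expand $\phi(i,A_\tau) \geq \phi(i,\{i\})\phi(0,A_\tau\setminus\{i\})$ via \Cref{cond:2}, use \Cref{cond:1} to split the expectation, and compare against $S^*$ — to show the surplus collected dominates a constant fraction of $\Ex{}{f(S^*)} - \Ex{}{\phi(S^*,S^*)}\tau$, where the weak \Cref{cond:3} is exactly what lets me translate the per-item threshold condition on $\{i\}$ into a bound on the per-item contribution inside $S^*$ (the factor-$k$ downscaling of the outside option is what makes the $\frac{\phi(0,\{i\})}{k}$ term in the threshold correct). (iii) Combine: writing $q = \Pro{|A_\tau|=k}$, the two bounds give $\Ex{}{f(A_\tau)} \gtrsim q\cdot\tau + (1-q)\cdot\big(\Ex{}{f(S^*)} - \Ex{}{\phi(S^*,S^*)}\tau\big)$ up to the constants; since $\Ex{}{\phi(S^*,S^*)} \leq 1$, both arguments of the convex combination are at least $\Ex{}{f(S^*)} - \tau$, and plugging in $\tau = \frac13\Ex{}{f(S^*)}$ yields a $3$-competitive bound.

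The main obstacle I anticipate is step (i)/(ii) bookkeeping of the constants: the threshold condition is a statement about $\phi(i,\{i\})$ and $\phi(0,\{i\})$ (the ``offered alone'' probabilities), whereas the objective involves $\phi(i,A_\tau)$, and bridging the two requires using \emph{both} substitutability ($\phi(i,\{i\}) \geq \phi(i,A_\tau)$, which goes the wrong way) and the weak \Cref{cond:3} (which controls the ratio $\phi(0,S)/\phi(i,S)$, hence lets me recover $\phi(i,S)(r_i - \tau) \geq \frac{\phi(0,S)}{k}\tau$ from the single-item condition). In the MNL case these manipulations are exact and lose nothing beyond the factor $2$; with only the weak version the inequality slack is absorbed into passing from $2$ to $3$, and I would need to verify carefully that the loss is at most this additive $1$ rather than multiplicative. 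I would also double-check that the $|A_\tau|=k$ branch genuinely needs the weak \Cref{cond:3} only through $\phi(0,\{i\}) \leq$ (something), or whether substitutability alone suffices there, since that determines whether the whole loss is concentrated in the $|A_\tau|<k$ branch.
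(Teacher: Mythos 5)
Your overall strategy --- reuse the \Cref{thm:cardinality} machinery, split on whether the cardinality constraint binds, and track where the weak version of \Cref{cond:3} replaces the strong one --- is exactly the paper's. But the two points you flag as ``to double-check'' are precisely where your current accounting is inverted, and resolving them correctly is the entire content of the theorem beyond \Cref{thm:cardinality}. First, the weak version of \Cref{cond:3} is \emph{not} usable in your step (ii): there you need an \emph{upper} bound on $\frac{\phi(0,\{i\})}{\phi(i,\{i\})}$ to control the subtracted penalty $\frac{\phi(0,\{i\})}{k}\tau$ when comparing against $S^*$, and the weak condition only bounds this ratio from \emph{below} (by $\frac{\phi(0,S^*)}{\phi(i,S^*)}$). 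The paper instead falls back on the crude bound $\phi(0,\{i\})\leq 1$, which turns the per-item penalty into $\frac{\tau}{k\cdot\phi(i,\{i\})}$ and, after multiplying back by $\phi(i,\{i\})$ and summing over $S^*$ with $|S^*|\leq k$, costs an extra additive $\tau$. This is the precise source of the $3$: the correct step-(ii) bound is $\lambda\left(\Ex{}{f(S^*)}-(\gamma+1)\tau\right)$, not $\lambda\left(\Ex{}{f(S^*)}-\gamma\tau\right)$ as you wrote; with $\tau=\frac13\Ex{}{f(S^*)}$ and $\gamma\leq1$ this equals $\lambda\tau$ and the arithmetic still closes. (The missing term arises because the threshold set is defined by $\phi(i,\{i\})(r_i-\tau)\geq\frac{\phi(0,\{i\})}{k}\tau$ rather than $r_i\geq\tau$, so items of $S^*$ failing the condition must be charged for.)

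Second, the weak \Cref{cond:3} \emph{is} indispensable in your step (i), where substitutability alone does not suffice. Keeping the $\frac{\phi(0,\{i\})}{k}\tau$ terms in the decomposition (rather than discarding them as your simpler unconstrained-style decomposition does), one uses $\frac{\phi(i,A_\tau)}{\phi(i,\{i\})}\phi(0,\{i\})\geq\phi(0,A_\tau)$ --- which is exactly the weak condition --- to get $\sum_{i\in A_\tau}\frac{\phi(0,\{i\})}{k}\frac{\phi(i,A_\tau)}{\phi(i,\{i\})}\geq\phi(0,A_\tau)\frac{|A_\tau|}{k}$, which on the event $|A_\tau|=k$ completes $\phi(A_\tau,A_\tau)$ to a full $1\cdot\tau$; this is \Cref{lem:cardinality:term2}, which already only invokes the weak version. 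Without it you are left with a per-item surplus of order $\phi(0,A_\tau)\frac{\phi(0,\{i\})}{k}\tau$, and $\phi(0,\{i\})$ has no lower bound. Finally, your reduction ``on $\{|A_\tau|<k\}$ the set $A_\tau$ coincides with the threshold set'' must be executed at the level of $E_\tau\setminus\{i\}$, conditioning on $\{|E_\tau\setminus\{i\}|<k\}$ (which is independent of item $i$'s realization), not on $\{|A_\tau|<k\}$ (which is not); this is the decorrelation step of \Cref{lem:cardinality:term1} and carries over verbatim.
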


We remark that, under the assumptions of \Cref{thm:cardinality3}, one can get a slightly better $(2+\gamma)$-competitive policy, where $\gamma = \E{\psi(S^*)}$ is the expected probability of purchasing (any item) in the optimal cardinality-constrained assortment, by setting threshold $\frac{1}{2+\gamma} \cdot \E{f(S^*)}$.

In the rest of this section we prove \Cref{thm:cardinality}; the proof of \Cref{thm:cardinality3} can be found in \Cref{app:omitted}.

\subsection{Proof of \Cref{thm:cardinality}}\label{sec:cardinality:proof}
Let us fix a threshold $\tau > 0$ and let $E_{\tau} = \{i \in N \mid \phi(i,\{i\}) (r_i-\tau)^+ \geq \frac{\phi(0,\{i\})}{k} \cdot \tau\}$ denote the (random) subset of items that satisfy the threshold condition of \Cref{alg:cardinality}. We remark that the addition of each item $i \in N$ to $E_{\tau}$ only depends of the parameter realization of $i$. Let $\sigma : {N} \rightarrow |{N}|$ be a arrival order of the items for a given realization and let $A^{\prec i}_\tau$ be the subset of items collected by our policy exactly before the arrival of item $i$. Notice that the sets $\{A^{\prec i}_\tau\}_{i \in N}$ depend on the chosen arrival order $\sigma$. We denote by $A_\tau \subseteq N$ the selected assortment after the arrival of all items and observe that $A_{\tau} \subseteq E_{\tau}$, since every collected item has to satisfy the threshold condition. 

At this point, it is important to make two notes regarding the case of adversarial order: the threshold-based deterministic nature of our policies does not allow an adversary to benefit from adaptivity, since the worst-case arrival order can be computed offline, after observing the parameter realizations of all items. Further, for the sake of analysis and without loss of generality we assume that, for every parameter realization, the worst-case arrival order $\sigma$ is (effectively) unique. Indeed, in the case of two (or more) equivalent orders the adversary can choose the ``smallest'' one, based on a predetermined lexicographic ordering on the items. 

For any realization of the parameters of all items, the total revenue collected by \Cref{alg:cardinality} using a fixed threshold $\tau > 0$ can be rewritten as follows:
\begin{align}
f(A_{\tau}) &= \sum_{i \in A_{\tau}} \phi(i,A_{\tau})\cdot r_i \notag\\
&= \sum_{i \in A_{\tau}} \phi(i,A_\tau) (r_i - \tau)^+ + \phi(A_{\tau},A_{\tau}) \cdot \tau \notag \\
&= \sum_{i \in A_{\tau}} \frac{\phi(i,A_{\tau})}{\phi(i,\{i\})} \phi(i,\{i\}) (r_i - \tau)^+  + \phi(A_{\tau},A_{\tau}) \cdot \tau \notag \\
&= \sum_{i \in A_{\tau}} \frac{\phi(i,A_{\tau})}{\phi(i,\{i\})} \left( \phi(i,\{i\})(r_i - \tau)^+ -\frac{\phi(0,\{i\})}{k} \tau\right)^+ + \left(\phi(A_{\tau},A_{\tau}) + \sum_{i \in A_{\tau}}\frac{\phi(0,\{i\})}{k}\frac{\phi(i,A_{\tau})}{\phi(i,\{i\})} \right) \tau, \label{eq:decompositioncardinality}
\end{align}
where in the second equality above, we are allowed to drop the negative part of $r_i-\tau$, since we know that $r_i \geq \tau$, for any $i \in A_{\tau}$, by definition of the threshold condition. We remark that, for any item $i \in A_{\tau}$, it has to be that $\phi(i,\{i\}) > 0$ since, in the opposite case, $\frac{\phi(0,\{i\})}{k} \tau$ must be strictly positive and the item cannot satisfy the threshold condition; hence, dividing by $\phi(i,\{i\})$ in the third equality above is not an issue. Notice, further that, as opposed to the (analogous) decomposition \eqref{eq:unconstraineddecomposition} for the unconstrained setting, here we cannot replace $A_{\tau}$ with $N$ in the last equality, since not every threshold-feasible item is collected.

Let us define constant $\lambda = \Ex{\mathcal{D}[N]}{\phi(0,A_{\tau}) \event{|A_{\tau}| < k}}$, where $\event{|A_{\tau}| < k} \in \{0,1\}$ is the indicator of the event $\{|A_{\tau}| < k\}$. Using that the worst-case arrival order is w.l.o.g. unique for each realization, and since $|A_{\tau}| \leq k$, we observe that $1-\lambda = \Ex{\mathcal{D}[N]}{\phi(0,A_{\tau}) \event{|A_{\tau}| = k} + \phi(A_{\tau},A_{\tau})}$. 

In the next lemma, we bound the expectation of the first term in the RHS of decomposition \eqref{eq:decompositioncardinality}. 

\begin{lemma}\label{lem:cardinality:term1}
For any substitutable demand model satisfying \Cref{cond:1,cond:2,cond:3}, it holds
$$
\Ex{\mathcal{D}[N]}{\sum_{i \in A_{\tau}} \frac{\phi(i,A_{\tau})}{\phi(i,\{i\})} \left( \phi(i,\{i\})(r_i - \tau)^+ -\frac{\phi(0,\{i\})}{k} \tau\right)^+} \geq \lambda \cdot \left(\Ex{\mathcal{D}[N]}{f(S^*)} - \tau \right),
$$
where $S^*$ denotes the optimal cardinality-constrained solution for a given realization.
\end{lemma}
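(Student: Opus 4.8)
The plan is to follow the template of the proof of \Cref{lem:unconstrained:term1}, with the extra complication that a threshold‑feasible item may fail to be collected because of the cardinality constraint. Abbreviate $g_i = \left( \phi(i,\{i\})(r_i - \tau)^+ -\frac{\phi(0,\{i\})}{k} \tau\right)^+ \ge 0$, so the left‑hand side is exactly $\Ex{\mathcal{D}[N]}{\sum_{i \in A_{\tau}} \frac{\phi(i,A_{\tau})}{\phi(i,\{i\})}\, g_i}$. Two elementary facts will be used throughout: by \Cref{cond:1}, both $g_i$ and the event $\{i \in E_\tau\}$ depend only on $\mathcal{D}_i$; and $g_i \neq 0$ forces $i \in E_\tau$, so $\event{i \in E_\tau}\, g_i = g_i$.

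First I would bound each summand from below. For $i \in A_\tau$ we have $\phi(i,\{i\}) > 0$ (as observed right after \eqref{eq:decompositioncardinality}), so \Cref{cond:2} gives $\frac{\phi(i,A_\tau)}{\phi(i,\{i\})} \ge \phi(0, A_\tau \setminus \{i\})$, and substitutability together with $A_\tau \setminus \{i\} \subseteq E_\tau \setminus \{i\}$ gives $\phi(0, A_\tau \setminus \{i\}) \ge \phi(0, E_\tau \setminus \{i\})$. The key structural observation is that every item collected before the arrival of $i$ belongs to $E_\tau$ and is distinct from $i$, hence $A^{\prec i}_{\tau} \subseteq E_\tau \setminus \{i\}$; consequently, if $i \in E_\tau$ and $|E_\tau \setminus \{i\}| \le k-1$, then when $i$ arrives the collected set has size at most $k-1$ and $i$ is necessarily added --- and this holds for \emph{any} arrival order, adversarial or not. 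Thus $\event{i \in A_\tau} \ge \event{i \in E_\tau}\, \event{|E_\tau \setminus \{i\}| \le k-1}$. Combining these three inequalities, using $\event{i \in E_\tau}\, g_i = g_i$, summing over $i \in N$, and taking expectations gives
$$
\Ex{\mathcal{D}[N]}{\sum_{i \in A_{\tau}} \frac{\phi(i,A_{\tau})}{\phi(i,\{i\})}\, g_i} \;\ge\; \sum_{i \in N} \Ex{\mathcal{D}[N]}{\event{|E_\tau \setminus \{i\}| \le k-1}\, \phi(0, E_\tau \setminus \{i\})\, g_i} .
$$
Since $E_\tau \setminus \{i\}$, and hence both $\event{|E_\tau \setminus \{i\}| \le k-1}$ and $\phi(0, E_\tau \setminus \{i\})$, depend only on $\mathcal{D}[N \setminus \{i\}]$ (by \Cref{cond:1} and independence across items), while $g_i$ depends only on $\mathcal{D}_i$, each term factors into $\Ex{\mathcal{D}[N \setminus \{i\}]}{\event{|E_\tau \setminus \{i\}| \le k-1}\, \phi(0, E_\tau \setminus \{i\})} \cdot \Ex{\mathcal{D}_i}{g_i}$. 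I would then show the first factor is at least $\lambda$ by a pointwise comparison: since $E_\tau \setminus \{i\} \subseteq E_\tau$, substitutability gives $\phi(0, E_\tau \setminus \{i\}) \ge \phi(0, E_\tau)$, and $|E_\tau \setminus \{i\}| \le |E_\tau|$ gives $\event{|E_\tau \setminus \{i\}| \le k-1} \ge \event{|E_\tau| < k}$, so multiplying and taking expectations (using $A_\tau = E_\tau$ on $\{|E_\tau| < k\}$) shows the factor is at least $\Ex{\mathcal{D}[N]}{\event{|E_\tau| < k}\, \phi(0, E_\tau)} = \lambda$. Hence the left‑hand side is at least $\lambda \sum_{i \in N} \Ex{\mathcal{D}_i}{g_i} = \lambda\, \Ex{\mathcal{D}[N]}{\sum_{i \in N} g_i}$.

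It then remains to verify $\Ex{\mathcal{D}[N]}{\sum_{i \in N} g_i} \ge \Ex{\mathcal{D}[N]}{f(S^*)} - \tau$, a deterministic inequality per realization (then take expectations). Discarding the items of $S^*$ with $\phi(i,\{i\}) = 0$ (which contribute $0$ to $f(S^*)$), restrict the sum to $S^*$ and insert the factor $\frac{\phi(i,S^*)}{\phi(i,\{i\})} \le 1$ (substitutability); using $(x)^+ \ge x$,
$$
\sum_{i \in N} g_i \;\ge\; \sum_{i \in S^*} \frac{\phi(i,S^*)}{\phi(i,\{i\})}\, g_i \;\ge\; \sum_{i \in S^*} \left( \phi(i,S^*)(r_i - \tau)^+ - \frac{\phi(0,\{i\})}{k}\, \frac{\phi(i,S^*)}{\phi(i,\{i\})}\, \tau \right) .
$$
By \Cref{cond:3}, $\frac{\phi(0,\{i\})}{\phi(i,\{i\})} = \frac{\phi(0,S^*)}{\phi(i,S^*)}$, so the subtracted sum collapses to $\frac{|S^*|}{k}\, \phi(0,S^*)\, \tau \le \phi(0,S^*)\, \tau$ (as $|S^*| \le k$), while $\sum_{i \in S^*} \phi(i,S^*)(r_i - \tau)^+ \ge f(S^*) - \phi(S^*,S^*)\, \tau$; since $\phi(S^*,S^*) + \phi(0,S^*) = 1$, the right‑hand side is at least $f(S^*) - \tau$. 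Taking expectations and multiplying by $\lambda \ge 0$ finishes the proof.

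The step I expect to be the main obstacle is the structural bound $\event{i \in A_\tau} \ge \event{i \in E_\tau}\, \event{|E_\tau \setminus \{i\}| \le k-1}$: one must charge $g_i$ against the event that the \emph{remaining} threshold‑feasible items number at most $k-1$, rather than against the cruder event that the cardinality constraint never binds globally --- otherwise removing item $i$ shrinks $E_\tau$ by strictly less than one unit in expectation while the slack drops by a full unit, and the subsequent comparison with $\lambda$ would be off by one. Everything else is bookkeeping parallel to \Cref{lem:unconstrained:term1}.
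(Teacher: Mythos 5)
Your proof is correct and follows essentially the same route as the paper's: the same decomposition into $\zeta(i)=g_i$, the same decorrelation of $g_i$ from the (order-dependent) collected set by passing to $E_\tau\setminus\{i\}$ via the observation that $A^{\prec i}_\tau\subseteq E_\tau\setminus\{i\}$, the same comparison of the resulting factor with $\lambda$, and the same final charging against $S^*$ using substitutability and \Cref{cond:3}. Your packaging of the key step as the event inclusion $\event{i\in A_\tau}\geq\event{i\in E_\tau}\event{|E_\tau\setminus\{i\}|\leq k-1}$ is a slightly cleaner phrasing of the paper's pointwise inequality chain, but the underlying idea is identical.
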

\begin{proof}
Let us denote $\zeta(i) = \left( \phi(i,\{i\})(r_i - \tau)^+ -\frac{\phi(0,\{i\})}{k} \cdot \tau\right)^+$ and observe that for any item $i \in N \setminus E_{\tau}$ that does not satisfy the threshold condition of \Cref{alg:cardinality}, it must be $\zeta(i) = 0$. Recall that, by definition of the policy, an item is collected only if $i \in E_{\tau}$ and the already collected items by the time $i$ arrives are less than $k$, that is, $|A^{\prec i}_{\tau}| < k$. Using the above, we have 
\begin{align*}
\Ex{\mathcal{D}[N]}{\sum_{i \in A_{\tau}} \frac{\phi(i,A_{\tau})}{\phi(i,\{i\})} \left( \phi(i,\{i\})(r_i - \tau)^+ -\frac{\phi(0,\{i\})}{k} \tau\right)^+} &= \Ex{\mathcal{D}[N]}{\sum_{i \in N} \frac{\phi(i,A_\tau)}{\phi(i,\{i\})} \zeta(i) \event{|A^{\prec i}_{\tau}| < k}} \\
&\geq \Ex{\mathcal{D}[N]}{\sum_{i \in N} \phi\left(0,A_{\tau}\setminus \{i\}\right)  \zeta(i) \event{|A^{\prec i}_{\tau}| < k}},
\end{align*}
where the inequality follows from \Cref{cond:2}, which suggests that for any set $S$ and $i \in S$, it holds $\phi(i,S) \geq \phi(i,\{i\}) \cdot \phi(0,S\setminus \{i\})$. Note, again, that for any term that appears in the summation such that $\zeta(i) > 0$, it has to be that $\phi(i,\{i\})>0$ (hence, division by zero in the first equality is not an issue). 

The fact that the parameters of each item are realized independently combined with linearity of expectation allows us to rewrite the RHS of the above inequality as
\begin{align*}
\Ex{\mathcal{D}[N]}{\sum_{i \in N} \phi\left(0,A_{\tau}\setminus \{i\}\right) \zeta(i) \event{|A^{\prec i}_{\tau}| < k}} = \sum_{i \in N} \Ex{\mathcal{D}_i}{\zeta(i) \Ex{\mathcal{D}[N\setminus \{i\}]}{\phi\left(0,A_{\tau}\setminus \{i\}\right)  \event{|A^{\prec i}_{\tau}| < k}}}.
\end{align*}

Observe that the value of $\zeta(i)$ in the above expression only depends on the realized parameters of item $i$ (using \Cref{cond:1}), yet the random set $A_{\tau} \setminus \{i\}$ can be correlated with $\zeta(i)$ through the arrival order (which is decided after observing all the realizations); this in turn correlates the values of $\phi(0,A_{\tau} \setminus \{i\})$ and $\event{A^{\prec i}_{\tau}}$ with $\zeta(i)$ in the above expression. To overcome this issue and ``uncorrelate'' the above random quantities, we recall the definition of set $E_{\tau} = \{i \in N \mid \zeta(i)>0\}$, that is the subset of items that satisfy the threshold condition, and notice that the following inequality holds pointwise for any $i \in N$: 
\begin{align*}
\phi(0,A_{\tau} \setminus \{i\}) \event{|A^{\prec i}_{\tau}| < k} \geq \phi(0,A_{\tau} \setminus \{i\}) \event{|A_{\tau} \setminus \{i\}| < k} = \phi(0,E_{\tau} \setminus \{i\})\event{|E_{\tau} \setminus \{i\}| < k}.
\end{align*}
The above inequality follows from the fact that the event $|A_{\tau} \setminus \{i\}| < k$ implies that $|A^{\prec i}_{\tau}| < k$, independently of the order $i$ arrives. The equality follows from observing that whenever the event $|E_{\tau} \setminus \{i\}| < k$ holds true, then the sets $A_{\tau} \setminus \{i\}$ and $E_{\tau} \setminus \{i\}$ must coincide. In other words, if strictly less than $k$ items from $N \setminus \{i\}$ satisfy the threshold, then these items must be collected by the policy. By combining this with the fact that the set $E_{\tau}$ does not depend on the arrival order (and hence $E_{\tau} \setminus \{i\}$ is independent of $\zeta(i)$), we get
\begin{align*}
\sum_{i \in N} \Ex{\mathcal{D}_i}{\zeta(i) \Ex{\mathcal{D}[N\setminus \{i\}]}{\phi\left(0,A_{\tau}\setminus \{i\}\right)  \event{|A^{\prec i}_{\tau}| < k}}} &\geq \sum_{i \in N} \Ex{\mathcal{D}_i}{\zeta(i) \Ex{\mathcal{D}[N\setminus \{i\}]}{\phi\left(0,E_{\tau}\setminus \{i\}\right)  \event{|E_{\tau} \setminus \{i\}| < k}}} \\
&= \sum_{i \in N} \Ex{\mathcal{D}_i}{\zeta(i)} \Ex{\mathcal{D}[N\setminus \{i\}]}{\phi\left(0,E_{\tau}\setminus \{i\}\right)  \event{|E_{\tau} \setminus \{i\}| < k}}, 
\end{align*}
where, in the equality, we use the fact that $E_{\tau} \setminus \{i\}$ is independent of $i$ (and, hence, of $\zeta(i)$) and that the value of $\phi\left(0,E_{\tau}\setminus \{i\}\right)$ only depends on the parameters of the items in $E_{\tau}\setminus \{i\}$, by \Cref{cond:1}.

Notice that introducing an independent copy $(r_i,d_i) \sim \mathcal{D}_i$ for item $i$ does not affect the value of the right expectation above, again by \Cref{cond:1}. Thus, for any $i \in N$, we get
\begin{align*}
\Ex{\mathcal{D}[N\setminus \{i\}]}{\phi\left(0,E_{\tau}\setminus \{i\}\right)  \event{|E_{\tau} \setminus \{i\}| < k}} &= \Ex{\mathcal{D}[N]}{\phi\left(0,E_{\tau}\setminus \{i\}\right)  \event{|E_{\tau} \setminus \{i\}| < k}} \\
&\geq \Ex{\mathcal{D}[N]}{\phi\left(0,E_{\tau}\right)  \event{|E_{\tau}| < k}} \\
&= \Ex{\mathcal{D}[N]}{\phi\left(0,A_{\tau}\right)  \event{|A_{\tau}| < k}} \\
&= \lambda,
\end{align*}
where the first inequality follows from substitutability and the fact that $\{|E_\tau| < k \} \subseteq \{|E_\tau \setminus \{i\}| < k \}$. The last two equalities hold since whenever at most $k-1$ items pass the threshold the sets $E_\tau$ and $A_\tau$ must coincide, and by definition of $\lambda$. 

By combining the above elements and recalling that $\zeta(i) = \left( \phi(i,\{i\})(r_i - \tau)^+ -\frac{\phi(0,\{i\})}{k} \cdot \tau\right)^+ \geq 0$ for each $i \in N$ is independent of the realizations of the rest of the items, we reach that 
\begin{align*} 
\Ex{\mathcal{D}[N]}{\sum_{i \in A_{\tau}} \frac{\phi(i,A_{\tau})}{\phi(i,\{i\})} \left( \phi(i,\{i\})(r_i - \tau)^+ -\frac{\phi(0,\{i\})}{k} \tau\right)^+} \geq \lambda \cdot \Ex{\mathcal{D}[N]}{\sum_{i \in N} \left( \phi(i,\{i\})(r_i - \tau)^+ -\frac{\phi(0,\{i\})}{k} \cdot \tau\right)^+}.
\end{align*}

Let $S^*$ be the optimal cardinality-constrained solution and observe that, for any item $i \in S^*$, without loss of generality it holds $\phi(i,\{i\}) > 0$. Indeed, in the opposite case, substitutability implies that $\phi(i,S^*) = 0$ and, for any $i' \in S^* \setminus \{i\}$ it must be $\phi(i', S^* \setminus \{i\}) \geq \phi(i', S^*)$. Therefore, in the case where $\phi(i,\{i\}) = 0$, item $i$ can be removed from the optimal solution without damaging the total revenue. Hence, we have
\begin{align*} 
\Ex{\mathcal{D}[N]}{\sum_{i \in A_{\tau}} \frac{\phi(i,A_{\tau})}{\phi(i,\{i\})} \left( \phi(i,\{i\})(r_i - \tau)^+ -\frac{\phi(0,\{i\})}{k} \tau\right)^+} &\geq \lambda \Ex{\mathcal{D}[N]}{\sum_{i \in S^*} \left( \phi(i,\{i\})(r_i - \tau)^+ -\frac{\phi(0,\{i\})}{k} \tau\right)^+} \notag \\
&= \lambda \Ex{\mathcal{D}[N]}{\sum_{i \in S^*} \phi(i,\{i\})\left((r_i - \tau)^+ -\frac{\phi(0,\{i\})}{\phi(i,\{i\})} \frac{\tau}{k} \right)^+} \notag\\
&\geq \lambda \Ex{\mathcal{D}[N]}{\sum_{i \in S^*} \phi(i,S^*)\left((r_i - \tau)^+ -\frac{\phi(0,S^*)}{\phi(i,S^*)} \frac{\tau}{k}\right)^+} \notag\\
&\geq \lambda \Ex{\mathcal{D}[N]}{\sum_{i \in S^*} \phi(i,S^*) \cdot r_i - \left(\phi(S^*,S^*)  + \phi(0,S^*) \frac{|S^*|}{k}\right)\tau} \notag\\
&\geq \lambda \left( \Ex{\mathcal{D}[N]}{\sum_{i \in S^*} \phi(i,S^*) \cdot r_i} - \tau\right) \notag,
\end{align*}
where the second inequality above follows since $\phi(i,\{i\}) \geq \phi(i,S^*)$ for every $i \in S^*$, by substitutability, and by the fact that $\frac{\phi(0,\{i\})}{\phi(i,\{i\})} = \frac{\phi(0,S^*)}{\phi(i,S^*)}$, by \Cref{cond:3}. The penultimate inequality follows from using (twice) that $(x)^+ \geq x$, for any real $x$, and reordering terms. The last inequality follows from $|S^*| \leq k$, by feasibility of the optimal solution, and from the identity $\phi(0,S^*) + \phi(S^*,S^*) = 1$. This concludes the proof. 
\end{proof}

Let us now bound the expectation of the second term in the RHS of decomposition \eqref{eq:decompositioncardinality}: 

\begin{lemma}\label{lem:cardinality:term2}
For any demand model satisfying (the weak version of) \Cref{cond:3}, it holds
$$
\Ex{\mathcal{D}[N]}{\phi(A_{\tau},A_{\tau}) + \sum_{i \in A_{\tau}}\frac{\phi(0,\{i\})}{k}\frac{\phi(i,A_{\tau})}{\phi(i,\{i\})}} \geq 1 - \lambda.
$$
\end{lemma}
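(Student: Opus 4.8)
The plan is to prove the bound pointwise, for every realization of the item parameters, and then take expectations. The only ingredient from the hypotheses is the (weak version of) \Cref{cond:3}, together with the two elementary facts $\phi(0,S)+\phi(S,S)=1$ and the cardinality bound $|A_\tau|\le k$ that is built into the policy. Concretely, I would first apply the weak version of \Cref{cond:3} with $S=A_\tau$ to each collected item $i\in A_\tau$: it gives $\phi(0,A_\tau)\,\phi(i,\{i\})\le \phi(0,\{i\})\,\phi(i,A_\tau)$, and since every collected item satisfies $\phi(i,\{i\})>0$ (as already observed for decomposition \eqref{eq:decompositioncardinality}), dividing through yields
$$
\frac{\phi(0,\{i\})}{\phi(i,\{i\})}\,\phi(i,A_\tau)\ \ge\ \phi(0,A_\tau)\qquad\text{for every }i\in A_\tau .
$$

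Next I would sum this over $i\in A_\tau$ and divide by $k$, obtaining the pointwise estimate
$$
\sum_{i\in A_\tau}\frac{\phi(0,\{i\})}{k}\,\frac{\phi(i,A_\tau)}{\phi(i,\{i\})}\ \ge\ \frac{|A_\tau|}{k}\,\phi(0,A_\tau)\ \ge\ \event{|A_\tau|=k}\,\phi(0,A_\tau),
$$
where the last step uses $|A_\tau|\le k$, so that $|A_\tau|/k\ge \event{|A_\tau|=k}$ (with equality when $|A_\tau|=k$). Adding $\phi(A_\tau,A_\tau)$ to both sides and taking expectation over $\mathcal D[N]$ gives
$$
\Ex{\mathcal D[N]}{\phi(A_\tau,A_\tau)+\sum_{i\in A_\tau}\frac{\phi(0,\{i\})}{k}\,\frac{\phi(i,A_\tau)}{\phi(i,\{i\})}}\ \ge\ \Ex{\mathcal D[N]}{\phi(A_\tau,A_\tau)+\phi(0,A_\tau)\,\event{|A_\tau|=k}}\ =\ 1-\lambda,
$$
where the final equality is exactly the identity for $1-\lambda$ recorded just before the lemma statement (which itself follows by splitting $\phi(0,A_\tau)=\phi(0,A_\tau)\event{|A_\tau|<k}+\phi(0,A_\tau)\event{|A_\tau|=k}$ and using $\phi(0,A_\tau)+\phi(A_\tau,A_\tau)=1$).

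There is no real technical obstacle here — the argument is a single pointwise comparison — so the ``hard part'' is only bookkeeping: making sure the divisions by $\phi(i,\{i\})$ are legitimate and that degenerate items cause no trouble. As in the proof of \Cref{lem:cardinality:term1}, an item with $\phi(i,\{i\})=0$ cannot pass the threshold condition (it would force $\tfrac{\phi(0,\{i\})}{k}\tau>0$ against a zero left-hand side) and hence is never collected, so $\phi(i,\{i\})>0$ for all $i\in A_\tau$; and if $\phi(i,A_\tau)=0$ for some $i\in A_\tau$, the corresponding summand vanishes while the product form of the weak \Cref{cond:3} inequality forces $\phi(0,A_\tau)=0$, so the per-item inequality above still holds. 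With these points dispatched, nothing beyond the weak version of \Cref{cond:3} and $|A_\tau|\le k$ is used, which matches the hypothesis of the lemma.
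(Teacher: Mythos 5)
Your proof is correct and follows the same route as the paper's: apply the weak version of \Cref{cond:3} termwise to get $\frac{\phi(0,\{i\})}{\phi(i,\{i\})}\phi(i,A_\tau)\ge\phi(0,A_\tau)$, sum to obtain $\frac{|A_\tau|}{k}\phi(0,A_\tau)\ge\event{|A_\tau|=k}\phi(0,A_\tau)$, and invoke the identity defining $1-\lambda$. Your extra care about the divisions by $\phi(i,\{i\})$ is a sound (if slightly more explicit) version of the bookkeeping the paper handles when introducing decomposition \eqref{eq:decompositioncardinality}.
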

\begin{proof}
By using (the weak version of) \Cref{cond:3}, we can see that
$$
\sum_{i \in A_{\tau}}\frac{\phi(0,\{i\})}{k}\frac{\phi(i,A_{\tau})}{\phi(i,\{i\})} \geq \sum_{i \in A_{\tau}}\frac{\phi(0,A_{\tau})}{k} =  \phi(0,A_{\tau})  \frac{|A_{\tau}|}{k} \geq \phi(0,A_{\tau}) \event{|A_{\tau}| = k}.
$$
By combining the above inequality with the definition of $\lambda$ the proof follows directly.
\end{proof}

We can now complete the proof of \Cref{thm:cardinality}.

\begin{proof}[Proof of \Cref{thm:cardinality}]
By taking expectation over equation \eqref{eq:decompositioncardinality} and using \Cref{lem:cardinality:term1,lem:cardinality:term2}, we get that 
\begin{align*}
\Ex{\mathcal{D}[N]}{f(A_\tau)} \geq \lambda \cdot \left(\Ex{\mathcal{D}[N]}{f(S^*)} - \tau\right) + (1-\lambda) \cdot \tau.
\end{align*}
The proof follows by setting $\tau = \frac{1}{2}\Ex{\mathcal{D}[N]}{f(S^*)}$, where $S^*$ is the optimal cardinality-constrained solution.    
\end{proof}

\section{Hardness Results}
\label{sec:lowerbounds}
In this section, we provide lower bounds on the achievable competitive guarantees for the sequential assortment selection problem under MNL demand; clearly, the following hardness results characterize the whole class of choice models satisfying \Cref{cond:1,cond:2} (and \Cref{cond:3}). The proofs of this section can be found in \Cref{app:hardness}. We start by proving that the original Prophet Inequality problem \cite{KS77,KS78} can be reduced to the sequential assortment selection problem (under MNL demand):

\begin{restatable}{theorem}{restatelowerboundobservation} 
\label{thm:reduction}
Any $\rho$-competitive policy for the unconstrained or cardinality-constrained version of sequential assortment selection (under MNL demand) can be transformed into a $\rho$-competitive policy for the original (choose-one) Prophet Inequality problem.
\end{restatable}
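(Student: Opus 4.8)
The plan is to reduce an arbitrary instance of the choose-one Prophet Inequality — non-negative rewards $w_1,\dots,w_n$ drawn independently from known distributions $\mathcal{G}_1,\dots,\mathcal{G}_n$ — to a sequential assortment instance under MNL demand in which each item's attraction is a steeply decreasing function of its realized revenue. Fix a large parameter $K>0$, set the outside attraction to $v_0=0$, and let item $i$ have the (correlated) realization $(r_i,v_i)=\left(w_i,\,e^{-Kw_i}\right)$, with the push-forward of $\mathcal{G}_i$ as its distribution $\mathcal{D}_i$ (independent across $i$); keep the feasibility constraint unchanged (either no constraint, or cardinality at most $k$). Since $v_0=0$, for every nonempty $S$ the total revenue $f(S)=\sum_{i\in S}\frac{v_i}{v(S)}r_i$ is a convex combination of $\{w_i\}_{i\in S}$, so $\min_{i\in S}w_i\le f(S)\le\max_{i\in S}w_i$; in particular $f(\{i\})=w_i$, hence for every realization the offline optimum equals $w_{\max}$ — attained by any singleton $\{i^{*}\}$ with $w_{i^{*}}=w_{\max}$ — the cardinality constraint never binds at the optimum, and therefore $\E{\max_{S}f(S)}=\E{w_{\max}}$. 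The point of the exponential attractions is that the low-revenue items of any set dominate its weighted average: writing $i^{*}\in S$ for an index attaining $\min_{i\in S}w_i$, and using $v(S)\ge v_{i^{*}}$ together with $\max_{u\ge0}u e^{-Ku}=\frac1{Ke}$,
\[
f(S)-\min_{i\in S}w_i
=\sum_{i\in S}\frac{v_i}{v(S)}\left(w_i-w_{i^{*}}\right)
\le\sum_{i\in S}\frac{v_i}{v_{i^{*}}}\left(w_i-w_{i^{*}}\right)
=\sum_{i\in S}\left(w_i-w_{i^{*}}\right)e^{-K(w_i-w_{i^{*}})}
\le\frac{n}{Ke}.
\]

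Given a $\rho$-competitive assortment policy $\mathcal{P}$, the induced Prophet policy $\mathcal{Q}$ simulates $\mathcal{P}$ on the constructed sequence, in the same arrival order, and commits to the first item that $\mathcal{P}$ accepts (collecting nothing if $\mathcal{P}$ accepts nothing). This is a valid online choose-one policy, and since for every adversarial order $\sigma$ its whole trajectory is determined by that of $\mathcal{P}$ under $\sigma$, it inherits $\mathcal{P}$'s guarantee against the same (omniscient, fully-adaptive) adversary. Let $A$ be the set accepted by $\mathcal{P}$ and, when $A\ne\emptyset$, let $i_1\in A$ be its first accepted element. By the displayed bound, $f(A)\le\min_{j\in A}w_j+\frac{n}{Ke}\le w_{i_1}+\frac{n}{Ke}$, so the revenue collected by $\mathcal{Q}$ is at least $f(A)-\frac{n}{Ke}$ pointwise (both sides equal $0$ when $A=\emptyset$, since $f(\emptyset)=0$). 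Taking expectations,
\[
\E{\text{revenue of }\mathcal{Q}}\;\ge\;\E{f(A)}-\frac{n}{Ke}\;\ge\;\frac1\rho\,\E{\max_{S}f(S)}-\frac{n}{Ke}\;=\;\frac1\rho\,\E{w_{\max}}-\frac{n}{Ke}.
\]
Letting $K\to\infty$ sends the slack to $0$, so the infimum over $K$ of the competitive ratios achieved by $\mathcal{Q}$ on this Prophet instance is at most $\rho$; in particular, any lower bound on the best achievable competitive ratio for choose-one Prophet transfers to both the unconstrained and the cardinality-constrained assortment problems, which is exactly what \Cref{sec:lowerbounds} needs.

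The main obstacle is this additive slack $\frac{n}{Ke}$: because an MNL assortment value is always a \emph{strict} convex combination of the revenues, $f(A)$ can never equal $\min_{j\in A}w_j$ for a single fixed instance, so no one construction makes $\mathcal{Q}$ \emph{exactly} $\rho$-competitive — one has to either pass to the limit $K\to\infty$ as above, or sharpen the extraction step: if $\mathcal{Q}$ instead commits to an element of $A$ sampled with probability proportional to $v_i$, then $\E{\text{revenue of }\mathcal{Q}\mid A}=f(A)$ exactly, but one must then perform this weight-proportional sampling online and irrevocably, which is impossible in general and which the concentration of the weights $e^{-Kw_i}$ on the lowest-revenue elements of $A$ only lets us approximate, reintroducing the same $\frac{n}{Ke}$ error. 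The only genuine edge case, $A=\emptyset$ (where $\mathcal{Q}$ earns $0$), is handled trivially.
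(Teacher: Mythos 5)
Your proposal is correct and follows essentially the same reduction as the paper: set the outside attraction to zero and make each item's attraction a steeply decreasing function of its realized revenue (you use $v_i = e^{-Kw_i}$; the paper uses $v_i = \delta^{-1/r_i}$), so that the total revenue of any collected assortment collapses onto its minimum-revenue item, and then have the prophet policy commit to the first item the assortment policy accepts. Your explicit uniform slack $\frac{n}{Ke}$ is a slightly cleaner way to handle the limiting step than the paper's pointwise continuity argument (which takes an infimum of $\delta(\theta)$ over realizations), but the construction and the extraction of a single reward are the same.
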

\begin{proof}[Proof sketch.]
Given an instance of the Prophet Inequality problem -- where the decision-maker chooses one among $n$ random rewards, $w_1, \dots, w_{n}$ -- we construct an ``equivalent'' instance of sequential assortment selection with the property that, whenever a subset $A$ of items is collected, then the total revenue (effectively) becomes $\min_{i \in A} w_i$. Under MNL demand, this can be achieved by setting "infinite" attractions to all items (excluding the outside option), in a way that the attraction of an item approaches infinity ``faster'' than that of any other item with higher revenue (reward). In this way, the collected item with smallest revenue (effectively) cannibalizes the demand of every other item in the assortment (including the outside option). The reduction then follows from simulating any (unconstrained or cardinality-constrained) $\rho$-competitive policy on this fictitious instance and mirroring its decisions up to the collection of the first item; after this point, the rest of the arriving items are rejected.
\end{proof}

By the above discussion it already becomes clear that both the unconstrained and the cardinality-constrained version of our problem inherit the hardness of the original Prophet Inequality setting \cite{HK92}:

\begin{restatable}{corollary}{restatelowerboundsimple} \label{thm:lowerboundsimple}
For any $\epsilon > 0$, there exists no policy with competitive guarantee smaller than $2-\epsilon$ for neither the unconstrained nor the cardinality/knapsack-constrained sequential assortment selection problem (under MNL demand).  
\end{restatable}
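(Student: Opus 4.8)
The plan is to obtain \Cref{thm:lowerboundsimple} as an immediate consequence of \Cref{thm:reduction} together with the classical fact that the factor-$2$ guarantee is optimal for the original (choose-one) Prophet Inequality \cite{KS77,KS78,HK92}. Concretely, I would first recall the standard hard instance witnessing the latter: fix $\delta \in (0,1)$, let $w_1 \equiv 1$ be deterministic, and let $w_2$ equal $1/\delta$ with probability $\delta$ and $0$ otherwise. Then $\E{w_{\max}} = \delta \cdot (1/\delta) + (1-\delta)\cdot 1 = 2-\delta$, whereas any (possibly randomized) stopping rule collects expected reward at most $1$: upon observing $w_1 = 1$ it either stops, earning $1$, or continues and faces $\E{w_2} = 1$. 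Hence no policy for the choose-one Prophet Inequality is $\rho$-competitive for any $\rho < 2$.

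Next I would argue by contradiction. Suppose some online policy $\pi$ is $\rho$-competitive for the unconstrained (respectively, cardinality-constrained) version of sequential assortment selection under MNL demand, with $\rho < 2$; set $\epsilon := 2-\rho > 0$ and $\delta := \epsilon/2 \in (0,1)$. By \Cref{thm:reduction}, $\pi$ can be transformed into a $\rho$-competitive policy for the choose-one Prophet Inequality. Running this transformed policy on the instance of the previous paragraph would give $\E{\alg} \geq (1/\rho)\cdot\E{w_{\max}} = (2-\delta)/(2-\epsilon) > 1$ (the last inequality holding precisely because $\epsilon > 0$), contradicting the fact that no policy exceeds expected reward $1$ on that instance. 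Therefore every policy for the unconstrained or cardinality-constrained problem has competitive guarantee at least $2$; in particular, for every $\epsilon > 0$ none has guarantee smaller than $2-\epsilon$.

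To cover the knapsack-constrained version it suffices to observe that knapsack constraints subsume cardinality constraints: given a cardinality bound $k$, assign every item the deterministic size $1$ and set the budget equal to $k$, so that the feasible assortments are exactly those of size at most $k$. Any $\rho$-competitive policy for the knapsack setting thus restricts, on such instances, to a $\rho$-competitive policy for the cardinality-constrained setting, and the argument of the previous paragraph applies verbatim.

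I do not anticipate any real obstacle here: \Cref{thm:lowerboundsimple} is essentially a repackaging of \Cref{thm:reduction} with a textbook lower-bound instance, and the only points requiring (minor) care are threading the constants $\epsilon,\delta$ correctly through the reduction and recording that the knapsack constraint captures cardinality as a special case. All of the genuine content lies in \Cref{thm:reduction} itself.
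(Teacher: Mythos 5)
Your proposal is correct and matches the paper's intent exactly: \Cref{thm:lowerboundsimple} is presented there as an immediate consequence of \Cref{thm:reduction} combined with the known tightness of the factor $2$ for the choose-one Prophet Inequality \cite{HK92}, which is precisely your argument (the paper simply omits the routine details you spell out, including the standard two-reward hard instance and the observation that knapsack constraints subsume cardinality via unit sizes).
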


In the following result, we provide a refined lower bound for the unconstrained setting, by parameterizing all possible instances using as a metric the expected probability of purchasing in the optimal solution.

\begin{restatable}{theorem}{restatelowerboundstrong} \label{thm:lowerboundstrong}
Let $\gamma = \E{\psi(S^*)} \in [0,1]$ be the expected probability of purchasing (any item) in the optimal unconstrained assortment for some substitutable demand model satisfying \Cref{cond:1,cond:2} (that includes MNL). Then, for any $\epsilon > 0$, there exists no policy for the unconstrained sequential assortment selection problem under this model with competitive guarantee smaller than $1 + \gamma - \epsilon$.
\end{restatable}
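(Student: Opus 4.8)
The plan is to build, for every target value $\gamma\in[0,1]$ and every $\epsilon>0$, an instance under the MNL model (which lies in the stated class by \Cref{cond:1,cond:2}) with $\E{\psi(S^*)}=\gamma$ on which no online policy collects more than a $\frac{1}{1+\gamma-\epsilon}$ fraction of $\E{f(S^*)}$. The instance is a ``diluted'' version of the classical worst-case Prophet Inequality instance. I take $n$ items sharing a common attraction $v$, together with an outside attraction $v_0$ chosen so that $\frac{v}{v_0+v}=\gamma$; writing $c=\frac{v_0}{v}=\frac{1-\gamma}{\gamma}$, the revenue of an assortment then collapses to $f(A)=\frac{\sum_{i\in A}r_i}{c+|A|}$, and, crucially, the purchase probability of any \emph{single} item is exactly $\gamma$. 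The second ingredient: item $i$ has revenue $0$ or $M^i$ for a fixed base $M\ge 1/\gamma$ (taken large), so that successive revenue levels differ by at least the factor $1/\gamma$; a short nesting-by-fare-order computation then shows that the optimal assortment is always the \emph{single} highest-revenue item present. Consequently $\psi(S^*)\equiv\gamma$, and $\E{\psi(S^*)}=\gamma$ (I make item $1$ deterministically present so $S^*\neq\emptyset$), while $\E{f(S^*)}=\gamma\cdot\E{M^{I^*}}$ with $I^*$ the largest present index.

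I then randomize and order the input. Item $i$ is present (revenue $M^i$) independently with probability $q_i$, where the $q_i$ decrease geometrically so that the potential ``jackpots'' at successively higher scales contribute comparable amounts to $\E{f(S^*)}$ --- the same balancing used in the classical lower bound, tuned so that at each step an optimal policy is nearly indifferent between committing and waiting. The items are revealed in increasing order $1,2,\dots,n$ (a fixed, hence oblivious, arrival order), so the decision-maker faces an escalating sequence of ``take-it-or-gamble'' decisions.

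The heart of the argument is an upper bound on an \emph{arbitrary} online policy, and this is where things depart from the classical case. Accepting an item is not terminal, so a policy may keep collecting; the lever that yields the sharper constant $1+\gamma$ (instead of $2$) is that every extra item held when the eventual jackpot arrives shrinks the final revenue: an assortment of size $m$ topped by revenue $M^k$ has value at most $\frac{M^k}{(c+m)(1-1/M)}$, which for $m\ge 2$ is at most $(1+o(1))\cdot\frac{\gamma M^k}{1+\gamma}$, using the identity $(c+1)(1+\gamma)=c+2$. Conversely, the choice of the $q_i$ is made precisely so that no policy can simultaneously (i) avoid holding a second item by the time the largest present item arrives and (ii) avoid surrendering a constant fraction of $\E{f(S^*)}$ to the event that no large item ever appears (or the one it committed to is the largest): an over-cautious policy loses heavily on that ``miss'' event, while an eager one has collected at least two items on the escalation events, where its revenue is then $\le(1+o(1))\cdot\frac{1}{1+\gamma}f(S^*)$. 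I would make this rigorous by phrasing the policy's optimization as a one-dimensional Bellman recursion in the commitment level reached so far and showing that its fixed point certifies total expected revenue at most $\frac{1}{1+\gamma-\epsilon}\E{f(S^*)}$ once $M$ is large, the tail masses are small, and $n$ is large.

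The step I expect to be the genuine obstacle is exactly this last one --- ruling out a clever ``hybrid'' stopping-and-collecting rule --- because it is where the precise calibration of the $q_i$ must do the work, echoing the delicate balancing in the classical Prophet lower bound. Two consistency checks guide the construction: at $\gamma=1$ one has $v_0=0$, $c=0$, $f(A)$ becomes the average revenue in $A$, and the bound tends to $2$, agreeing with \Cref{thm:reduction} and \Cref{thm:lowerboundsimple} (which obtain the same bound via the reduction to the classical problem); at $\gamma=0$ the claimed bound $1+\gamma=1$ is vacuous, consistent with \Cref{alg:unconstrained} being $1$-competitive in that regime. Finally, the theorem is stated for ``some'' model in the class precisely so that we are free to use MNL throughout, and one checks that all the $o(1)$ error terms are harmless in the triple limit ($M\to\infty$, tail masses $\to 0$, $n\to\infty$), yielding competitive ratio at least $1+\gamma-\epsilon$.
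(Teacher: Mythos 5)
Your construction is a genuinely different (and far more elaborate) route than the paper's, and it is not complete: the step you yourself flag as ``the genuine obstacle'' --- the upper bound on an \emph{arbitrary} online policy via the calibration of the $q_i$ and the Bellman recursion --- is exactly the content of the theorem, and it is only sketched. Until that dichotomy (``either you hold a second item when the jackpot arrives, or you forfeit a constant fraction on the miss event'') is proved against all hybrid collect-and-continue rules, there is no lower bound. The difficulty is not hypothetical: with equal attractions your instance truncated to two items provably fails. Take item $1$ deterministic with revenue $M$ and item $2$ present with probability $1/M$ with revenue $M^2$; the policy that collects item $1$ and then also collects item $2$ if present earns roughly $\gamma M\cdot\frac{2+\gamma}{1+\gamma}$ against $\E{f(S^*)}\approx 2\gamma M$, a ratio of $\frac{2+\gamma}{2(1+\gamma)}>\frac{1}{1+\gamma}$. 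So the tightness must come entirely from the $n$-level escalation, and whether the geometric calibration actually drives every policy down to $\frac{1}{1+\gamma}$ (rather than to some intermediate constant) is precisely what is left unverified.

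The paper avoids all of this with a two-item instance that exploits \emph{asymmetric attractions} rather than the cardinality dilution $\frac{c+1}{c+|A|}$ you rely on: item $i_1$ has revenue $1$ and an enormous attraction $v_1=1/\delta$, item $i_2$ has attraction $1$ and revenue $\frac{1}{\delta}(v_0+1)$ with probability $\delta$, and $v_0=\frac{1-\kappa}{\kappa}v_1$. Because $v_1\gg v_2$, a policy that hedges by taking $i_1$ finds that $i_1$ cannibalizes essentially all of the jackpot's demand, so $f(\{i_1,i_2\})\to 1$ rather than $\to\frac{1}{1+\gamma}\cdot\frac{1}{\delta}$; both available policies then converge to expected value $1$ while $\E{f(S^*)}\to 1+\kappa$ and $\gamma\to\kappa$, and the exhaustive case analysis over policies is trivial since there are only two meaningful decisions. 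If you want to salvage your approach, you would need to either carry out the Bellman analysis in the triple limit you describe, or import the paper's asymmetric-attraction trick to collapse the instance back to two items.
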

\begin{proof}[Proof sketch.]
The lower bound is based on an instance with MNL demand and two arriving items $i_1$ and $i_2$. Fixing some $\delta \in (0,1)$ and $\kappa \in (0,1)$, the revenue and attraction of item $i_1$ become $r_1 = 1$ and $v_1 = \frac{1}{\delta}$. The attraction of the second item is $v_2 = 1$ and its revenue is either $r_2 = r = \frac{1}{\delta} \cdot \left(\frac{v_0}{v_2} + 1\right)$, with probability $\delta$, and $r_2 = 0$, otherwise. The attraction of the outside option satisfies $v_0 = \frac{1-\kappa}{\kappa} v_1$.
Let $S^*$ denote the optimal unconstrained solution and $A$ the assortment collected by the optimal policy for this instance. By taking the limit $\delta \to 0$, one can verify that $\E{f(S^*)} \to 1 + \kappa$, $\gamma \to \kappa$, and $\E{f(A)} \to 1$; then the statement follows by continuity.
\end{proof}

Notice that the above implies that \Cref{alg:unconstrained} yields the optimal guarantee for any value of $\gamma$. Interestingly, by observing the proof of \Cref{thm:lowerboundstrong}, one can see that the lower bound holds even for instances where either the revenues or the attractions of all items are deterministic.

\section{Extension to Knapsack Constraints with Random Sizes} \label{sec:knapsack}
In this section, we consider a more general and practical setting where any assortment is subject to knapsack constraints. Specifically, every item $i \in N$ is associated with a non-negative size $b_i$ and the total size of any feasible subset $S$ -- denoted by $b(S) = \sum_{i \in S} b_i$ -- cannot exceed some given (deterministic) budget $B$. In our sequential stochastic setting, we allow the size of each item $i$ to be random and arbitrarily correlated with its revenue and demand parameter, namely, the parameters $(r_i,d_i,b_i)$ are drawn from a joint distribution $\mathcal{D}_i$, independently of other items. 

For the above setting, by extending our techniques in \Cref{sec:cardinality}, we first provide a threshold-based policy for the practical setting where the maximum possible size of any item cannot exceed a specific fraction $\beta$ of the given budget, namely, $\max_{i \in {N} }b_i \leq \beta B$; without loss of generality we assume that $b_i \leq B$ for every $i \in N$ and, hence, $\beta \in [0,1]$. 
Our policy in this case can be thought of as a generalization of \Cref{alg:cardinality} for the case of cardinality constraints: every item $i \in N$ with $(r_i,d_i,b_i) \sim \mathcal{D}_i$ is accepted if $\phi(i,\{i\}) (r_i-\tau) \geq \frac{b_i \phi(0,\{i\})}{B}  \cdot \tau$ and accepting it would not violate the knapsack constraints. 

\begin{algorithm} \label{alg:budget}
Compute threshold $\tau = \frac{1}{2-\beta} \cdot \Ex{}{f(S^*)}$ and set ${A}_{\tau} \gets \emptyset$. For each arriving item $i \in N$, if $\phi(i,i) (r_i-\tau) \geq \frac{b_i  \phi(0,i)}{B}  \cdot \tau$ and $b(A_{\tau}) + b_i \leq B$, then add $i$ to $A_{\tau}$. 
\end{algorithm}

As we show in the next result, the competitive guarantee of \Cref{alg:budget} depends on the factor $\beta$. The proofs of this section can be found in \Cref{app:omitted}.

\begin{restatable}{theorem}{restatebudget} \label{thm:budget}
For the case of MNL demand (or any substitutable choice model satisfying \Cref{cond:1,cond:2,cond:3}) and any instance where $\max_{i \in N} \frac{b_i}{B} \leq \beta$ almost surely, \Cref{alg:budget} is $\frac{2-\beta}{1-\beta}$-competitive for the knapsack-constrained setting of sequential assortment selection. Moreover, if a demand model satisfies only the weak version of \Cref{cond:3} (together with the rest of our conditions), then by setting $\tau = \frac{1}{3-\beta} \E{f(S^*)}$ \Cref{alg:budget} becomes $\frac{3-\beta}{1-\beta}$-competitive.
\end{restatable}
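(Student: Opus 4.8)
The plan is to mirror the proof of \Cref{thm:cardinality}, replacing the cardinality budget $k$ by the knapsack budget $B$ and each item's uniform ``weight'' $\tfrac1k$ by $\tfrac{b_i}{B}$; the new ingredient is that the almost-sure bound $b_i\leq\beta B$ is used to decouple the (adversarial, realization-dependent) arrival order from each item's contribution, and this is exactly where $\beta$ enters the guarantee. Fix a threshold $\tau>0$, let $E_\tau=\{i\in N:\phi(i,\{i\})(r_i-\tau)^+\geq\tfrac{b_i\phi(0,\{i\})}{B}\tau\}$ be the (realization-only) set of threshold-feasible items, and let $A_\tau\subseteq E_\tau$ be the set collected by \Cref{alg:budget}. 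Exactly as in \eqref{eq:decompositioncardinality}, but with $\tfrac{\phi(0,\{i\})}{k}$ replaced everywhere by $\tfrac{b_i\phi(0,\{i\})}{B}$, I would write $f(A_\tau)$ as a ``residual'' term $\sum_{i\in A_\tau}\tfrac{\phi(i,A_\tau)}{\phi(i,\{i\})}\,\zeta(i)$ with $\zeta(i):=\bigl(\phi(i,\{i\})(r_i-\tau)^+-\tfrac{b_i\phi(0,\{i\})}{B}\tau\bigr)^+$, plus a ``base'' term $\bigl(\phi(A_\tau,A_\tau)+\sum_{i\in A_\tau}\tfrac{b_i\phi(0,\{i\})}{B}\tfrac{\phi(i,A_\tau)}{\phi(i,\{i\})}\bigr)\tau$, using that $r_i\geq\tau$ and $\phi(i,\{i\})>0$ for every $i\in A_\tau$. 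As in \Cref{sec:cardinality:proof}, the deterministic threshold structure means the worst-case order is a function of the realizations, so the almighty adversary gains nothing from adaptivity; I take it w.l.o.g.\ unique. I then set $\lambda:=\Ex{\mathcal{D}[N]}{\phi(0,A_\tau)\event{b(E_\tau)\leq(1-\beta)B}}$ and note that on the event $\{b(E_\tau)\leq(1-\beta)B\}$ the whole of $E_\tau$ fits and is collected, so $A_\tau=E_\tau$ there.

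The core step is the analogue of \Cref{lem:cardinality:term1}: $\Ex{}{\sum_{i\in A_\tau}\tfrac{\phi(i,A_\tau)}{\phi(i,\{i\})}\zeta(i)}\geq\lambda\bigl(\Ex{}{f(S^*)}-\tau\bigr)$. I would (i) rewrite the sum over all $i\in N$ with the admissibility indicator $\event{b(A^{\prec i}_\tau)+b_i\leq B}$; (ii) use \Cref{cond:2} to lower bound $\tfrac{\phi(i,A_\tau)}{\phi(i,\{i\})}$ by $\phi(0,A_\tau\setminus\{i\})$; (iii) perform the key \emph{de-correlation}, namely the pointwise inequality $\phi(0,A_\tau\setminus\{i\})\,\event{b(A^{\prec i}_\tau)+b_i\leq B}\geq\phi(0,E_\tau\setminus\{i\})\,\event{b(E_\tau\setminus\{i\})\leq(1-\beta)B}$, which holds because on the right-hand event $b(E_\tau)\leq b(E_\tau\setminus\{i\})+b_i\leq(1-\beta)B+\beta B=B$, so $A_\tau=E_\tau$ there and $A^{\prec i}_\tau\subseteq A_\tau\setminus\{i\}=E_\tau\setminus\{i\}$; (iv) note $\zeta(i)$ depends only on item $i$ by \Cref{cond:1} while $E_\tau\setminus\{i\}$ does not, so an independent-copy argument factorizes the expectation and substitutability gives $\Ex{}{\phi(0,E_\tau\setminus\{i\})\event{b(E_\tau\setminus\{i\})\leq(1-\beta)B}}\geq\lambda$; (v) drop all $i\notin S^*$ (every term is nonnegative), unfold $\zeta$ on $S^*$ via $\phi(i,\{i\})\geq\phi(i,S^*)$ and \Cref{cond:3} ($\tfrac{\phi(0,\{i\})}{\phi(i,\{i\})}=\tfrac{\phi(0,S^*)}{\phi(i,S^*)}$), and finish with $(x)^+\geq x$, $b(S^*)\leq B$ and $\phi(0,S^*)+\phi(S^*,S^*)=1$.

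For the base term I would prove the analogue of \Cref{lem:cardinality:term2}, $\Ex{}{\phi(A_\tau,A_\tau)+\sum_{i\in A_\tau}\tfrac{b_i\phi(0,\{i\})}{B}\tfrac{\phi(i,A_\tau)}{\phi(i,\{i\})}}\geq(1-\beta)(1-\lambda)$: the weak version of \Cref{cond:3} gives $\sum_{i\in A_\tau}\tfrac{b_i\phi(0,\{i\})}{B}\tfrac{\phi(i,A_\tau)}{\phi(i,\{i\})}\geq\phi(0,A_\tau)\tfrac{b(A_\tau)}{B}$, and whenever $b(E_\tau)>(1-\beta)B$ one also has $b(A_\tau)>(1-\beta)B$ (either $A_\tau=E_\tau$, or some threshold-feasible item was rejected for lack of room, and since its size is $\leq\beta B$ the load before the rejection already exceeded $(1-\beta)B$); hence the bracketed coefficient is pointwise $\geq(1-\beta)\bigl(\phi(A_\tau,A_\tau)+\phi(0,A_\tau)\event{b(E_\tau)>(1-\beta)B}\bigr)$, and taking expectations with $\phi(0,A_\tau)+\phi(A_\tau,A_\tau)=1$ yields $(1-\beta)(1-\lambda)$. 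Combining the two bounds, $\Ex{}{f(A_\tau)}\geq\lambda(\Ex{}{f(S^*)}-\tau)+(1-\beta)(1-\lambda)\tau$; taking $\tau=\tfrac{1}{2-\beta}\Ex{}{f(S^*)}$ makes the $\lambda$-terms cancel and gives $\Ex{}{f(A_\tau)}\geq\tfrac{1-\beta}{2-\beta}\Ex{}{f(S^*)}$, i.e.\ $\tfrac{2-\beta}{1-\beta}$-competitiveness. Under only the weak version of \Cref{cond:3}, step (v) can no longer use the equality $\tfrac{\phi(0,\{i\})}{\phi(i,\{i\})}=\tfrac{\phi(0,S^*)}{\phi(i,S^*)}$; following the adaptation in the proof of \Cref{thm:cardinality3}, a looser comparison to $S^*$ (costing an extra additive $\tau$ in the residual lower bound) and re-optimizing at $\tau=\tfrac{1}{3-\beta}\Ex{}{f(S^*)}$ yields the claimed $\tfrac{3-\beta}{1-\beta}$ guarantee.

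I expect the de-correlation in step (iii) to be the main obstacle. The collected set $A_\tau$, and in particular the load $b(A^{\prec i}_\tau)$ that decides whether item $i$ is admitted, are entangled with the realization of $i$ through the adversarial order, so the expectation cannot be split naively. The fix is to replace the data-dependent admissibility indicator by the order-independent event $\{b(E_\tau\setminus\{i\})\leq(1-\beta)B\}$; this replacement is valid only because no single item can use more than a $\beta$-fraction of the budget, which is precisely the hypothesis $\max_{i\in N}\tfrac{b_i}{B}\leq\beta$ and the source of the $\beta$-dependence in the ratio.
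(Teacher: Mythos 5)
Your proposal is correct and follows essentially the same route as the paper's proof: the identical decomposition of $f(A_\tau)$, the same de-correlation via the event $\{b(E_\tau\setminus\{i\})\leq(1-\beta)B\}$ (which is exactly where $\beta$ enters), the same two lemmas bounding the residual and base terms by $\lambda(\E{f(S^*)}-\tau)$ and $(1-\beta)(1-\lambda)$, and the same choice of thresholds. The only differences are cosmetic — you state the event defining $\lambda$ on $E_\tau$ rather than $A_\tau$ and supply the short argument that $b(E_\tau)>(1-\beta)B$ forces $b(A_\tau)>(1-\beta)B$, which if anything tidies a strict-versus-non-strict inconsistency in the paper's own write-up.
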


We observe that the guarantees of \Cref{alg:budget} become meaningless when $\beta$ approaches $1$; to get a constant-competitive policy (for any $\beta$), we leverage the subadditivity of the assortment objective under substitutable demands and combine it with a fairly standard trick. Specifically, consider partitioning any realized instance into two disjoint subsets $Q = \{i \in {N} \mid b_i \leq \frac{B}{2}\}$ and $V = \{i \in N \mid b_i > \frac{B}{2}\}$, and let us refer to them as the ``small'' and ``large'' items, respectively. Let us denote by $g(S) = \max_{T \subseteq S, b(T) \leq B} f(T) $ the optimal feasible assortment contained in a subset $S$ of the items for a given realization.
Using the fact that the function $f(\cdot)$ is subadditive for any substitutable choice model, for any feasible assortment $A$, we have 
\begin{align} \label{eq:budget:combine}
f(A) \leq f(A\cap V) + f(A \cap Q) \leq g(V) + g(Q) \leq 2\max(g(V), g(Q)).  
\end{align}
The above inequality suggests that either the maximum between the optimal total revenue restricted to the items in $Q$ and that in $V$ is a $2$-approximation to the optimal total revenue (over all $N$). Observe that, in our sequential setting, one compete against $\E{g(Q)}$ by applying \Cref{alg:budget} for $\beta = \frac{1}{2}$ only on the items in $Q$. Similarly, since $b_i > \frac{B}{2}$ for any $i \in V$, any feasible solution contained in $V$ must be a singleton, since, otherwise, the budget constraint would be violated. In this case, using \Cref{alg:cardinality} for $k=1$ on the items in $V$, gives a $2$-competitive policy against $\E{g(V)}$. 

The following randomized policy naturally interpolates between the above two cases: 

\begin{algorithm} \label{alg:budgetfull}
Let $g(Q)$ and $g(V)$ denote the total revenue of the maximum feasible assortment restricted to items with $b_i \leq \frac{B}{2}$ and $b_i > \frac{B}{2}$, respectively. Flip a (biased) coin $C \in \{{H}, {T}\}$ with probability $\Pro{C = {H}} = \frac{3}{5}$. If $C = H$, then run \Cref{alg:budget} with threshold $\tau = \frac{2}{3} \E{g(Q)}$, automatically rejecting any $i \in N$ with $b_i > \frac{B}{2}$. In case $C = T$, run \Cref{alg:cardinality} with $k = 1$ and $\tau = \frac{1}{2} \E{g(V)}$, automatically rejecting any item with $b_i \leq \frac{B}{2}$. 
\end{algorithm}

In the next result, we provide a bound to the competitive guarantee of \Cref{alg:budgetfull}: 

\begin{restatable}{theorem}{restatebudgetfull} \label{thm:budgetfull}
For the case of MNL demand (or any substitutable choice model satisfying \Cref{cond:1,cond:2,cond:3}), \Cref{alg:budgetfull} is $5$-competitive for the sequentual assortment selection problem with knapsack constraints. Moreover, an adaptation of \Cref{alg:budgetfull} becomes $8$-competitive for substitutable models satisfying only the weak version of \Cref{cond:3}, together with \Cref{cond:1,cond:2}.
\end{restatable}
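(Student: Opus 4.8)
The plan is to decompose the analysis along the coin flip $C$ of \Cref{alg:budgetfull} and to invoke the two sub-routine guarantees already in hand. Recall the partition of any realized instance into the ``small'' items $Q = \{i \in N \mid b_i \le B/2\}$ and the ``large'' items $V = \{i \in N \mid b_i > B/2\}$, together with the pointwise bound \eqref{eq:budget:combine}, which (applied to $A = S^*$) gives $f(S^*) \le g(Q) + g(V)$, and hence $\E{f(S^*)} \le \E{g(Q)} + \E{g(V)}$ after taking expectations.

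First I would observe that, conditioned on $C = H$, \Cref{alg:budgetfull} is exactly \Cref{alg:budget} run on the sub-instance induced by $Q$, every item of which satisfies $b_i \le B/2$; so the hypothesis of \Cref{thm:budget} holds with $\beta = \tfrac12$, the optimal knapsack-feasible assortment of this sub-instance has value $g(Q)$ by definition, the prescribed threshold $\tau = \tfrac{1}{2-\beta}\E{g(Q)} = \tfrac23\E{g(Q)}$ matches, and \Cref{thm:budget} yields $\E{f(A)\mid C=H} \ge \tfrac{1-\beta}{2-\beta}\E{g(Q)} = \tfrac13\E{g(Q)}$. Symmetrically, conditioned on $C = T$, the policy is \Cref{alg:cardinality} with $k = 1$ on the sub-instance induced by $V$; since each item of $V$ has size exceeding $B/2$, any knapsack-feasible subset of $V$ is a singleton, so the optimal cardinality-$1$ assortment of this sub-instance has value $g(V)$, the threshold $\tau = \tfrac12\E{g(V)}$ matches, and \Cref{thm:cardinality} gives $\E{f(A)\mid C=T} \ge \tfrac12\E{g(V)}$. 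These conditional guarantees remain valid against the almighty adversary — in particular against one allowed to see the outcome of $C$ — because conditioned on $C$ each sub-routine is a deterministic threshold policy to which the earlier theorems already apply.

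Combining via the law of total expectation with $\Pro{C=H} = \tfrac35$ then gives
\[
\E{f(A)} \ge \frac{3}{5}\cdot\frac{1}{3}\E{g(Q)} + \frac{2}{5}\cdot\frac{1}{2}\E{g(V)} = \frac{1}{5}\bigl(\E{g(Q)} + \E{g(V)}\bigr) \ge \frac{1}{5}\E{f(S^*)},
\]
which is the claimed $5$-competitiveness. For the weak version of \Cref{cond:3}, the only change is that the two sub-routine guarantees degrade: by the second part of \Cref{thm:budget}, \Cref{alg:budget} with $\beta = \tfrac12$ and $\tau = \tfrac{1}{3-\beta}\E{g(Q)} = \tfrac25\E{g(Q)}$ is $\tfrac{3-\beta}{1-\beta} = 5$-competitive against $\E{g(Q)}$, and by \Cref{thm:cardinality3}, \Cref{alg:cardinality} with $k = 1$ and $\tau = \tfrac13\E{g(V)}$ is $3$-competitive against $\E{g(V)}$. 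Re-optimizing the bias so the two coefficients coincide, i.e.\ $\Pro{C=H}/5 = \Pro{C=T}/3$, yields $\Pro{C=H} = \tfrac58$ and $\E{f(A)} \ge \tfrac18(\E{g(Q)} + \E{g(V)}) \ge \tfrac18\E{f(S^*)}$.

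I do not expect a genuine obstacle: the heavy lifting lives in \Cref{thm:budget,thm:cardinality,thm:cardinality3}, and \eqref{eq:budget:combine} supplies the combining step. The only points needing care are the bookkeeping that the restricted sub-instances have optima exactly $g(Q)$ and $g(V)$ — so that the thresholds used by \Cref{alg:budgetfull} are precisely the ones the sub-routine theorems require — and the minor observation that the conditional guarantees survive an adversary adapting to $C$; both are routine once written out.
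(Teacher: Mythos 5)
Your proposal is correct and follows essentially the same route as the paper: condition on the coin, apply \Cref{thm:budget} with $\beta=\tfrac12$ and \Cref{thm:cardinality} with $k=1$ to the small and large items respectively, and combine via \eqref{eq:budget:combine} with the bias chosen to equalize the two coefficients (and re-optimized to $\tfrac58$ for the weak version). The one point you wave at as ``routine bookkeeping'' — that auto-rejecting the large items still lets the sub-routine compete against $\E{g(Q)}$ even though $Q$ is a \emph{random} set determined by the realized sizes — is handled in the paper by an explicit instance transformation (setting $\tilde r_i = r_i\cdot\event{b_i\le B/2}$ so that the transformed instance's optimum equals $g(Q)$), which is worth writing out since ``the sub-instance induced by $Q$'' is not by itself a well-posed input with fixed distributions.
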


\section{Applications and Implementation} \label{sec:applications}

In this section, we present practical examples of substitutable demand models -- other than MNL -- which satisfy Conditions \ref{cond:1}, \ref{cond:2}, and the weak version of \Cref{cond:3}; for all these models, our results imply a $(1+\gamma)$-competitive policy for the unconstrained setting of sequential assortment selection, where $\gamma = \E{\phi(S^*,S^*)}$. In addition, our results yield $3-$ and $8$-competitive policies cardinality and knapsack constraints, respectively. The proofs of this section can be found in \Cref{app:demand}. In \Cref{app:implementation}, we discuss issues related to the efficient implementation of our policies in terms of computational and sample complexity.

\paragraph{Independent random utility model with deterministic outside.} In RUM each item $i \in N$ is associated with a (continuous) utility distribution $\mathcal{U}_i$, with $ \mathcal{U}_0$ being that of the outside option. When faced with an assortment $S\subseteq N$, a random utility $u_{i} \sim \mathcal{U}_i$ is realized for each option $i \in N \cup \{0\}$, and the buyer chooses the one with the highest realization (including potentially the outside). The choice probability for any assortment $S$ and item $i \in S$ can be thus described as
\begin{align*}
\phi(i,S) = \Pro{u_i \geq u_j, \forall j \in S \cup \{0\} \setminus \{i\}}.
\end{align*}
We consider a special class of RUM, where the utilities of all items are independently distributed, and the utility of the outside option is deterministic. A simple way to cast the above model as an instance of sequential assortment selection is the following: every item $i\in N$ is associated with a family of $m_i$ utility distributions with known description, denoted by $\{\mathcal{U}^{(l)}_i\}_{l \in [m_i]}$. The demand parameter of each item $i \in N$ encodes the index of the distribution used for realizing $u_i$; that is, an independently realized pair $(r_i,d_i) \sim \mathcal{D}_i$ with $d_i \in [m_i]$, suggests that $i$ has revenue $r_i$ and utility $u_i \sim \mathcal{U}^{d_i}_i$. 

\begin{restatable}{proposition}{restateRUM} 
The independent RUM with deterministic outside utility satisfies Conditions \ref{cond:1}, \ref{cond:2}, and the weak version of \Cref{cond:3}.
\end{restatable}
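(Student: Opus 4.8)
The plan is to verify each of the three conditions directly from the definition of the independent RUM with deterministic outside utility, treating the realized demand parameter $d_i$ as selecting the utility distribution $\mathcal{U}^{d_i}_i$ for item $i$. Throughout, I would condition on the realized parameters $(r_i, d_i)$ for all items, so that each item $i$ has a fixed utility distribution $\mathcal{U}^{d_i}_i$, and the outside utility $u_0$ is the fixed deterministic value.

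For \Cref{cond:1} (Independent Evaluation), I would observe that once we condition on the demand parameters $\{d_j\}_{j\in S}$, the choice probability $\phi(i,S) = \Pro{u_i \geq u_j \ \forall j \in S\cup\{0\}\setminus\{i\}}$ is computed entirely from the distributions $\{\mathcal{U}^{d_j}_j\}_{j \in S}$ and the (known) deterministic $u_0$; no parameter of an item outside $S$ enters, since the max in the event ranges only over $S \cup \{0\}$. For substitutability, adding an item $j$ to the ``competition'' can only shrink the event $\{u_i \geq u_\ell, \forall \ell\}$, and similarly $\phi(0,S) = \Pro{u_0 \geq u_j \ \forall j \in S}$ is nonincreasing in $S$; this gives $\phi(i,S) \geq \phi(i,T)$ for $S \subseteq T$ and $i \in S \cup \{0\}$.

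For \Cref{cond:2} (Negative Correlation), I want $\phi(i,S) \geq \phi(i,\{i\}) \cdot \phi(0, S\setminus\{i\})$. Here I would use independence of the utilities crucially: fixing the realized value $u_i = t$, the event that $i$ wins against $S \cup \{0\}\setminus\{i\}$ is $\{u_j \leq t \ \forall j \in S\setminus\{i\}\} \cap \{u_0 \leq t\}$, whose probability (by independence across the $j$'s) factors as $\bigl(\prod_{j\in S\setminus\{i\}} \Pro{u_j \leq t}\bigr)\cdot \mathbb{1}[u_0 \leq t]$. The analogous quantity for $\{i\}$ alone at $u_i = t$ is just $\mathbb{1}[u_0 \leq t]$, and the event $\{u_0 \geq u_j \ \forall j \in S\setminus\{i\}\}$ has probability $\prod_{j\in S\setminus\{i\}}\Pro{u_j \leq u_0}$. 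Because $u_0$ is deterministic (call it $c$), we have $\mathbb{1}[u_0 \leq t] = \mathbb{1}[c \leq t]$, and whenever this indicator is $1$ we get $\Pro{u_j \leq t} \geq \Pro{u_j \leq c}$ for every $j$; when it is $0$ the RHS contribution at $u_i = t$ would also force the $\{i\}$-alone term to vanish. Integrating over $u_i \sim \mathcal{U}^{d_i}_i$ and comparing term by term should yield $\phi(i,S) \geq \phi(i,\{i\})\cdot\phi(0,S\setminus\{i\})$; the deterministic outside is exactly what makes the indicator on $u_0$ factor cleanly out of the integral. For the weak \Cref{cond:3}, I need $\frac{\phi(0,S)}{\phi(i,S)} \leq \frac{\phi(0,\{i\})}{\phi(i,\{i\})}$. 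Writing $\phi(0,S) = \Pro{u_j \leq c \ \forall j\in S} = \prod_{j \in S}\Pro{u_j \leq c}$ by independence (and $\phi(0,\{i\}) = \Pro{u_i \leq c}$), the inequality rearranges to $\phi(i,\{i\})\prod_{j\in S}\Pro{u_j\leq c} \leq \phi(i,S)\Pro{u_i \leq c}$, i.e. $\phi(i,\{i\})\Pro{u_i \leq c}^{-1}\prod_{j\in S\setminus\{i\}}\Pro{u_j \leq c} \leq \phi(i,S)$ — which is again a conditioning-on-$u_i$ argument: for $u_i = t \geq c$ the product $\prod_{j\in S\setminus\{i\}}\Pro{u_j\leq c} \leq \prod_{j\in S\setminus\{i\}}\Pro{u_j \leq t}$, and for $t < c$ both sides contribute zero to the relevant integrals.

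The main obstacle I anticipate is bookkeeping the conditioning carefully in \Cref{cond:2} and the weak \Cref{cond:3}: the cleanest route is to write $\phi(i,S)$ as an integral $\int \Pro{u_j \leq t\ \forall j \in S\setminus\{i\},\ c \leq t}\, d\mathcal{U}^{d_i}_i(t)$, factor the integrand using independence, and then do a pointwise (in $t$) comparison of integrands — the deterministic outside utility $c$ is what keeps the $u_0$-indicator from entangling with the integral. A minor subtlety is handling ties (events $\{u_i \geq u_j\}$ versus strict inequalities) and the corner case $\phi(i,\{i\}) = 0$, which I would dispatch by noting ties occur with the deterministic $u_0$ only on a measure-zero set for continuous $\mathcal{U}^{(l)}_i$, or by adopting a fixed tie-breaking convention consistently. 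Once the pointwise integrand inequalities are in place, integrating against $\mathcal{U}^{d_i}_i$ finishes each condition.
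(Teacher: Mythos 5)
Your proposal is correct and follows essentially the same route as the paper: all three conditions are verified directly from independence of the utilities and the determinism of $u_0$, with weak \Cref{cond:3} reduced to \Cref{cond:2} via the factorization $\phi(0,S)=\phi(0,\{i\})\cdot\phi(0,S\setminus\{i\})$. The only difference is presentational -- where you condition on $u_i=t$ and compare integrands pointwise, the paper gets \Cref{cond:2} in one line from the event containment $\{u_i\geq u_0\}\wedge\{u_0\geq u_j,\ \forall j\in S\setminus\{i\}\}\subseteq\{u_i\geq u_j,\ \forall j\in S\cup\{0\}\setminus\{i\}\}$ together with independence, which sidesteps the integral bookkeeping and the tie-handling you flag.
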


\paragraph{Generalized attraction model with deterministic shadow attractions.} The MNL model fails to capture the negative effect (for example, due to competition) of not including an item in an assortment. To address this issue, Gallego et al. \cite{gallego2015general} introduce the Generalized Attraction Model (GAM), which strictly generalizes MNL through the notion of shadow attractions: each item $i \in N$ is associated with a non-negative revenue $r_i$, an attraction $v_i$, and a shadow attraction $w_i \in [0, v_i]$; note that \cite{gallego2015general} require the shadow attraction of each item to be at most its (actual) attraction, in order to make their model compatible with Luce's axioms \cite{luce1959individual}. Under the GAM model, for any assortment $S$ and item $i \in S$, we have 
$$
\phi(i,S) = \frac{v_i}{v_0 + v(S) + w(N \setminus S)},
$$
where $v_0$ is the deterministic attraction of the outside option. The choice probability of the outside option for assortment $S$ is naturally defined as $\phi(0,S) = \frac{v_0 + w(N \setminus S)}{v_0 + v(S) + w(N \setminus S)}$.

Adapting the above model to our sequential and stochastic setting is rather straightforward: for each item $i \in N$, there is a known joint distribution $(r_i,d_i) \sim \mathcal{D}_i$ (independent for each item), where the demand parameter $d_i$ encodes the value of $v_i$. We remark that we make the additional assumption of the shadow attractions being deterministic and known to the decision-maker; this seems to be essential since, otherwise, the probability $\phi(i,\{i\})$ could not be computed by simply observing the parameters of item $i$ (and, thus, \Cref{cond:1} cannot hold). To ensure compatibility with GAN, we also assume that the value $v_i$ is always greater than $w_i$, for each item $i \in N$ (yet it can still be random and arbitrarily correlated with $r_i$).


\begin{restatable}{proposition}{restateGAM} 
The GAM with deterministic shadow attractions satisfies Conditions \ref{cond:1}, \ref{cond:2}, and the weak version of \Cref{cond:3}.
\end{restatable}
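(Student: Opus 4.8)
The plan is to verify \Cref{cond:1,cond:2} and the weak version of \Cref{cond:3} directly from the closed-form GAM probabilities, after fixing compact notation. For an assortment $S$, write $D_S = v_0 + v(S) + w(N\setminus S)$ for its denominator, so that $\phi(i,S) = v_i/D_S$ for $i \in S$ and $\phi(0,S) = (v_0 + w(N\setminus S))/D_S$. Recall that the only randomness lies in the attractions $v_j$ (encoded by $d_j$), whereas the item set $N$, the outside attraction $v_0$, and all shadow attractions $w_j$ are deterministic and known a priori, and $0 \le w_j \le v_j$ for every $j$.

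\Cref{cond:1} is immediate and is precisely where the ``deterministic shadow attractions'' hypothesis enters: $v(S) = \sum_{j \in S} v_j$ is a function of the demand parameters of items in $S$, while $v_0$ and $w(N\setminus S) = \sum_{j \notin S} w_j$ are deterministic; hence $D_S$, and therefore both $\phi(i,S)$ and $\phi(0,S)$, depend only on $\{d_j : j \in S\}$ together with deterministic data. Without deterministic shadows, $w(N\setminus S)$ would depend on parameters of items outside $S$ and the condition would fail (as remarked in the paper). For the weak version of \Cref{cond:3}, observe that $\phi(0,S)/\phi(i,S) = (v_0 + w(N\setminus S))/v_i$ and $\phi(0,\{i\})/\phi(i,\{i\}) = (v_0 + w(N\setminus\{i\}))/v_i$; since $i \in S$ gives $N\setminus S \subseteq N\setminus\{i\}$ and the shadows are nonnegative, $w(N\setminus S) \le w(N\setminus\{i\})$, which is exactly the required inequality. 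It is genuinely only the weak version, since for $|S|>1$ the excess $w(S\setminus\{i\})$ between the two numerators can be strictly positive.

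The bulk of the work is \Cref{cond:2}. Fix $i \in S$; the case $v_i = 0$ is trivial since then both sides vanish. Expanding $\phi(i,S)$, $\phi(i,\{i\})$ and $\phi(0,S\setminus\{i\})$ via the formulas --- using $w\bigl(N\setminus(S\setminus\{i\})\bigr) = w(N\setminus S) + w_i$ and $v(S\setminus\{i\}) = v(S) - v_i$ --- and writing $P = v_0 + w(N\setminus S)$, $Q = v(S) - v_i$, $R = w(S\setminus\{i\})$ (all nonnegative), the claimed inequality $\phi(i,S) \ge \phi(i,\{i\})\,\phi(0,S\setminus\{i\})$ reduces, after clearing the positive denominators, to
$$(P + v_i + R)(P + Q + w_i) \;\ge\; (P + Q + v_i)(P + w_i).$$
Expanding both sides, the difference telescopes to $Q(v_i - w_i) + R(P + Q + w_i)$, which is nonnegative precisely because $w_i \le v_i$ (the GAM compatibility constraint) together with $P,Q,R,w_i \ge 0$. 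For completeness I would also record that the GAM is substitutable: for $S \subseteq T$ one has $D_T - D_S = v(T\setminus S) - w(T\setminus S) = \sum_{j \in T\setminus S}(v_j - w_j) \ge 0$, giving $\phi(i,S)\ge\phi(i,T)$ for $i \in S$; and $\phi(S,S) = v(S)/D_S$ is nondecreasing under adding items (numerator grows, the remaining part of the denominator shrinks by $w(T\setminus S)$), which gives $\phi(0,S)\ge\phi(0,T)$.

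I expect the only real obstacle to be the bookkeeping in \Cref{cond:2}: one must correctly track how the complement-shadow mass $w(N\setminus\cdot)$ and the attraction sum $v(\cdot)$ change when passing among $S$, $\{i\}$ and $S\setminus\{i\}$, and then notice that the telescoped difference is governed entirely by the sign of $v_i - w_i$. Every other step is a one-line consequence of the explicit formulas, so the argument is short once this notation is in place.
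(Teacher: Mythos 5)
Your proposal is correct and follows essentially the same route as the paper: all three conditions are verified directly from the closed-form GAM probabilities, with \Cref{cond:1} and the weak \Cref{cond:3} handled identically and \Cref{cond:2} ultimately resting on the same key fact $w_i \le v_i$. The only (cosmetic) difference is in \Cref{cond:2}, where you clear denominators and check that the polynomial difference telescopes to $Q(v_i - w_i) + R(P+Q+w_i) \ge 0$, whereas the paper factors $\phi(i,S) = \phi(i,\{i\}) \cdot \frac{v_0 + v_i + w(N\setminus\{i\})}{v_0+v(S)+w(N\setminus S)}$ and invokes monotonicity of $x \mapsto \frac{c_1+x}{c_2+x}$; your version is arguably cleaner (and, unlike the paper's, you explicitly verify $\phi(0,S)\ge\phi(0,T)$ in the substitutability check), but it is the same argument in substance.
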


\paragraph{Lowest considered fare model.} 
In the lowest considered fare (LCF) model, each item $i \in N$ is associated with a revenue $r_i$, a fare $f_i \in [0,\infty)$, and a consideration parameter $q_i \in [0,1]$. The customer's choice is made as follows: given an assortment $S$, a consideration set $Q \subseteq S$ is constructed, by including each item $i \in S$ to $Q$, independently, with probability $q_i$. Then, the buyer purchases the item in $Q$ with minimum fare, that is $\arg\min_{i \in Q} f_i$; if the consideration set is empty, then the buyer purchases no item. In the case where more than one minimizers exist, we break ties consistently using a lexicographic order. In this context, for every assortment $S$ and item $i \in S$, we have
$$
\phi(i,S) = q_i \cdot \prod_{\substack{j \in S \\ f_j < f_i}} \left(1 - q_j\right).
$$
The above is an variation of the lowest open fare (LOF) model, considered by Talluri and van Ryzin \cite{talluri2004revenue}, where the buyer observes only the item $\arg\min_{i \in S} r_i$ and purchases it with probability $q_i$ (the fare is implicitly proportional to the revenue). The latter is a simpler model and its natural sequential variant can be easily reduced to the Prophet Inequality problem. In fact, it can be easily verified that the optimal assortment in LOF contains without loss of generality only a single item, while in LCF contains all items with revenues above some threshold.

One can construct a natural sequential version of the above model by encoding in the demand parameter of each item $i \in N$ its fare and consideration probability, namely, $d_i \equiv (f_i,q_i)$ for each $i \in N$. Then, for each item $i \in N$ the pair $(r_i,d_i)$ is drawn from a known joint distribution $\mathcal{D}_i$, independently of other items. 

\begin{restatable}{proposition}{restateLCF} 
The lowest considered fare (LCF) model satisfies Conditions \ref{cond:1}, \ref{cond:2}, and the weak version of \Cref{cond:3}.
\end{restatable}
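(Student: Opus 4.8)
The plan is to verify the three conditions directly from the closed form $\phi(i,S) = q_i \prod_{j \in S:\, f_j < f_i}(1-q_j)$, assuming for clarity that the fares are distinct (the general case only requires replacing "$f_j < f_i$" throughout by "$j$ precedes $i$ in the fare-then-lexicographic order", which changes no step). First I would record two auxiliary identities: the no-purchase probability $\phi(0,S) = \prod_{j \in S}(1-q_j)$ (the buyer purchases nothing exactly when the consideration set is empty), and, for singletons, $\phi(i,\{i\}) = q_i$ and $\phi(0,\{i\}) = 1-q_i$.

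Condition~\ref{cond:1} is then immediate: the demand parameter of item $i$ is $d_i = (f_i, q_i)$, and both $\phi(i,S)$ and $\phi(0,S)$ above are functions of $\{(f_j, q_j) : j \in S\}$ alone, with no dependence on items outside $S$. For Condition~\ref{cond:2}, the plan is to use the single observation that $\{j \in S : f_j < f_i\} \subseteq S \setminus \{i\}$ (since $f_i \not< f_i$) together with the fact that every factor $1-q_j$ lies in $[0,1]$, so enlarging the index set of the product can only decrease it:
\[
\phi(i,S) = q_i \prod_{j \in S:\, f_j < f_i}(1-q_j) \;\ge\; q_i \prod_{j \in S\setminus\{i\}}(1-q_j) = \phi(i,\{i\})\cdot \phi(0, S\setminus\{i\}),
\]
which is exactly the negative-correlation inequality (trivial when $q_i = 0$, as both sides then vanish).

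For the weak version of Condition~\ref{cond:3}, I would split $\phi(0,S)$ as $(1-q_i)$ times the product over $\{j \in S : f_j < f_i\}$ times the product over $\{j \in S\setminus\{i\} : f_j > f_i\}$, and divide by $\phi(i,S) = q_i \prod_{j\in S:\, f_j < f_i}(1-q_j)$; the common factor cancels, leaving
\[
\frac{\phi(0,S)}{\phi(i,S)} = \frac{1-q_i}{q_i}\prod_{j \in S\setminus\{i\}:\, f_j > f_i}(1-q_j) \;\le\; \frac{1-q_i}{q_i} = \frac{\phi(0,\{i\})}{\phi(i,\{i\})},
\]
once more because each remaining factor is at most one. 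Since that residual product is generically strictly less than one, the same line shows the strong version of Condition~\ref{cond:3} fails — consistent with MNL being essentially the unique model obeying IIA. (Substitutability, though not part of the statement, follows from the same ``monotone in the index set'' observation applied to both $\phi(j,\cdot)$ and $\phi(0,\cdot)$.)

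There is no serious obstacle here; the verification is routine and the only care needed is in two bookkeeping points. The first is ties among fares: the bare formula is only valid for distinct fares, so one must commit to the lexicographic refinement to make all three arguments go through verbatim. The second is the degenerate case $q_i = 0$, where $\phi(i,S) = \phi(i,\{i\}) = 0$ and the ratios in Condition~\ref{cond:3} are formally $0/0$; I would dispose of it by noting that, by substitutability, an item with $q_i = 0$ may be deleted from any assortment without altering a single choice probability, so one may assume $q_i > 0$ throughout.
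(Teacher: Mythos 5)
Your proof is correct and follows essentially the same route as the paper's: \Cref{cond:1} read off directly, \Cref{cond:2} by enlarging the index set of the product from $\{j \in S\setminus\{i\} : f_j < f_i\}$ to $S\setminus\{i\}$, and the weak \Cref{cond:3} by cancelling the common factor and bounding the residual product over $\{j : f_j > f_i\}$ by one. Your extra bookkeeping on fare ties and the $q_i = 0$ degeneracy is sound and slightly more careful than the paper, but does not change the argument.
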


\section*{Conclusion and Further Directions}
Motivated by the practical issue of assortment optimization under uncertainty, we formulated the problem of sequential assortment selection. We proposed simple optimal and near-optimal policies for a class of substitutable demand models (including the celebrated MNL demand), and several common combinatorial constraints. From a theoretical viewpoint, our results demonstrate another interesting domain -- beyond the original (choose-one) Prophet Inequality setting -- where single-threshold rules can be worst-case optimal.


Our work leaves behind numerous directions for future research. Starting from the most immediate, we believe that the guarantees we provide for knapsack constraints and MNL demand can be improved, yet it seems unlikely that this can be achieved via a fixed threshold (see \cite{jiang2022tight,dutting2020prophet}). Related to that, it would be interesting to explore different families of constraints for which the offline assortment selection problem is known to be solvable. As an example, it would be interesting to explore whether the (adaptive-threshold) techniques of Kleinberg and Weinberg \cite{KW12} or Dütting et al. \cite{dutting2020prophet} can be extended to provide (optimal) competitive guarantees in the case of matroid constraints. A different direction would be to study the problem under other common demand models (e.g., the nested logit \cite{davis2014assortment,gallego2014constrained}) which fail to satisfy our conditions. Finally, one could study the problem in the regime of unknown and arbitrarily chosen item parameters but uniformly random (secretary) arrivals \cite{dynkin1963optimal} (or combinations \cite{esfandiari2017prophet}).

\bibliographystyle{alpha}
\bibliography{ref}

\newpage 
\appendix

\section{Appendix: Omitted Proofs} \label{app:omitted}

\restatecardinalityweak*

\begin{proof}
The proof is essentially identical with that of \Cref{thm:cardinality} and relies on proving an alternative (weaker) version of \Cref{lem:cardinality:term1}, for the case where the weak version of \Cref{cond:3} is satisfied. Recall the decomposition \eqref{eq:decompositioncardinality} of the total revenue collected by the policy:
\begin{align*}
f(A_{\tau}) = \sum_{i \in A_{\tau}} \frac{\phi(i,A_{\tau})}{\phi(i,\{i\})} \left( \phi(i,\{i\})(r_i - \tau)^+ -\frac{\phi(0,\{i\})}{k} \tau\right)^+ + \left(\phi(A_{\tau},A_{\tau}) + \sum_{i \in A_{\tau}}\frac{\phi(0,\{i\})}{k}\frac{\phi(i,A_{\tau})}{\phi(i,\{i\})} \right) \tau,
\end{align*}
and let us define constants $\lambda = \Ex{\mathcal{D}[N]}{\phi(0,A_{\tau}) \event{|A_{\tau}| < k}}$ and $1-\lambda = \Ex{\mathcal{D}[N]}{\phi(0,A_{\tau}) \event{|A_{\tau}| = k} + \phi(A_{\tau},A_{\tau})}$, exactly as in the proof of \Cref{thm:cardinality}. 

We observe that \Cref{lem:cardinality:term2} only makes use of the weak version of \Cref{cond:3}, and hence, the expectation of the right term in the above decomposition is still lower bounded by $(1- \lambda)\tau$. For the left term, by following the proof of \Cref{lem:cardinality:term1} and without using \Cref{cond:3}, we can derive that 
\begin{align*} 
\Ex{\mathcal{D}[N]}{\sum_{i \in A_{\tau}} \frac{\phi(i,A_{\tau})}{\phi(i,\{i\})} \left( \phi(i,\{i\})(r_i - \tau)^+ -\frac{\phi(0,\{i\})}{k} \tau\right)^+} &\geq \lambda \Ex{\mathcal{D}[N]}{\sum_{i \in S^*} \phi(i,\{i\})\left((r_i - \tau)^+ -\frac{\phi(0,\{i\})}{\phi(i,\{i\})} \frac{\tau}{k} \right)^+} \\
&\geq \lambda \Ex{\mathcal{D}[N]}{\sum_{i \in S^*} \phi(i,\{i\})\left((r_i - \tau)^+ -\frac{\tau}{\phi(i,\{i\}) \cdot k} \right)} \\
&= \lambda \Ex{\mathcal{D}[N]}{\sum_{i \in S^*} \phi(i,\{i\})(r_i - \tau)^+ -\frac{|S^*|}{k} \tau} \\
&\geq \lambda \Ex{\mathcal{D}[N]}{\sum_{i \in S^*} \phi(i,S^*)(r_i - \tau)^+ -\frac{|S^*|}{k} \tau} \\
&\geq \lambda \left(\Ex{\mathcal{D}[N]}{f(S^*)} - \Ex{\mathcal{D}[N]}{\phi(S^*,S^*) +\frac{|S^*|}{k}} \tau\right),
\end{align*}
where in the second inequality above, we simply upper-bound $\phi(0,\{i\}) \leq 1$.

By combining the above, for the expectation of $f(A_{\tau})$, we can see that 
\begin{align*}
\Ex{\mathcal{D}[N]}{f(A_{\tau})} &\geq \lambda \left(\Ex{\mathcal{D}[N]}{f(S^*)} - \Ex{\mathcal{D}[N]}{\phi(S^*,S^*) +\frac{|S^*|}{k}} \tau\right) + (1- \lambda)\tau \\
& \geq \lambda \left(\Ex{\mathcal{D}[N]}{f(S^*)} - \left(\Ex{\mathcal{D}[N]}{\phi(S^*,S^*)} +1 \right) \tau\right) + (1- \lambda)\tau,
\end{align*}
using the fact that $|S^*| \leq 1$.

Let us denote by $\gamma = \Ex{\mathcal{D}[N]}{\phi(S^*,S^*)}$ the expected probability of purchasing (any item) in the optimal cardinality-constrained assortment. By choosing threshold $\tau = \frac{1}{2+\gamma} \cdot \Ex{\mathcal{D}[N]}{f(S^*)}$, the policy becomes $(2+\gamma)$-competitive. Finally, by setting threshold $\tau = \frac{1}{3} \Ex{\mathcal{D}[N]}{f(S^*)}$ and noticing that $\gamma \leq 1$, the policy becomes $3$-competitive.
\end{proof}

\restatebudget*

\begin{proof}
We remark that several parts of this proof are written succinctly, since they follow closely the corresponding parts for the cardinality-constrained setting; hence, we advice the reader to first understand the proof in \Cref{sec:cardinality:proof}.

Before we proceed to the main proof, let us extend the notation appearing in \Cref{sec:cardinality} for the (special) case of cardinality constraints. Let us fix a threshold $\tau > 0$ and denote by $E_{\tau} = \{i \in {N} \mid \phi(i,\{i\})(r_i-\tau) \geq \frac{b_i  \phi(0,\{i\})}{B} \tau\}$ the (random) subset of items that satisfy the threshold condition of \Cref{alg:budget}. Similarly to the case of cardinality constraints, a worst-case arrival order $\sigma$ is computed by an adversary after observing the realization of the parameters of all items. As before, for the particular case of \Cref{alg:budget} we assume without loss of generality that there exists a unique mapping from realizations to worst-case orderings (by breaking ties-lexicographically). Let $A^{\prec i}_\tau$ be the subset of items collected by our policy exactly before the arrival of item $i$ (under the worst-case arrival order $\sigma$) and let $A_{\tau} \subseteq E_{\tau}$ be the collected assortment after the arrival of all items. Finally, we remark that, since the cases where an item has size $b_i > \beta \cdot B$ have zero measure, we simply assume that $b_i \leq \beta \cdot B$ (deterministically) when arguing inside expectations.

By working along the lines of the proof of \Cref{thm:cardinality}, the total revenue collected by \Cref{alg:budget} using a fixed threshold $\tau$ can be rewritten as
\begin{align}
f(A_{\tau}) &= \sum_{i \in A_{\tau}} \phi(i,A_{\tau}) \cdot r_i \notag \\
&= \sum_{i \in A_{\tau}} \frac{\phi(i,A_{\tau})}{\phi(i,\{i\})} \left( \phi(i,\{i\})(r_i - \tau)^+ -\frac{b_i \phi(0,\{i\})}{B} \tau\right)^+ + \left(\phi(A_{\tau},A_{\tau}) + \sum_{i \in A_{\tau}}\frac{b_i\phi(0,\{i\})}{B}\frac{\phi(i,A_{\tau})}{\phi(i,\{i\})} \right) \tau, \label{eq:decompositionbudget}
\end{align}
where we observe that any item $i \in A_{\tau}$ must satisfy $r_i \geq \tau $, by construction of the policy.

Let us define constant $\lambda = \Ex{\mathcal{D}[{N}]}{\phi(0,A_{\tau})\event{\frac{b(A_{\tau})}{B} < 1-\beta}}$  and, using the uniqueness of the worst-case arrival for each realization, observe that
$1-\lambda = \Ex{\mathcal{D}[{N}]}{\phi(0,A_{\tau})\event{\frac{b(A_{\tau})}{B} \geq 1-\beta} + \phi(A_{\tau},A_{\tau})}.$

By working similarly to the cardinality-constrained case, we first bound the expectation of the first term in decomposition \eqref{eq:decompositionbudget}. 

\begin{lemma}\label{lem:budget:term1}
For any substitutable demand model satisfying \Cref{cond:1,cond:2,cond:3}, it holds
$$
\Ex{\mathcal{D}[N]}{\sum_{i \in A_{\tau}} \frac{\phi(i,A_{\tau})}{\phi(i,\{i\})} \left( \phi(i,\{i\})(r_i - \tau)^+ -\frac{b_i \phi(0,\{i\})}{B} \tau\right)^+} \geq \lambda \cdot \left(\Ex{\mathcal{D}[N]}{f(S^*)} - \tau \right),
$$
where $S^*$ denotes the optimal knapsack-constrained solution for a given realization. Moreover, for  any substitutable model satisfying Conditions \ref{cond:1}, \ref{cond:2} and the weak version of \Cref{cond:3}, we have
$$
\Ex{\mathcal{D}[N]}{f(A_{\tau})} \geq \lambda \left(\Ex{\mathcal{D}[N]}{f(S^*)} - \Ex{\mathcal{D}[N]}{\phi(S^*,S^*) +\frac{b(S^*)}{B}} \tau\right) + (1- \lambda)\tau.
$$
\end{lemma}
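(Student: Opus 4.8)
The plan is to replay, essentially line for line, the argument of \Cref{lem:cardinality:term1} together with its weak-\Cref{cond:3} variant from the proof of \Cref{thm:cardinality3}, under the substitutions of the cardinality bound $k$ by the budget $B$, of the set size $|\cdot|$ by the total size $b(\cdot)$, and of the per-item weight $\tfrac1k$ by $\tfrac{b_i}{B}$. The one structural change is that the ``$+1$ slack'' of the cardinality proof (a partial solution of $k-1$ items can always absorb one more) is replaced by the $\beta$-slack: since $b_i \le \beta B$ almost surely, a partial solution of total size below $(1-\beta)B$ can always accommodate the next arriving item. Concretely, I start from the decomposition \eqref{eq:decompositionbudget} and write its first term as $\sum_{i \in N}\tfrac{\phi(i,A_\tau)}{\phi(i,\{i\})}\,\zeta(i)\,\event{b(A^{\prec i}_\tau)+b_i \le B}$, where $\zeta(i)=\big(\phi(i,\{i\})(r_i-\tau)^+-\tfrac{b_i\phi(0,\{i\})}{B}\tau\big)^+$ depends only on the realization of item $i$ by \Cref{cond:1}; here the factor $\zeta(i)$ already encodes $i\in E_\tau$, while the indicator encodes that $i$ fits upon arrival.

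Next I apply \Cref{cond:2} to replace $\tfrac{\phi(i,A_\tau)}{\phi(i,\{i\})}$ by $\phi(0,A_\tau\setminus\{i\})$, use linearity of expectation and cross-item independence to factor out $\Ex{\mathcal{D}_i}{\zeta(i)}$, and then perform the ``uncorrelation'' step, which I expect to be the main obstacle: the sets $A_\tau\setminus\{i\}$ and $A^{\prec i}_\tau$ still depend on the realization of item $i$ through the adversarial arrival order. The fix is the pointwise chain
\[
\phi(0,A_\tau\setminus\{i\})\,\event{b(A^{\prec i}_\tau)+b_i \le B}\;\ge\;\phi(0,A_\tau\setminus\{i\})\,\event{b(A_\tau\setminus\{i\})\le (1-\beta)B}\;=\;\phi(0,E_\tau\setminus\{i\})\,\event{b(E_\tau\setminus\{i\})\le (1-\beta)B},
\]
whose inequality uses $A^{\prec i}_\tau\subseteq A_\tau\setminus\{i\}$ and $b_i\le\beta B$, and whose equality uses that on $\{b(E_\tau\setminus\{i\})\le(1-\beta)B\}$ the whole set $E_\tau$ is necessarily collected (total size at most $B$ even after adding $i$), so $A_\tau=E_\tau$, while $E_\tau\setminus\{i\}$ is independent of the realization of item $i$ by \Cref{cond:1}. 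I then re-introduce an independent copy of $(r_i,d_i,b_i)$ without changing $\phi(0,E_\tau\setminus\{i\})$ (again \Cref{cond:1}) and lower-bound the remaining expectation by $\lambda$, using substitutability of $\phi(0,\cdot)$ and the observation that the events $\{b(A_\tau)<(1-\beta)B\}$ and $\{b(E_\tau)<(1-\beta)B\}$ coincide (both force $A_\tau=E_\tau$). This yields $\Ex{\mathcal{D}[N]}{\text{first term}}\ge\lambda\,\Ex{\mathcal{D}[N]}{\sum_{i\in N}\zeta(i)}\ge\lambda\,\Ex{\mathcal{D}[N]}{\sum_{i\in S^*}\zeta(i)}$.

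To finish the first claim I restrict the sum to $S^*$, write $\zeta(i)=\phi(i,\{i\})\big((r_i-\tau)^+-\tfrac{\phi(0,\{i\})}{\phi(i,\{i\})}\tfrac{b_i\tau}{B}\big)^+$, use substitutability $\phi(i,\{i\})\ge\phi(i,S^*)$ and the strong version of \Cref{cond:3} to replace $\tfrac{\phi(0,\{i\})}{\phi(i,\{i\})}$ by $\tfrac{\phi(0,S^*)}{\phi(i,S^*)}$, drop the two positive parts via $(x)^+\ge x$, and regroup to obtain $\Ex{}{\sum_{i\in S^*}\zeta(i)}\ge\Ex{}{f(S^*)-\big(\phi(S^*,S^*)+\phi(0,S^*)\tfrac{b(S^*)}{B}\big)\tau}\ge\Ex{}{f(S^*)}-\tau$, the last step using $b(S^*)\le B$ and $\phi(S^*,S^*)+\phi(0,S^*)=1$. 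For the ``moreover'' (weak-\Cref{cond:3}) claim I replace the use of the strong version by the crude bound $\phi(0,\{i\})\le 1$ on the first term — exactly as in the proof of \Cref{thm:cardinality3} — which lower-bounds that term by $\lambda\big(\Ex{}{f(S^*)}-\Ex{}{\phi(S^*,S^*)+\tfrac{b(S^*)}{B}}\tau\big)$; and I bound the second term of \eqref{eq:decompositionbudget} as in \Cref{lem:cardinality:term2}, using only the weak version of \Cref{cond:3} (which gives $\tfrac{\phi(i,A_\tau)}{\phi(i,\{i\})}\ge\tfrac{\phi(0,A_\tau)}{\phi(0,\{i\})}$, hence $\sum_{i\in A_\tau}\tfrac{b_i\phi(0,\{i\})}{B}\tfrac{\phi(i,A_\tau)}{\phi(i,\{i\})}\ge\tfrac{b(A_\tau)}{B}\phi(0,A_\tau)$) together with the $\beta$-slack, to recover the $(1-\lambda)\tau$ contribution. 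Adding the two contributions gives the displayed bound. The only genuinely new ingredient relative to \Cref{sec:cardinality:proof} is the $\beta$-slack bookkeeping; every measurability and independence manipulation is identical to the cardinality-constrained case.
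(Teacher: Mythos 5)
Your proposal follows essentially the same route as the paper's proof: the same decomposition \eqref{eq:decompositionbudget}, the same definition of $\zeta(i)$, the same decorrelation via $E_\tau$ with the $\beta$-slack chain of events, and the same endgame over $S^*$ using substitutability, \Cref{cond:3}, $b(S^*)\le B$, and $\phi(S^*,S^*)+\phi(0,S^*)=1$. One caveat on the ``moreover'' part: the second-term bound you describe (the analogue of \Cref{lem:cardinality:term2} with weak \Cref{cond:3} and the $\beta$-slack) yields $(1-\beta)(1-\lambda)\tau$, not $(1-\lambda)\tau$, since $\frac{b(A_\tau)}{B}\,\phi(0,A_\tau)\ge(1-\beta)\,\phi(0,A_\tau)\event{\frac{b(A_\tau)}{B}\ge 1-\beta}$ only; this matches \Cref{lem:budget:term2} and the inequality actually used in the proof of \Cref{thm:budget}, so the $(1-\lambda)\tau$ in the displayed statement (which you echo) appears to be a typo rather than something your argument — or the paper's — establishes.
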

\begin{proof}
For any $i \in N$, let us denote $\zeta(i) = \left( \phi(i,\{i\})(r_i - \tau)^+ -\frac{b_i\phi(0,\{i\})}{B} \cdot \tau\right)^+$ and observe that for any $i \in N$ that does not satisfy the threshold condition of \Cref{alg:budget}, it must hold that $\zeta(i) = 0$. Notice, further, that whenever $\zeta(i) > 0$ for some $i \in N$, it must be that $\phi(i,\{i\})>0$. By working along the lines of \Cref{lem:cardinality:term1}, we can see that 
\begin{align*}
\Ex{\mathcal{D}[N]}{\sum_{i \in A_{\tau}} \frac{\phi(i,A_{\tau})}{\phi(i,\{i\})} \left( \phi(i,\{i\})(r_i - \tau)^+ -\frac{b_i \phi(0,\{i\})}{B} \tau\right)^+} &= \Ex{\mathcal{D}[N]}{\sum_{i \in N} \frac{\phi(i,A_\tau)}{\phi(i,\{i\})} \zeta(i) \event{b(A^{\prec i}_{\tau}) + b_i \leq B}} \\
&\geq \Ex{\mathcal{D}[N]}{\sum_{i \in N} \phi\left(0,E_{\tau}\setminus \{i\}\right)  \zeta(i) \event{\frac{b(E_{\tau} \setminus \{i\})}{B} \leq 1- \beta}},
\end{align*}
where the inequality follows since $\phi(i,A_{\tau}) \geq \phi(i,\{i\}) \phi(0,E_{\tau} \setminus \{i\})$, by \Cref{cond:2} and substitutability using that $A_{\tau} \subseteq E_{\tau}$, combined with the fact that 
$$
\big\{ b(E_{\tau} \setminus \{i\}) \leq (1- \beta) B \big\} \subseteq \big\{ b(A_{\tau}^{\prec i}) \leq (1- \beta) B \big\} \subseteq \big\{ b(A_{\tau}^{\prec i}) + b_i \leq B \big\},
$$
which holds by definition of $E_{\tau}$ and since $b_i \leq \beta B$, almost surely (recall, the case $b_i > \beta B$ has zero measure).

We notice that $\zeta(i)$ and $b(E_{\tau} \setminus \{i\})$ only depend on the randomness of $\mathcal{D}_i$ and $\mathcal{D}[N \setminus \{i\}]$, respectively, by \Cref{cond:1}. Thus, working similarly to the proof of \Cref{lem:cardinality:term1}, we get
\begin{align*}
\Ex{\mathcal{D}[N]}{\sum_{i \in N} \phi\left(0,E_{\tau}\setminus \{i\}\right)  \zeta(i) \event{\frac{b(E_{\tau} \setminus \{i\})}{B} \leq 1- \beta}} &\geq \sum_{i \in N} \Ex{\mathcal{D}_i}{\zeta(i)}\Ex{\mathcal{D}[N]}{\phi\left(0,E_{\tau}\right)   \event{\frac{b(E_{\tau})}{B} \leq 1- \beta}}\\
&= \sum_{i \in N} \Ex{\mathcal{D}_i}{\zeta(i)}\Ex{\mathcal{D}[N]}{\phi\left(0,A_{\tau}\right)   \event{\frac{b(A_{\tau})}{B} \leq 1- \beta}}\\
&\geq \lambda \cdot \Ex{\mathcal{D}}{\sum_{i \in N} \left( \phi(i,\{i\})(r_i - \tau)^+ -\frac{b_i\phi(0,\{i\})}{B}  \tau\right)^+},
\end{align*}
where in the first inequality above we use substitutability and the fact that $b(\cdot)$ is monotone non-decreasing. In the equality we use the fact that whenever $b(E_{\tau}) \leq (1- \beta)B$, then the sets $E_{\tau}$ and $A_{\tau}$ must coincide, while in the second inequality we use the definition of $\lambda$ together with the fact that $\zeta(i)$ only depends on $\mathcal{D}_i$ for any item $i \in N$.

Let $S^*$ be the optimal knapsack-constrained solution. Notice that $b(S^*) \leq B$, by feasibility constraints, and that for any item $i \in S^*$ it holds $\phi(i,i) > 0$, without loss of generality. By working similarly to the proof of \Cref{lem:cardinality:term1}, we can conclude that
\begin{align*} 
\Ex{\mathcal{D}[N]}{\sum_{i \in A_{\tau}} \frac{\phi(i,A_{\tau})}{\phi(i,\{i\})} \left( \phi(i,\{i\})(r_i - \tau) -\frac{b_i\phi(0,\{i\})}{B} \tau\right)^+} &\geq \lambda \Ex{\mathcal{D}[N]}{\sum_{i \in S^*} \phi(i,\{i\})\left( (r_i - \tau)^+ -\frac{b_i}{B} \frac{\phi(0,\{i\})}{\phi(i,\{i\})} \tau\right)^+} \\
&\geq \lambda \Ex{\mathcal{D}[N]}{\sum_{i \in S^*} \phi(i,S^*)\left( (r_i - \tau)^+ -\frac{b_i}{B} \frac{\phi(0,S^*)}{\phi(i,S^*)} \tau\right)^+} \\
&\geq \lambda \cdot \left(\Ex{\mathcal{D}[N]}{\sum_{i \in S^*} \phi(i,S^*) \cdot r_i} - \tau \right),
\end{align*}
where the second inequality above follows from substitutability, which implies $\phi(i,i) \geq \phi(i,S^*)$ for every $i \in S^*$, and by \Cref{cond:3} which suggests that $\frac{\phi(0,\{i\})}{\phi(i,\{i\})} = \frac{\phi(0,S^*)}{\phi(i,S^*)}$. The last inequality follows from the facts that $(x)^+ \geq x$ and $b(S^*) \leq B$. That proves the first part of the statement. 

Alternatively, one could use the weak version of \Cref{cond:3} and provide the following lower bound:

\begin{align*} 
\Ex{\mathcal{D}[N]}{\sum_{i \in A_{\tau}} \frac{\phi(i,A_{\tau})}{\phi(i,\{i\})} \left( \phi(i,\{i\})(r_i - \tau) -\frac{b_i\phi(0,\{i\})}{B} \tau\right)^+} 
&\geq \lambda \Ex{\mathcal{D}[N]}{\sum_{i \in S^*} \phi(i,\{i\})\left( (r_i - \tau)^+ -\frac{b_i}{B} \frac{\phi(0,\{i\})}{\phi(i,\{i\})} \tau\right)^+} \\
&\geq \lambda \Ex{\mathcal{D}[N]}{\sum_{i \in S^*} \phi(i,\{i\})\left( (r_i - \tau)^+ -\frac{b_i}{B \cdot \phi(i,\{i\})} \tau\right)} \\ 
&= \lambda  \Ex{\mathcal{D}[N]}{\sum_{i \in S^*} \phi(i,\{i\})(r_i - \tau)^+ -\frac{b(S^*)}{B} \tau)} \\ 
&\geq \lambda  \Ex{\mathcal{D}[N]}{\sum_{i \in S^*} \phi(i,S^*)(r_i - \tau)^+ -\frac{b(S^*)}{B} \tau)} \\ 
&\geq \lambda \left(\Ex{\mathcal{D}[N]}{\sum_{i \in S^*} \phi(i,S^*) r_i} - \Ex{\mathcal{D}[N]}{\phi(S^*,S^*) +\frac{b(S^*)}{B}} \tau \right),
\end{align*}
where in the second inequality above we use that $\phi(0,\{i\}) \leq 1$ and $(x)^+ \geq x$. The third inequality follows from substitutability and the last from reordering terms and using again $(x)^+ \geq x$. This concludes the proof.
\end{proof}

We can bound the expectation of the second term in decomposition \eqref{eq:decompositionbudget} using the following lemma:

\begin{lemma}\label{lem:budget:term2}
For any demand model satisfying (the weak version of) \Cref{cond:3}, it holds
$$
\Ex{\mathcal{D}[N]}{\phi(A_{\tau},A_{\tau}) + \sum_{i \in A_{\tau}}\frac{b_i\phi(0,\{i\})}{B}\frac{\phi(i,A_{\tau})}{\phi(i,\{i\})} } \geq (1-\beta)(1 - \lambda),
$$
where $\beta \in [0,1]$ is such that $b_i \leq \beta B$ for every item $i \in N$, almost surely.
\end{lemma}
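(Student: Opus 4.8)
The plan is to adapt the proof of \Cref{lem:cardinality:term2} to the knapsack setting, replacing the cardinality counter ($|A_\tau|$ versus $k$) by the load ($b(A_\tau)$ versus $B$) and carrying an extra factor $1-\beta$ through the estimates. First I would invoke the weak version of \Cref{cond:3}, which yields $\frac{\phi(0,\{i\})}{\phi(i,\{i\})} \geq \frac{\phi(0,A_\tau)}{\phi(i,A_\tau)}$ for every $i \in A_\tau$ with $\phi(i,\{i\})>0$ (any remaining collected item has $\phi(i,A_\tau)=0$ and $b_i=0$, hence never contributes to the sum below, exactly as in the setup of \eqref{eq:decompositionbudget}). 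Multiplying this by $\frac{b_i}{B}\phi(i,A_\tau)\geq 0$ and summing over $i \in A_\tau$ gives the pointwise bound
$$
\sum_{i \in A_{\tau}}\frac{b_i\phi(0,\{i\})}{B}\frac{\phi(i,A_{\tau})}{\phi(i,\{i\})} \;\geq\; \phi(0,A_{\tau})\,\frac{b(A_{\tau})}{B},
$$
so it remains to lower bound $\phi(A_\tau,A_\tau) + \phi(0,A_\tau)\,b(A_\tau)/B$ in expectation.

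Next I would establish, pointwise for every realization, the inequality
$$
\phi(A_{\tau},A_{\tau}) + \phi(0,A_{\tau})\,\frac{b(A_{\tau})}{B} \;\geq\; (1-\beta)\left(\phi(A_{\tau},A_{\tau}) + \phi(0,A_{\tau})\,\event{\frac{b(A_{\tau})}{B}\geq 1-\beta}\right)
$$
by a two-line case analysis on the event $\{b(A_\tau)/B \geq 1-\beta\}$, using $\phi(A_\tau,A_\tau)+\phi(0,A_\tau)=1$. If the event holds, the right-hand side equals $1-\beta$, while the left-hand side is at least $\phi(A_\tau,A_\tau)+(1-\beta)\phi(0,A_\tau)\geq(1-\beta)\big(\phi(A_\tau,A_\tau)+\phi(0,A_\tau)\big)=1-\beta$. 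If the event fails, the indicator vanishes and the left-hand side is at least $\phi(A_\tau,A_\tau)\geq(1-\beta)\phi(A_\tau,A_\tau)$ since $\beta\in[0,1]$. Taking $\Ex{\mathcal{D}[N]}{\cdot}$ of the combined pointwise bound and substituting the identity $1-\lambda = \Ex{\mathcal{D}[N]}{\phi(0,A_{\tau})\event{\frac{b(A_{\tau})}{B}\geq 1-\beta}+\phi(A_{\tau},A_{\tau})}$ recorded just above the lemma then delivers the claim.

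The argument is entirely pointwise apart from the final expectation, so no measurability subtlety arises. The one place that needs care — and the step I would flag as the main obstacle — is the bookkeeping when $b(A_\tau)/B\geq 1-\beta$: there one must \emph{simultaneously} truncate $b(A_\tau)/B$ down to $1-\beta$ and weaken the $\phi(A_\tau,A_\tau)$ term by the same factor $1-\beta$, so that the ``purchase share'' and the ``outside share'' recombine into $(1-\beta)\big(\phi(A_\tau,A_\tau)+\phi(0,A_\tau)\big)=1-\beta$. No constant larger than $1-\beta$ is extractable here, which is precisely what pins the competitive guarantee of \Cref{thm:budget} at $\frac{2-\beta}{1-\beta}$ rather than anything smaller.
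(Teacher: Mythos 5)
Your proposal is correct and follows essentially the same route as the paper: the weak version of \Cref{cond:3} gives the pointwise bound $\sum_{i \in A_{\tau}}\frac{b_i\phi(0,\{i\})}{B}\frac{\phi(i,A_{\tau})}{\phi(i,\{i\})} \geq \phi(0,A_{\tau})\,b(A_{\tau})/B$, after which the indicator $\event{b(A_{\tau})/B \geq 1-\beta}$ is introduced and the factor $1-\beta$ is pulled out of both terms before taking expectations and invoking the identity for $1-\lambda$. Your two-case argument merely packages into a single pointwise inequality the two chained estimates the paper writes separately, so there is no substantive difference.
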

\begin{proof}
By using (the weak version of) \Cref{cond:3}, we can see that
$$
\sum_{i \in A_{\tau}}\frac{b_i\phi(0,\{i\})}{B}\frac{\phi(i,A_{\tau})}{\phi(i,\{i\})} \geq \phi(0,A_{\tau}) \cdot \frac{b(A_{\tau})}{B} \geq (1-\beta) \cdot \phi(0,A_{\tau}) \cdot \event{\frac{b(A_{\tau})}{B} \geq 1-\beta} .
$$
Therefore, using the definition of $\lambda$, we can conclude that
\begin{align*}
\Ex{\mathcal{D}[N]}{\phi(A_{\tau},A_{\tau}) + \sum_{i \in A_{\tau}}\frac{b_i\phi(0,\{i\})}{B}\frac{\phi(i,A_{\tau})}{\phi(i,\{i\})} } &\geq \Ex{\mathcal{D}[N]}{\phi(A_{\tau},A_{\tau}) + (1-\beta) \cdot \phi(0,A_{\tau}) \cdot \event{\frac{b(A_{\tau})}{B}>1-\beta} }  \\
&\geq (1-\beta) \cdot \Ex{\mathcal{D}[N]}{\phi(A_{\tau},A_{\tau}) +  \phi(0,A_{\tau}) \cdot \event{\frac{b(A_{\tau})}{B}>1-\beta} }  \\
&\geq (1-\beta) (1 - \lambda).
\end{align*}
\end{proof}

By taking expectation over decomposition \eqref{eq:decompositionbudget} and using \Cref{lem:budget:term1,lem:budget:term2}, for any substitutable choice model satisfying \Cref{cond:1,cond:2,cond:3}, we get 
\begin{align*}
\Ex{\mathcal{D}[N]}{f(A_\tau)} \geq \lambda \cdot \left(\Ex{\mathcal{D}[N]}{f(S^*)} - \tau\right) + (1-\beta)(1-\lambda) \cdot \tau.
\end{align*}
Therefore, by setting $\tau = \frac{1}{2-\beta}\Ex{\mathcal{D}[N]}{f(S^*)}$, where $S^*$ is the optimal knapsack-constrained solution, \Cref{alg:budget} becomes an $\frac{2-\beta}{1-\beta}$-competitive policy for this class. 

Finally, for  any substitutable model satisfying Conditions \ref{cond:1}, \ref{cond:2} and the weak version of \Cref{cond:3}, again by applying the above lemmas, we have 
\begin{align*}
\Ex{\mathcal{D}[N]}{f(A_\tau)} \geq \lambda \cdot \left(\Ex{\mathcal{D}[N]}{f(S^*)} - \Ex{\mathcal{D}[N]}{\phi(S^*,S^*) +\frac{b(S^*)}{B}} \tau\right) + (1-\beta)(1-\lambda) \cdot \tau.
\end{align*}
In this case, noticing that $\Ex{\mathcal{D}[N]}{\phi(S^*,S^*) +\frac{b(S^*)}{B}} \leq 2$ and setting $\tau = \frac{1}{3-\beta}\Ex{\mathcal{D}[N]}{f(S^*)}$, \Cref{alg:budget} becomes an $\frac{3-\beta}{1-\beta}$-competitive policy.
\end{proof}

\restatebudgetfull*
\begin{proof}
We prove the result for any substitutable demand model satisfying \Cref{cond:1,cond:2,cond:3}; then we show how \Cref{alg:budgetfull} can be adapted to become $8$-competitive for the case where only the weak version of \Cref{cond:3} is satisfied.

Let us denote by $A \subseteq {N}$ the subset of items collected by \Cref{alg:budgetfull}, and by $A_Q$ and $A_V$ the items that the policy collects in the case where $C=H$ and $C=T$, respectively. We remark that since the policy collects either only items in $Q$ or only in $V$, then we can still assume that the worst-case arrival order is without loss of generality unique for every realization; this, for example, can be achieved by combining the worst-case arrival orders for each case, breaking ties lexicographically. 

Since the random decision of which case to follow is made by the policy independently of any realizations, the expected total revenue of \Cref{alg:budgetfull} becomes 
$$
\Ex{\mathcal{D}}{f(A)} = \frac{3}{5} \cdot \Ex{\mathcal{D}}{f(A_Q)} + \frac{2}{5} \cdot \Ex{\mathcal{D}}{f(A_V)}.
$$

In the case where only items from $Q$ are collected, by applying \Cref{thm:budget} with $\beta = \frac{1}{2}$, we get that $\Ex{\mathcal{D}}{f(A_Q)} \geq \frac{1}{3} \cdot \Ex{\mathcal{D}}{g(Q)}$. We remark that the fact that items in $V$ are automatically skipped by the policy does not affect the competitive guarantee of \Cref{thm:budget}, relative to $\Ex{\mathcal{D}}{g(Q)}$. An easy way to see this is to consider the following transformation of an instance: the parameters of every item $i \in {N}$ are generated by first drawing a triple $({r}_i, {d}_i, {b}_i) \sim {\mathcal{D}}_i$ and then return the triple $(\Tilde{r}_i, \Tilde{d}_i, \Tilde{b}_i)$ such that $\Tilde{d}_i = d_i$, $\Tilde{b}_i = b_i$, and $\Tilde{r}_i = r_i \cdot \event{b_i \leq \frac{B}{2}}$. In other words, the revenue of an item is (fictitiously) set to zero if its size is greater than $\frac{B}{2}$. Let $\Tilde{\mathcal{D}}_i$ denote the transformed parameter distribution for each item $i \in N$; note that $\Tilde{\mathcal{D}}$ is a valid input for \Cref{alg:budget}, since $\Tilde{b}_i$ is allowed to be correlated with $\Tilde{r}_i$ and $\Tilde{d}_i$. In this case, the optimal total revenue is without loss of generality supported only on the items in $Q$ and, hence, $\Ex{\Tilde{\mathcal{D}}}{f(S^*)} = \Ex{{\mathcal{D}}}{g(Q)}$. 

By working as above for the case where only items in $V$ are collected, \Cref{thm:cardinality} with $k=1$ implies that $\Ex{\mathcal{D}}{f(A_V)} \geq \frac{1}{2} \cdot \Ex{{\mathcal{D}}}{g(V)}$. Using the fact that the (random) sets $Q$ and $V$ are a partition of ${N}$, we get
\begin{align*}
    \Ex{\mathcal{D}}{f(A)} = \frac{3}{5} \cdot \Ex{\mathcal{D}}{f(A_Q)} + \frac{2}{5} \cdot \Ex{\mathcal{D}}{f(A_V)} \geq \frac{1}{5} \cdot \Ex{\mathcal{D}}{g(Q) + g(V)} \geq \frac{1}{5} \cdot \Ex{\mathcal{D}}{f(S^*)},
\end{align*}
where the last inequality follows from equation \eqref{eq:budget:combine}. This concludes the proof for the case of substitutable demand models satisfying \Cref{cond:1,cond:2,cond:3}.

The following policy applies to substitutable models satisfying only the weak version of \Cref{cond:3}:

\begin{algorithm} 
Let $g(Q)$ and $g(V)$ denote the total revenue of the maximum feasible assortment restricted to items with $b_i \leq \frac{B}{2}$ and $b_i > \frac{B}{2}$, respectively. Flip a (biased) coin $C \in \{{H}, {T}\}$ with probability $\Pro{C = {H}} = \frac{5}{8}$. If $C = H$, then run \Cref{alg:budget} with threshold $\tau = \frac{2}{5} \E{g(Q)}$, automatically rejecting any $i \in N$ with $b_i > \frac{B}{2}$. In case $C = T$, run \Cref{alg:cardinality} with $k = 1$ and $\tau = \frac{1}{3} \E{g(V)}$, automatically rejecting any item with $b_i \leq \frac{B}{2}$. 
\end{algorithm}
In this case, by working exactly as above and using \Cref{thm:cardinality3,thm:budget}, we can derive that 
\begin{align*}
    \Ex{\mathcal{D}}{f(A)} = \frac{5}{8} \cdot \Ex{\mathcal{D}}{f(A_Q)} + \frac{3}{8} \cdot \Ex{\mathcal{D}}{f(A_V)} \geq \frac{1}{8} \cdot \Ex{\mathcal{D}}{g(Q) + g(V)} \geq \frac{1}{8} \cdot \Ex{\mathcal{D}}{f(S^*)},
\end{align*}
thus concluding the proof. 
\end{proof}

\section{Appendix: Hardness Results} \label{app:hardness}

\restatelowerboundobservation*

\begin{proof}
Let $A$ be a given (unconstrained or cardinality-constrained) $\rho$-competitive policy for sequential assortment selection. It suffices to show that, for any $\epsilon > 0$, one can construct a policy for the original Prophet Inequality with $n$ random rewards, $w_1, \dots, w_{n}$, which collects -- in expectation -- at least $\frac{1}{\rho} \cdot \E{\max_{i \in N} w_i} - \epsilon$. As we show in this proof, this can be achieved for the case of MNL demand.

For any instance of the original Prophet Inequality, we can construct an instance of the unconstrained sequential assortment selection problem under MNL demand as follows: for each $i \in [n]$, we consider an item $i$ such that $r_i \sim w_i$, namely, its revenue is identically distributed with reward $w_i$. Without loss of generality, let us assume that the distribution of the latter is strictly positive and continuous. The attraction of the outside option is set to zero, that is $v_0 = 0$, and the attraction of each item is correlated with its revenue through the function $v(r) = \delta^{-\frac{1}{r}}$, where $\delta \in (0,1)$ is a small constant, namely, $v_i = v(r_i)$ for all $i \in [n]$.

A critical observation regarding the above constructed instance is that since $v_0 = 0$, then for any non-empty assortment $S$, $f(S)$ is always a convex combination of the revenues of the items in $S$, for any set $S \subseteq N$ (including the maximizer). Hence, the optimal (unconstrained or cardinality-constrained) solution is always to collect the item with highest revenue, yielding an expected total revenue equal to $\E{\max_{i \in [n]} \frac{v_i}{v_i} r_i} = \E{\max_{i \in [n]} w_i}$. Let $A$ be the given $\rho$-competitive policy for sequential assortment selection -- for simplicity, let us also denote by $A \subseteq N$ the (random) assortment collected by this policy. By applying this policy to the instance constructed above, it must be that 
$$
\E{f(A)} \geq \frac{1}{\rho} \cdot \E{\max_{i \in [n]} w_i}.
$$

We can construct a policy $\Pi$ for the Prophet Inequality problem by simulating policy $A$ on a fictitious instance, as the one described above: initially, for every random variable $w_i$, we define a corresponding item $i$ with parameter distribution $\mathcal{D}_i$, which satisfies $r_i \sim w_i$ and $v_i = \delta^{-\frac{1}{r_i}}$, for some fixed $\delta \in (0,1)$. We provide these distributions to policy $A$ and inform it that $v_0 = 0$. Then, in the online phase, we mimic the decisions of policy $A$, by feeding it the items in the same order as the corresponding arriving rewards. Critically, in order to respect the constraints of the Prophet Inequality instance, once we accept the first item $i$ whose corresponding revenue $r_i$ is collected by $A$, then we reject every remaining item. Note that we can assume without loss of generality that policy $A$ collects at least one item (since, in the opposite case, we can adapt $A$ to collect the last arriving item).

To complete the proof, it suffices to show that for any $\epsilon > 0$, there exists some $\delta > 0$, such that it holds $\E{w_{\iota}} \geq \E{f(A)} -\epsilon$, where $\iota \in N$ is the item collected by policy $\Pi$. Since $\Pi$ collects only the first of the items collected in $A$, it has to be that $\E{w_{\iota}} \geq \E{\min_{i \in A}w_{i}} = \E{\min_{i \in A}r_{i}}$.

For any reward realization and collected set $A \subseteq N$, by assuming that the realized rewards are distinct and taking the limit of $\delta \to 0$, we get
\begin{align*}
    \sum_{i \in A} \frac{v_i}{v(A)} r_i = \sum_{i \in A} \frac{v_i}{v_i + v(S \setminus \{i\})} r_i = \sum_{i \in A} \frac{1}{1 + \sum_{j \in A \setminus \{i\}} \delta^{\frac{1}{r_i}-\frac{1}{r_j}} } r_i \xrightarrow[\delta \to 0]{} \min_{i \in A} r_i,
\end{align*}
where the last equality follows by the fact that the limit of $\sum_{j \in A \setminus \{i\}} \delta^{\frac{1}{r_i}-\frac{1}{r_j}}$ becomes $0$, when $r_i < r_j$ for each $j \in A \setminus \{i\}$, and diverges, otherwise. 

By continuity, for any realization $\theta$ there exists some $\delta(\theta) \in (0,1)$, such that for every instance with $\delta \leq \delta(\theta)$ it holds $\min_{i \in A} r_i \geq \sum_{i \in A} \frac{v_ir_i}{v(A)} - \epsilon$. Notice that the realizations where $\exists i,j \in N \text{ such that }i\neq j \text{ and }r_i = r_j$ have zero measure, since we assume without loss of generality that the reward distributions are continuous. Therefore, by using $\delta = \inf_{\theta} \delta(\theta)$, we get that  
\begin{align*}
    \E{w_{\iota}} \geq \E{\min_{i \in A}w_{i}} = \E{\min_{i \in A}r_{i}} \geq \E{\sum_{i \in A} \frac{v_ir_i}{v(A)}} - \epsilon = \E{f(A)} - \epsilon \geq \frac{1}{\rho} \cdot \E{\min_{i \in A}r_{i}} - \epsilon.
\end{align*}
Given that the above holds for any $\epsilon > 0$, this concludes the proof of the reduction.
\end{proof}

\restatelowerboundstrong*
\begin{proof}
We consider an instance of the problem under MNL demand with two items $i_1$ and $i_2$, arriving in this order. Let us fix some $\delta \in (0,1)$ and $\kappa \in (0,1)$. The revenue and attraction of item $i_1$ are $r_1 = 1$ and $v_1 = \frac{1}{\delta}$. The attraction of the second item is $v_2 = 1$, while its revenue is $r_2 = r = \frac{1}{\delta} \cdot \left(\frac{v_0}{v_2} + 1\right) = \frac{1}{\delta} \cdot \left(v_0 + 1\right)$, with probability $\delta$, and $r_2 = 0$, otherwise. The attraction of the outside option satisfies $v_0 = \frac{1-\kappa}{\kappa} v_1 = \frac{1-\kappa}{\kappa \delta}$.

It is easy to see that the optimal solution $S^*$ in the above instance is defined as follows: if $r_2 = 0$ (with probability $1-\delta$), then $S^* = \{i_1\}$ and $f(\{i_1\}) = \frac{v_1 \cdot r_1}{v_0 + v_1} = \kappa$. In the case where $r_2 = r$ (with probability $\delta$), the optimal solution is $S^* = \{i_2\}$ and $f(\{i_2\}) = \frac{1}{\delta}$. Indeed, in that case $f(\{i_2\}) = \frac{1}{\delta} > 1 \geq \kappa = f(\{i_1\})$, and further 
$$
f(\{i_1,i_2\}) = \frac{v_1 r_1 + v_2 r_2}{v_0 + v_1 + v_2} = \frac{v_1 + v_2 r}{v_0 + v_1 + v_2} = \frac{v_1 + \frac{1}{\delta}(v_0 + v_2)}{v_0 + v_1 + v_2} < \frac{1}{\delta} = f(\{i_2\}).
$$
Thus, for the expected optimal assortment we have $$
\E{f(S^*)} = (1-\delta) \kappa + \delta \frac{1}{\delta} = 1+ (1-\delta) \kappa \xrightarrow[\delta \to 0]{} 1 + \kappa.
$$ 
In addition, the expected probability of purchasing (any item) in the optimal assortment satisfies
\begin{align*}
\gamma = \E{\phi(S^*,S^*)} = (1-\delta) \frac{u_1}{u_0 + u_1} + \delta \frac{u_2}{u_0 + u_2} = (1-\delta) \kappa + \frac{\kappa \delta^2}{{1-\kappa} + \kappa \delta} \xrightarrow[\delta \to 0]{} \kappa.
\end{align*}

After observing item $i_1$ (which is already deterministic), any policy essentially has two options: (i) to reject $i_1$ and collect $i_2$ (only if $r_2 > 0$), or (ii) to collect $i_1$ and then collect $i_2$ only if $r_2 > 0$. Let $A_1$ and $A_2$ denote the (random) subset collected in each case. Clearly, in the first case we have that $\E{f(A_1)} = 1$, since the policy collects $f(\{i_2\}) = \frac{1}{\delta}$ with probability $\delta$. In the second case, we have
$$
\E{f(A_2)} = (1-\delta) \kappa + \delta \frac{v_1 + v_2 r}{v_0 + v_1 + v_2} = (1-\delta) \kappa + \delta \frac{v_1 + \frac{1}{\delta}(v_0 + v_2)}{v_0 + v_1 + v_2} = (1-\delta) \kappa + \frac{\delta v_1 + v_0 + v_2}{v_0 + v_1 + v_2} 
$$
By replacing the values of $v_1,v_2$ and $v_0$, the above becomes
$$
\E{f(A_2)} = (1-\delta) \kappa + \frac{\frac{1-\kappa}{\kappa} \frac{1}{\delta} + 2}{\frac{1-\kappa}{\kappa} \frac{1}{\delta} + \frac{1}{\delta} + 1} = (1-\delta) \kappa + \frac{{1-\kappa} + 2 \kappa \delta}{1 + \kappa\delta} \xrightarrow[\delta \to 0]{} 1.
$$

By continuity, for any $\epsilon' \in (0,1)$ there exists $\delta(\gamma,\kappa,\epsilon') \in (0,1)$, such that for any $\delta < \delta(\gamma,\kappa,\epsilon')$, it holds
\begin{align*}
    |\gamma - \kappa| < \frac{\epsilon'}{2} \text{   ,   } |\E{f(A_2)} - 1| < \epsilon' \text{   and   } |\E{f(S^*)} - 1 - \kappa | \leq \frac{\epsilon'}{2}. 
\end{align*}
Therefore, the best achievable competitive guarantee in the above instance can be lower bounded as
$$
\rho = \frac{\E{f(S^*)}}{\max\left(\E{f(A_1)},\E{f(A_2)}\right)} \geq \frac{1+\kappa-\frac{\epsilon'}{2}}{1+\epsilon'} \geq \frac{1+\gamma-\epsilon'}{1+\epsilon'} = 1+\gamma-\epsilon,
$$
where in the last equality we set $\epsilon' = \frac{\epsilon}{2+\gamma-\epsilon}$. This concludes the proof. 

We remark that in the above proof the attractions of all items are deterministic. Alternatively, one could replace the parameters of the second item with: $r_2 = \frac{1}{\delta} \cdot \left(v_0 + 1\right)$ (deterministically) and $v_2 = 1$, with probability $\delta$, and $v_2 = 0$, otherwise. The resulting instance has deterministic revenues and random attractions and is effectively equivalent for the purpose of this lower bound.
\end{proof}

\section{Appendix: Properties of Demand Models}\label{app:demand}

\restateMNLfacts*

\begin{proof}
By observing the choice probabilities for every assortment $S$ and item $i \in S$ under the MNL choice model, that is $\phi(i,S) = \frac{v_i}{v_0 + v(S)}$, it is easy to verify that the model is substitutable and that \Cref{cond:1} is trivially satisfied, since $\phi(i,S)$ only depends on the realized attractions in $S$, given that $v_0$ is deterministic. To see that \Cref{cond:2} holds, we observe that for any $i \in S$:
$$
\phi(i,S) = \frac{v_i}{v_0 + v(S)} = \frac{v_i}{v_0 + v_i} \cdot \frac{v_0 + v_i}{v_0 + v_i + v(S \setminus \{i\})} \geq \frac{v_i}{v_0 + v_i} \cdot \frac{v_0}{v_0 + v(S \setminus \{i\})} = \phi(i,i) \cdot \phi(0,S \setminus \{i\}),
$$
where the inequality is due to the fact that the function $h(x) = \frac{x}{c+x}$ is non-decreasing in $x\geq 0$, for any $c \geq 0$. Notice that in the case where either $v_i$ or $v_0$ is zero, then \Cref{cond:2} follows trivially. Finally, the MNL choice model satisfies (the strong version of) \Cref{cond:3}, given that $\frac{\phi(0,S)}{\phi(i,S)} = \frac{v_0}{v_i}$ for any $S \subseteq N$.
\end{proof}

\restateRUM*
\begin{proof}
It is well-known that any instance of RUM is substitutable. Indeed, for $S \subseteq T \subseteq N$ and $i \in S$,
\begin{align*}
\phi(i,S) = \Pro{u_i \geq u_j, \forall j \in S \cup \{0\} \setminus \{i\}} \geq \Pro{u_i \geq u_j, \forall j \in T \cup \{0\} \setminus \{i\}} = \phi(i,T).
\end{align*}
\Cref{cond:1} follows trivially by the fact that $\phi(i,S)$ can be evaluated given knowledge of the utility distribution of each item in $S$ (which is encoded in the demand parameters). To prove \Cref{cond:2}, for any $i\in S$, we have
\begin{align*}
\phi(i,S) = \Pro{u_i \geq u_j, \forall j \in S \cup \{0\} \setminus \{i\}} &\geq \Pro{\{u_i \geq u_0\} \wedge \{u_0 \leq u_j, \forall j \in S \setminus \{i\}\}} \\
&= \Pro{u_i \geq u_0} \cdot \Pro{u_0 \leq u_j, \forall j \in S \setminus \{i\}} \\
&= \phi(i,\{i\}) \cdot \phi(0,S\setminus \{i\}),
\end{align*}
where the second equality follows by independence and fact that $v_0$ is deterministic. 
To see that the weak version of \Cref{cond:3} holds, observe that for any set $S \subseteq N$ and $i \in S$, we have
$$
\frac{\phi(0,S)}{\phi(i,S)} \leq \frac{\phi(0,S)}{\phi(i,\{i\}) \cdot \phi(0,S \setminus\{i\})} = \frac{\phi(0,\{i\}) \cdot \phi(0,S \setminus \{i\})}{\phi(i,\{i\}) \cdot \phi(0,S \setminus\{i\})} = \frac{\phi(0,\{i\})}{\phi(i,\{i\})},
$$
where the inequality follows from \Cref{cond:2} and the first equality from noticing that $\phi(0,S) = \phi(0,\{i\}) \cdot \phi(0,S \setminus \{i\})$, using the fact that $u_0$ is deterministic.
\end{proof}

\restateGAM*
\begin{proof}
The GAM is substitutable, since for any $S \subseteq T \subseteq N$ and $i \in S$, we have
\begin{align*}
\phi(i,S) = \frac{v_i}{v_0 + v(S) + w(N \setminus S)} = \frac{v_i}{v_0 + v(S) + w(T\setminus S) + w(N \setminus T)} \geq \frac{v_i}{v_0 + v(T) + w(N \setminus T)} = \phi(i,T),
\end{align*}
where we use the fact that $w_i \leq v_i$ for every $i \in N$.

\Cref{cond:1} follows since $\phi(i,S)$ can be evaluated given knowledge of the the attractions of the items in $S$, given that the shadow utilities and the utility of the outside option are deterministic. 

In order to show \Cref{cond:2}, for any $i\in S$ we can see that 
\begin{align*}
\phi(i,S) = \frac{v_i}{v_0 + v(S) + w(N \setminus S)} &= \frac{v_i}{v_0 + v_i + w(N \setminus \{i\})} \cdot \frac{v_0 + v_i + w(N \setminus \{i\})}{v_0 + v(S) + w(N \setminus S)} \\
&= \phi(i,\{i\}) \cdot \frac{v_0 + v_i + w(N \setminus \{i\})}{v_0 + v_i + v(S\setminus \{i\}) + w(N \setminus S)} \\
&\geq \phi(i,\{i\}) \cdot \frac{v_0 + w(N)}{v_0 + v(S\setminus \{i\}) + w(N \setminus (S\setminus \{i\}) )}\\
&\geq \phi(i,\{i\}) \cdot \frac{v_0 + w(N\setminus (S\setminus \{i\}))}{v_0 + b_i + v(S\setminus \{i\}) + w(N \setminus (S\setminus \{i\}) )}\\
& = \phi(i,\{i\}) \cdot \phi(0,S\setminus \{i\}),
\end{align*}
where the inequality follows by the fact that $w_i \leq v_i$ and that the function $h(x) = \frac{c_1 +x}{c_2 +x}$ is non-decreasing in $x$ for any $c_2 \geq c_1 \geq 0$. The second inequality simply follows from non-negativity of the shadow attractions.

In order to verify the weak version of \Cref{cond:3}, for any set $S \subseteq N$ and $i \in S$, we have
$$
\frac{\phi(0,S)}{\phi(i,S)} = \frac{v_0 + w(N \setminus S)}{v_i} \leq \frac{v_0 + w(N \setminus \{i\})}{v_i} = \frac{\phi(0,\{i\})}{\phi(i,\{i\})}.
$$
\end{proof}

\restateLCF*
\begin{proof}
For proving substitutability, for any $S \subseteq T \subseteq N$ and $i \in S$, we simply observe
$$
\phi(i,S) = q_i \cdot \prod_{\substack{j \in S \\ f_j < f_i}} \left(1 - q_j\right) \geq q_i \cdot \prod_{\substack{j \in S \\ f_j < f_i}} \left(1 - q_j\right) = \phi(i,T),
$$
while for \Cref{cond:2}, we have
$$
\phi(i,S) = q_i \cdot \prod_{\substack{j \in S \\ f_j < f_i}} \left(1 - q_j\right) = \phi(i,\{i\}) \cdot \prod_{\substack{j \in S \setminus \{i\} \\ f_j < f_i}} \left(1 - q_j\right) \geq \phi(i,\{i\}) \cdot \prod_{\substack{j \in S \setminus \{i\}}} \left(1 - q_j\right) = \phi(i,\{i\}) \cdot \phi(0,S\setminus \{i\}).
$$
\Cref{cond:1} can be trivially verified since $\phi(i,S)$ only depends on the parameters of the items in $S$. Finally, to show the weak version of \Cref{cond:3}, we observe that for any $S \subseteq N$ and $i \in S$, it holds
\begin{align*}
    \frac{\phi(0,S)}{\phi(i,S)} = \frac{\prod_{j \in S} \left(1 - q_j\right)}{q_i \cdot \prod_{\substack{j \in S \\ f_j < f_i}} \left(1 - q_j\right)} = \frac{1-q_i}{q_i}  \cdot\frac{\prod_{j \in S \setminus \{i\}} \left(1 - q_j\right)}{\prod_{\substack{j \in S \setminus \{i\} \\ f_j < f_i}} \left(1 - q_j\right)} \leq \frac{1-q_i}{q_i} = \frac{\phi(0,\{i\})}{\phi(i,\{i\})}.
\end{align*}
\end{proof}

\section{Appendix: Efficient Implementation and Limited Information}\label{app:implementation}

In this section, we briefly discuss how our policies can be implemented efficiently in terms of both computational and sample complexity. Let us start with describing a consistency property satisfied by our policies: consider any (unconstrained or constrained) threshold-based $\rho$-competitive policy, among the ones we present in \Cref{sec:unconstrained,sec:cardinality,sec:knapsack}, and let $\tau$ be the threshold that provides the guarantee. By using instead threshold $\tau' \in [\frac{\tau}{\alpha}, \tau]$ for some $\alpha \in [1, \infty)$, then the competitive guarantee of the policy becomes at most $\alpha \rho$. In order to see this, recall that each one of our proofs culminates into an inequality of the form
\begin{align*}
\Ex{}{f(A_\tau)} \geq \lambda \cdot \left(\Ex{}{f(S^*)} - (\rho-1) \cdot \tau\right) + (1-\lambda) \cdot \tau,
\end{align*}
for some constant $\lambda \in [0,1]$. The constant $\rho \in [1,\infty)$ is the resulting competitive guarantee after setting threshold $\tau = \frac{1}{\rho} \cdot \E{f(S^*)}$. By setting instead any threshold $\tau' \in [\frac{\tau}{\alpha}, \tau] $ with $\tau = \frac{1}{\rho} \cdot \E{f(S^*)}$, we get 
\begin{align*}
\Ex{}{f(A_{\tau'})} &\geq \lambda \cdot \left(\Ex{}{f(S^*)} - (\rho-1) \cdot \tau'\right) + (1-\lambda) \cdot \tau' \\
&\geq \lambda \cdot \left(\Ex{}{f(S^*)} - (\rho-1) \cdot \tau\right) + (1-\lambda) \cdot \frac{\tau}{\alpha}\\
&=\left( \lambda \cdot \frac{1}{\rho} + (1-\lambda) \cdot \frac{1}{\alpha\rho}\right) \cdot \Ex{}{f(S^*)} \\
&\geq \frac{1}{\alpha \rho} \cdot \Ex{}{f(S^*)}.
\end{align*}
It is not hard to show that above property also extends to the case of \Cref{alg:budgetfull} for knapsack-constraints.

\paragraph{NP-hard offline problems.} For demand models where the offline assortment selection problem is NP-hard, the above property provides a way to leverage any (deterministic) $\alpha$-approximation algorithm in order to compute the threshold. For any such algorithm, let $S^*_{\alpha}$ be the $\alpha$-approximate feasible solution returned for a given instance, that is, $f(S^*) \geq f(S^*_{\alpha})\geq \frac{1}{\alpha} \cdot f(S^*)$. Hence, one can use this algorithm to compute an approximate threshold of the form $\tau = c \cdot \E{f(S^*)}$ for some $c \in (0,+ \infty)$ by some a threshold $\tau' = c \cdot \E{f(S_{\alpha}^*)} \in [\frac{\tau}{\alpha}, \tau]$. Therefore, by applying any of our $\rho$-competitive policies with threshold $\tau'$, one immediately gets an $\alpha \rho$-competitive policy for the sequential assortment selection problem for any demand model which satisfies our sufficient conditions.

\paragraph{Limited information.} In the regime where the decision-maker has limited information on the priors, it is possible to estimate the thresholds of our policies through sampling. Specifically, by drawing a fresh realization for each item and and computing the optimal solution over the resulting instance, one can get an unbiased sample from the distribution of $f(S^*)$. Therefore, by taking the average of these samples and using standard concentration bounds (and standard mild assumptions), for any $\epsilon,\delta \in (0,1)$, one can compute a value $V$, such that $(1-\epsilon) \E{f(S^*)} \leq V \leq (1 + \epsilon) \E{f(S^*)}$, with probability $1-\delta$. Therefore, by setting $\frac{V}{1+\epsilon}$ instead of $\E{f(S^*)}$ in the definition of our thresholds, the resulting threshold $\tau'$ lies within $[\frac{1-\epsilon}{1+\epsilon}\cdot \tau,\tau]$. This immediately yields a $\frac{1+\epsilon}{1-\epsilon} \rho$-competitive policy (with probability $1-\delta$) for all the demand models and constraints considered in this work, where $\rho$ is the competitive guarantee under complete knowledge of the distributions.

\section{Appendix: Prophet Inequality for Convex Combinations}
\label{app:alternative}

We consider the case of discrete choice models where for any assortment $S \subseteq N$, it holds $\psi(S) = \phi(S,S) = 1$ (equivalently $\phi(0,S) = 0$), namely, situations where the buyer always purchases one of the offered items. A crucial observation in this case is that for any set $S \subseteq {N}$, it holds $f(S) \in \text{conv}(\{r_i \mid \forall i \in S\})$, namely, $f(S)$ is a convex combination of the revenues in $S$, each weighted by $\phi(i,S)$. Hence, the optimal set for any realization is a singleton and corresponds to the item with highest revenue, that is, $f(S^*) = \max_{i \in N} r_i$. It is easy to see that the unconstrained problem of sequential assortment selection in this case can be reduced to an instance of the original Prophet Inequality, by (fictitiously) restricting the decision-maker to collect at most one item. This reduction already yields $2$-competitive policies by choosing the first item (if any), whose revenue surpasses either threshold $\tau = \frac{1}{2} \E{\max_{i \in N} r_i}$, or $\tau = \text{median}(\max_{i \in N} r_i)$ \cite{KW12,S84}. 

We show that the above threshold-based policies maintain their competitive guarantees, even if the decision-maker can collect more than one (threshold-feasible) items. In fact, by studying a ``proxy'' variant of the Prophet Inequality problem, we prove that, whenever a discrete choice model satisfies $\phi(0,S) = 0$ for any $S \subseteq N$, then the rest of its properties are irrelevant.

\paragraph{Prophet Inequality for convex combinations.} We consider a variant of the Prophet Inequality problem where the decision-maker observes a sequence of $n$ random non-negative rewards $w_1, \ldots, w_n$ -- each drawn independently from a known distribution -- and is allowed to collect any of them (without any constraints). At the end of the process, an adversary who observes all the realized rewards, chooses a value $\alpha_i \in [0,1]$ for each $i \in A$ such that $\sum_{i \in A} \alpha_i = 1$, and the decision-maker finally collects $f(A) = \sum_{i \in A} \alpha_i \cdot w_i$, or $f(\emptyset) = 0$ in the case where $A = \emptyset$. 

Notice, that the expected optimal reward in the above setting is given by $\E{\max_{i \in N} w_i}$. Further, assuming that the adversary chooses a worst-case convex combination of the collected rewards, the decision-maker collects $f(A) = \min_{i \in A} w_i$.

Clearly, the above setting subsumes the problem of unconstrained sequential assortment selection for any demand model that gives zero probability to the outside option, since the adversary can simulate the effect of any such model by setting $\alpha_i = \phi(i,A)$ for any $i \in A$ (here, every reward $w_i$ follows the marginal distribution of the corresponding revenue $r_i$). Notice, further, that the above setting abstracts the original Prophet Inequality in the regime of fixed-threshold policies and almighty adversaries. Indeed, by choosing a worst-case convex combination, the adversary can simulate the worst-case arrival order of the items in the latter (that is, to reveal the items in non-decreasing order of reward).

The following result can be proved by slightly adapting the most ``modern'' proof of the original Prophet Inequality result, due to Kleinberg and Weinberg \cite{KW12}. For simplicity, we assume without loss of generality that all the reward distributions are continuous. 

\begin{theorem} By accepting all items with realized reward greater or equal to $\tau = \frac{1}{2} \E{\max_{i \in N} w_i}$ (or, equivalently, $\tau = \textrm{median}(\max_{i \in N} w_i)$), we get an optimal $2$-competitive prophet inequality for convex combinations.
\end{theorem}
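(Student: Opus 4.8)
The plan is to adapt the threshold analysis behind the modern proof of the original prophet inequality \cite{KW12} to this non-separable, adversarial objective. Fix the threshold $\tau$ (either $\tau=\tfrac12\E{\max_{i\in N}w_i}$ or $\tau=\mathrm{median}(\max_{i\in N}w_i)$), let $A=\{i\in N\mid w_i\geq\tau\}$ be the (random) accepted set, and note first that, since the adversary minimizes the linear function $\sum_{i\in A}\alpha_i w_i$ over the simplex, the realized payoff is $f(A)=\min_{i\in A}w_i$ when $A\neq\emptyset$ and $f(\emptyset)=0$. Let $\iota$ denote the index attaining this minimum (almost surely unique, since we assume continuous distributions), with the convention that no $i$ satisfies $i=\iota$ when $A=\emptyset$. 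The first step is to record the pointwise identity
\[
f(A)=\tau\cdot\event{A\neq\emptyset}+\sum_{i\in N}(w_i-\tau)^+\cdot\event{i=\iota},
\]
which holds because $i=\iota$ forces $i\in A$, hence $w_i\geq\tau$ and $(w_i-\tau)^+=w_i-\tau$, and only the single index $\iota$ contributes to the sum. This is the exact analogue of the decomposition used for the original prophet inequality, with ``first accepted item'' replaced by ``smallest accepted item''.

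The second step is to take expectations and exploit independence across items. The first term contributes $\tau\cdot\Pro{\max_{i\in N}w_i\geq\tau}$. For the second term, observe that $\{i=\iota\}=\{w_i\geq\tau\}\cap\bigcap_{j\neq i}\{w_j\notin[\tau,w_i)\}$; conditioning on $w_i$ and using the inclusion $\{w_j<\tau\}\subseteq\{w_j\notin[\tau,w_i)\}$ for every $j\neq i$, together with independence, gives
\[
\E{(w_i-\tau)^+\event{i=\iota}}\ \geq\ \E{(w_i-\tau)^+}\cdot\Pro{\max_{j\neq i}w_j<\tau}\ \geq\ \E{(w_i-\tau)^+}\cdot\Pro{\max_{i\in N}w_i<\tau}.
\]
This is the crucial move: it replaces the order-dependent ``nothing accepted before $i$'' event of the standard argument by the order-free sufficient event ``no other item clears the threshold'', so the estimate is immune to the adversary's choice of convex combination (equivalently, of worst-case arrival order). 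Summing over $i$ and using $\sum_{i}(w_i-\tau)^+\geq\max_i(w_i-\tau)^+=(\max_i w_i-\tau)^+\geq\max_i w_i-\tau$ yields
\[
\E{f(A)}\ \geq\ \tau\cdot\Pro{\max_{i\in N}w_i\geq\tau}+\Pro{\max_{i\in N}w_i<\tau}\cdot\Big(\E{\max_{i\in N}w_i}-\tau\Big).
\]

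The final step is to substitute the threshold. Writing $q=\Pro{\max_{i\in N}w_i<\tau}$, if $\tau=\tfrac12\E{\max_{i\in N}w_i}$ the right-hand side equals $(1-q)\tau+q(2\tau-\tau)=\tau=\tfrac12\E{\max_{i\in N}w_i}$ independently of $q$; if $\tau=\mathrm{median}(\max_{i\in N}w_i)$ then $q=\tfrac12$ by continuity and the right-hand side equals $\tfrac12\big(\E{\max_{i\in N}w_i}-\tau\big)+\tfrac12\tau=\tfrac12\E{\max_{i\in N}w_i}$. Either way the policy is $2$-competitive. Optimality is inherited from the original prophet inequality: the hard instance in the reduction of \Cref{thm:reduction} uses outside attraction $v_0=0$, hence satisfies $\phi(0,S)=0$ for all $S$ and lies in the present class, so \Cref{thm:lowerboundsimple} rules out any guarantee below $2-\epsilon$. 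The only genuine obstacle is pinning down the correct decomposition and, in particular, the right order-free sufficient event for the $\min$ objective; once that is in place, the remaining calculation mirrors \cite{KW12} line by line.
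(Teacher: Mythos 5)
Your proof is correct, and it reaches exactly the same key inequality as the paper, namely $\E{f(A_\tau)} \geq p\tau + (1-p)\left(\E{\max_{i\in N}w_i}-\tau\right)$ with $p=\Pro{\max_{i\in N}w_i\geq\tau}$, but by a noticeably different route. The paper's proof writes $\E{f(A_\tau)}=\int_0^\infty\Pro{f(A_\tau)>x}\,\mathrm{d}x$, splits the integral at $\tau$, and for the tail $x\geq\tau$ partitions over all possible accepted sets $S$, factors by independence, and finishes with a union bound; the order-free event it exploits is ``all items outside $S$ fall below $\tau$,'' whose probability is at least $1-p$. You instead use the pointwise surplus decomposition $f(A)=\tau\cdot\event{A\neq\emptyset}+\sum_{i}(w_i-\tau)^+\event{i=\iota}$ (the analogue of the paper's own decomposition \eqref{eq:unconstraineddecomposition} from \Cref{sec:unconstrained}) and lower-bound $\Pro{i=\iota\mid w_i}$ by the order-free event that no other item clears the threshold. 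Both arguments are adaptations of \cite{KW12}; yours has the advantage of being structurally parallel to the analyses in \Cref{sec:unconstrained,sec:cardinality}, making the ``smallest accepted item plays the role of the first accepted item'' correspondence explicit, while the paper's tail-integral version avoids naming the minimizer and handles the adversary purely through the partition over accepted sets. You also spell out the optimality claim by pointing to \Cref{thm:reduction} and \Cref{thm:lowerboundsimple} (noting the hard instance there has $v_0=0$ and so lies in this class), which the paper's proof leaves implicit. All steps check out, including the identity $\max_i(w_i-\tau)^+=(\max_i w_i-\tau)^+$ and the evaluation at both thresholds.
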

\begin{proof}
Let $\tau > 0$ be any fixed threshold and let $A_{\tau} = \{i \in N \mid w_i \geq \tau\}$ be the set of collected items. By using the non-negativity of $f(A_{\tau})$, we can write
\begin{align*}
    \Ex{}{f(A_{\tau})} = \int^{\infty}_{x = 0} \Pro{f(A_{\tau}) > x} \text{d}x = \int^{\tau}_{x = 0} \Pro{f(A_{\tau}) > x}\text{d}x + \int^{\infty}_{x = \tau} \Pro{f(A_{\tau}) > x}\text{d}x.
\end{align*}
Let $p = \Pro{\max_{i \in N} w_i \geq \tau}$ and notice that for any $x \leq \tau$, it holds 
$$\Pro{f(A_{\tau}) > x} \geq \Pro{f(A_{\tau}) \geq \tau} = p,$$
where the equality follows from a simple observation: if $\max_{i \in N} w_i \geq \tau$, then there exists a convex combination of rewards at least $\tau$ (with value at least $\tau$), and the opposite. 

Fix now any $x \geq \tau$. By considering all possible non-trivial partitions of $N$ into $A_{\tau}$ and $N \setminus A_{\tau}$, we have
\begin{align*}
    \Pro{f(A_{\tau}) > x} &= \sum_{S \subseteq {N}, S\neq \emptyset} \Pro{\big\{f(S)>x \big\} \wedge \big\{w_i \geq \tau,~ \forall i \in S \big\} \wedge \big\{w_i < \tau,~ \forall i \in {N} \setminus S\big\}}\\
    &= \sum_{S \subseteq {N}, S\neq \emptyset} \Pro{\big\{f(S)>x \big\} \wedge \big\{w_i \geq \tau,~ \forall i \in S \big\}} \cdot \Pro{w_i < \tau,~ \forall i \in {N} \setminus S} \\
    &\geq (1-p) \cdot \sum_{S \subseteq {N}, S\neq \emptyset} \Pro{\big\{f(S)>x \big\} \wedge \big\{w_i \geq \tau,~ \forall i \in S \big\}} \\
    &\geq (1-p)\cdot \Pro{\max_{i \in N} w_i > x},
\end{align*}
where the second equality above follows from the fact that the rewards of different items are independent, and the first inequality by the fact that $\Pro{w_i < \tau,~\forall i \in N \setminus S} \geq \Pro{w_i < \tau,~\forall i \in N} = 1-p$. The last equality follows from a simple union bound:
\begin{align*}
\sum_{S \subseteq {N}, S\neq \emptyset} \Pro{\big\{f(S)>x \big\} \wedge \big\{w_i \geq \tau,~ \forall i \in S \big\}} \geq \Pro{\exists S \subseteq N, S \neq \emptyset\text{ s.t. }f(S)>x} \geq \Pro{\max_{i \in N} w_i > x},
\end{align*}
since $f(S)$ is always a convex combination of the items in $S$.

By combining the above, we can conclude that
\begin{align*}
    \Ex{}{f(A_{\tau})} \geq p \cdot \tau + (1-p) \cdot \int^{\infty}_{x = \tau} \Pro{\max_{i \in N} w_i > x}\text{d}x \geq p \cdot \tau + (1-p) \cdot \left(\E{\max_{i \in N} w_i } - \tau\right).
\end{align*}
The proof is concluded by verifying that both choices of thresholds imply $2$-competitive policies. 
\end{proof}
\end{document}